\documentclass[11pt]{article}

\usepackage[left=2cm,right=2cm,top=1in, bottom=1in]{geometry}
\usepackage{amsmath, amsthm, amsfonts, enumitem, amssymb, url, xcolor, booktabs}
\usepackage{adjustbox}
\usepackage{xspace}
\usepackage{setspace}
\usepackage{graphicx}
\usepackage{tabularx}
\usepackage[utf8]{inputenc}

\makeatletter
\newenvironment{innerproof}[1][\proofname]
  {\par\normalfont \topsep6\p@ \@plus6\p@\relax
  \trivlist
  \item[\hskip\labelsep\itshape#1\@addpunct{.}]\ignorespaces}
  {\endtrivlist\@endpefalse}
\makeatother

\newtheorem{observation}{Observation}
\newtheorem{proposition}{Proposition}
\newtheorem{lemma}{Lemma}
\newtheorem{claim}{Claim}
\newtheorem{corollary}{Corollary}
\newtheorem{definition}{Definition}
\newtheorem{theorem}{Theorem}
\newcommand{\nmgl}{$\ell$-NCG}
\newcommand{\nmglp}{$\ell'$-NCG}


\title{Undetectable Selfish Mining}
\author{Maryam Bahrani\thanks{a16z crypto
(\texttt{maryam.bahrani.14@gmail.com}).} \and S. Matthew Weinberg\thanks{Princeton University (\texttt{smweinberg@princeton.edu}).}}
\date{}

\begin{document}

\maketitle
\begin{abstract}
   
Seminal work of Eyal and Sirer~\cite{EyalS14} establishes that a strategic Bitcoin miner may strictly profit by deviating from the intended Bitcoin protocol, using a strategy now termed \emph{selfish mining}. More specifically, any miner with $>1/3$ of the total hashrate can earn bitcoin at a faster rate by selfish mining than by following the intended protocol (depending on network conditions, a lower fraction of hashrate may also suffice). 

One convincing critique of selfish mining in practice is that the presence of a selfish miner is \emph{statistically detectable}: the pattern of orphaned blocks created by the presence of a selfish miner cannot be explained by natural network delays. Therefore, if an attacker chooses to selfish mine, users can detect this, and this may (significantly) negatively impact the value of BTC. So while the attacker may get slightly more bitcoin by selfish mining, these bitcoin may be worth significantly less USD.

We develop a selfish mining variant that is provably \emph{statistically undetectable}: the pattern of orphaned blocks is statistically identical to a world with only honest miners but higher network delay. Specifically, we consider a stylized model where honest miners with network delay produce orphaned blocks at each height independently with probability $\beta'$. We propose a selfish mining strategy that instead produces orphaned blocks at each height independently with probability $\beta > \beta'$. We further show that our strategy is strictly profitable for attackers with $38.2\% \ll 50\%$ of the total hashrate (and this holds for all natural orphan rates $\beta'$). 
\end{abstract}
\newpage

\section{Introduction}\label{sec:intro}
At their core, cryptocurrencies and other blockchain protocols aim to provide a totally-ordered \emph{ledger} of transactions that can serve as a record of payments. Key to their value proposition is that these ledgers are maintained in a \emph{decentralized manner} --- anyone can join their network and start authorizing transactions (in Bitcoin, this process is referred to as \emph{mining} and the participants as \emph{miners}). However, miners must be incentivized in order to authorize transactions, and in particular to follow the intended protocol.

Nakamoto's 2008 whitepaper~\cite{Nakamoto08} introducing Bitcoin applies a clever technique termed \emph{proof-of-work} --- miners are selected to authorize the next block of transactions proportionally to their computational power (often referred to as their \emph{hashrate}, because they use this computational power to repeatedly compute the hash function SHA-256). That same whitepaper already observes that incentivizing miners to follow the intended protocol can be tricky, however. For example, Nakamoto observes that a miner with $>1/2$ of the total hashrate in the network can strictly profit by deviating from the Bitcoin protocol. However, for several years it was largely assumed that $50\%$ was the cutoff: as long no miner controlled more than half the total hashrate in the network, everyone should follow the intended protocol.

Eyal and Sirer prove this assumption false with their seminal paper introducing the Selfish Mining strategy (we will overview their strategy in detail in Section~\ref{sec:prelim})~\cite{EyalS14}. Their strategy is strictly profitable compared to honestly following the Bitcoin protocol even with just $>1/3$ of the total hashrate. Followup work of Sapirshtein, Sompolinsky, and Zohar later provides an improved strategy that is strictly profitable with just $\approx 32.9\%$ of the total hashrate, which is tight (for a strategic miner with poor network connectivity)~\cite{SapirshteinSZ16}. With better network connectivity, Selfish Mining can be profitable with arbitrarily small fraction of hashrate --- see discussion in Section~\ref{sec:prelim}.

From a mechanism design perspective, these facts initially appear somewhat alarming: Depending on the hashrate of the largest miner, and also on network conditions, following the protocol is not a Nash equilibrium. Moreover, the Selfish Mining attack is in a sense untraceable: because Bitcoin is pseudonymous, there is no way to distinguish among identities of block creators to determine which blocks were created by a dishonest miner. That is, depending on the hashrate of the largest miner and network conditions, there is an untraceable and strictly profitable deviation from the Bitcoin protocol.

However, one key critique of Selfish Mining (and other strategic deviations) in practice is its \emph{statistical detectability}. That is, the pattern of blocks created in a world with a Selfish Miner does not have any ``innocent explanation'' due to network latency, and it will be clear that \emph{someone} is Selfish Mining.\footnote{To clarify: it may not be clear who is Selfish Mining, nor how to punish them, but it will be clear that \emph{someone} is Selfish Mining. Note also that collecting the relevant data needed to run the statistical test is non-trivial.} Should Selfish Mining be detected, it is possible that the value of the underlying cryptocurrency will suffer significantly. Therefore, while Selfish Mining may be profitable when denominated in the underlying cryptocurrency (e.g. BTC), it could be wildly unprofitable when denominated in an objective unit of value (e.g. USD).\footnote{Note that even if a deviator were to hedge against this possibility by shortselling the underlying cryptocurrency, there is still significant uncertainty in how the market might react (which may be sufficient to dissuade a risk-averse player, even if hedges exist to make deviating profitable in expectation).}

The \textbf{main result of our paper introduces a \emph{statistically undetectable Selfish Mining} strategy} in Theorems~\ref{thm:main1} and~\ref{thm:main2}. That is, the pattern of block creation induced by our strategy is statistically identical to that of a world with only honest miners but increased latency. The mathematics behind our results apply to the canonical setting studied in Eyal and Sirer's seminal work: proof-of-work longest-chain protocols with block rewards.\footnote{Our techniques may certainly be relevant more broadly to longest-chain protocols (e.g. with transaction fees instead of block rewards, or proof-of-stake instead of proof-of-work), simply because these settings have significant technical overlap. Still, we do not claim that any of our formal results carry to another setting, and applying our techniques in this setting may still require novel insight.} Still, we hope the reader takes away the following conceptual point much more broadly: attackers concerned about the impact of their attack on a cryptocurrency's value may seek an undetectable variant. Therefore, designers must consider this possibility when claiming incentive compatibility of protocols (and especially if those claims reference a strategic deviation's impact on the cryptocurrency's value). 

We discuss the implications of our results in more detail in Section~\ref{sec:conclusion}. After overviewing related work below in Section~\ref{sec:related}, we proceed with a detailed description of Selfish Mining, along with a description of our model in Section~\ref{sec:prelim}. We follow this with formal statements of our main results in Section~\ref{sec:results}. Section~\ref{sec:warmup} proves a warmup result to demonstrate some key ideas, and Section~\ref{sec:main1} proves our first main result. Appendix~\ref{app:whynot} contains a summary of discussions regarding Selfish Mining in practice. Extensions of our main results and omitted proofs are contained in later appendices.

\subsection{Related Work}\label{sec:related}

\paragraph{Strategic Manipulations of Consensus Protocols.} Following Eyal and Sirer's seminal work~\cite{EyalS14}, there have been numerous papers on Selfish Mining. Most relevant to our work are those that understand the limits of its profitability, such as~\cite{KiayiasKKT16, SapirshteinSZ16}. In analyzing our most general result, we use tools initially developed in~\cite{SapirshteinSZ16}.

Strategic manipulation of consensus protocols via some form of ``block withholding'' is studied in a wide array of related domains. Those most similar to our work consider deviations in proof-of-work longest-chain protocols with transaction fees~\cite{CarlstenKWN16}, proof-of-stake longest-chain protocols~\cite{BrownCohenNPW19, NeuderMRP19, NeuderMRP20, FerreiraW21}, and leader-selection protocols~\cite{FerreiraHWY22}. Our work bears some similarity to these in that we also study strategic manipulation of consensus protocols via block withholding, but the models are distinct.

Other works examine different styles of profitable deviations from Proof-of-Work blockchain protocols. For example,~\cite{FiatKKP19, GorenS19,YaishTZ22,YaishSZ22} consider manipulating difficulty adjustments to earn extra rewards.~\cite{YaishTZ22,YaishSZ22} do so in particular by manipulating the timestamp of created blocks.

\paragraph{Strategic Manipulation in Practice.} Among the broad line of works referenced above,~\cite{YaishSZ22} is especially impactful, as they also provide clear evidence of strategic manipulation in practice (which is the first such evidence, to the best of those authors' and these authors' knowledge). It is important for future work to understand the impact that their discovery had on  the deviator (F2Pool) and/or the underlying currency (ETH) -- this would quantitatively impact any risk assessment by miners considering strategic manipulation. In relation to our work, this would impact the importance of statistical undetectability to strategic miners.

We've informally used the term ``strategic manipulation'' to describe deviations from an intended consensus protocol that net the deviator additional rewards \emph{issued directly by the consensus protocol} in the form of block rewards. There is also a significant history of block producers acting selfishly to profit from economic activity on top of consensus protocols (e.g. double-spend attacks.)

\paragraph{Incentives beyond the Consensus Layer.} Beyond incentive compatibility concerns arising from block rewards, there is also a growing literature considering incentive design in other aspects of blockchain protocols. One example is recent work on transaction fee mechanism design (auction design for the inclusion of transactions within a block in the presence of strategic block producers)~\cite{LaviSZ19, Yao18, HubermanLM21,Roughgarden21,bahrani2023-TFM-MEV,ChungS23,ShiCW23}. These works also fit into a growing literature on mechanism design within consensus protocols, but otherwise have no technical overlap with ours.

\paragraph{Detection of Selfish Mining in practice.} We are aware of two lines of work related to statistical detection of Selfish Mining. One proposes statistical tests (such as the length of orphaned chains -- longer orphan chains are claimed to be indicative of Selfish Mining), and perhaps evaluates them in a simulation environment~\cite{ChicarinoAJR20, WangLLWY21, SaadNKM19}. To the best of our knowledge, most of these tests have not been deployed in practice, and therefore do not offer an opinion on whether or not Selfish Mining is occurring in practice.

A smaller line of work aims to detect Selfish Mining empirically.~\cite{LiCT22} claims to be the first paper to detect selfish mining in practice, building upon previous work of similar authors~\cite{LiYT20,LiYT20b}. Their statistical test looks exclusively at the main chain, and focuses on the distribution of consecutive blocks mined by the same miner. Their null hypothesis is that the miner of each block on the main chain should be independent of prior blocks, whereas a Selfish Miner would disproportionately mine their blocks consecutively. They claim that observed mining patterns fail their null hypothesis to varying degrees across five blockchains: Bitcoin, Litecoin, Ethereum PoW, Monacoin, and Bitcoin Cash. But, the authors acknowledge that their null hypothesis could fail for other reasons. For example, note that their null hypothesis would also fail when orphans naturally occur and miners tiebreak in favor of their own blocks (this is because a block competing in a natural fork is more likely to wind up in the longest chain when its creator finds the next block).

\section{Background}\label{sec:prelim}
\subsection{Nakamoto Consensus and the Longest Chain Protocol}

Bitcoin and many other altcoins use a \emph{longest-chain protocol} with \emph{proof-of-work}, which is often referred to as \emph{Nakamoto Consensus}. We overview relevant details of the protocol below, and refer the reader to resources such as~\cite{narayanan2016bitcoin} for an explanation of why these details capture the Bitcoin protocol. Note that the features we highlight below are exactly the same features from~\cite{EyalS14} --- the stylized model we pose below is not new, and is extensively studied following~\cite{EyalS14}.

\paragraph{Nakamoto Consensus Game (NCG)} There is a set $N$ of $n$ miners, and miner $i$ has hashrate $\alpha_i > 0$. Time proceeds in discrete steps $t = 1,2,\ldots$. At the start of each timestep $t$, there is a set of blocks $B_t$ that have been previously broadcast, and a set of blocks $B_t^i$ that have been previously created by miner $i$ (perhaps broadcast, perhaps not). Initially, $B_1^i$ is empty for all $i$, and $B_1$ contains a single ``genesis block.'' 
At every time step $t$:
\begin{itemize}
		\item A miner $m_t$ is selected so that $m_t :=i$ with probability $\alpha_i/(\sum_{j=1}^n \alpha_j)$, independently of all previous rounds. $m_t$ creates a \emph{block} $b_t$, and $m_t$ chooses the block to which $b_t$ \emph{points}: $m_t$ may choose any block in $B_t \cup B_t^{m_t}$ (that is, the newly created block must point to exactly one block, and that block can be any block that was broadcast before time $t$, or created by $m_t$ before time $t$, or both).
		\item Every miner $i$ can choose to broadcast any blocks in $B_{t+1}^{i}\setminus B_t$ (that is, every miner $i$ can broadcast any block that they created in a timestep $\leq t$ and have not already broadcast).
\end{itemize}
We refer to the \emph{height} of a block $h(b_t)$ as the number of blocks in the path leaving $b_t$ (i.e. by following pointers). A \emph{Longest Chain} at time $t$ is any block in $B_t$ of greatest height. If the longest chain at time $t$ is unique, we denote by $R_t^i$ to be the fraction of blocks in the longest chain that were created by miner $i$. If the longest chain at time $t$ is not unique, let $t'< t$ denote the most recent timestep where the longest chain at $t'$ is unique, and define $R_t^i:=R_{t'}^i$. Observe that the longest chain at time $1$ is unique, so this is always well-defined. Player $i$'s reward is $\lim\inf_{t \rightarrow \infty} R_t^i$.\footnote{Our work and all prior works only consider strategy profiles where the limit exists, so the distinction between $\lim$ and $\liminf$ is just a formality.}

In particular, observe in this game that every miner has two decisions every round: (a) if they create a block, what does it point to? and (b) what previously-created blocks do they broadcast? Observe also that miners are paid proportionally to the steady-state fraction of blocks they produce in the longest-chain --- this captures the fact that miners are paid primarily according to a block reward that comes with each block, and that Nakamoto Consensus has a difficulty adjustment that causes the length of the longest chain to grow proportionally to time (we again refer the reader to~\cite{narayanan2016bitcoin, EyalS14} for further details on connecting this game to Bitcoin and related cryptocurrencies).

\paragraph{Longest Chain Protocol} The longest-chain protocol asks each miner to use a specific strategy in this game: Whenever you create a block, point to a longest chain, and immediately broadcast your block. We call a miner \emph{honest} if they follow the longest-chain protocol, and \emph{strategic} otherwise. The key question is whether it is a Nash equilibrium for every miner to follow the longest-chain protocol.

It is folklore knowledge from Nakamoto's whitepaper~\cite{Nakamoto08} that the longest-chain protocol is not a Nash equilibrium if $\alpha_i > \sum_j \alpha_j/2$ for any $i$. In that case, Player $i$ can simply ignore all blocks created by other miners and point only to their own most recent block (broadcasting all blocks upon creation). Then Player $i$ will have reward $1$, and all other players have reward $0$.\footnote{Note that, beyond the fact that Player $i$ is accumulating all of the mining rewards, they control the entire content of the ledger, and can launch significantly more malicious attacks.} However, until Eyal and Sirer's seminal work, it was assumed that the longest-chain protocol is indeed a Nash equilibrium when $\alpha_i < \sum_j \alpha_j/2$ for all $i$.

\subsection{Selfish Mining}
We now overview the Selfish Mining strategy, and review its performance when all other miners are using the longest-chain protocol. Let's first consider the case where all other miners break ties in favor of the strategic miner $i$ (that is, if the longest chain is not unique, they will point their block towards the one that is created by miner $i$). Consider the following strategy in NCG:

\paragraph{Strong Selfish Mining:}
\begin{itemize}
\item When selected to mine a block, point to a longest-chain, breaking ties in favor of a block created by yourself.
\item \emph{Do not immediately broadcast this block.} Instead, during any round $t$ where you are not selected to mine, and the longest chain before round $t$ has height $h$, broadcast a block of height $h+1$ (if you have one).
\end{itemize}

\noindent If a strategic miner $i$ uses Strong Selfish Mining and all other miners follow the longest chain protocol, then the following occurs. First, \emph{every} block created by miner $i$ is broadcast during the same round as a block created by another miner of the same height. In other words, every height either contains a single block created by a miner $\neq i$, or two conflicting blocks broadcast during the same round, with one created by miner $i$. Second, because every miner tiebreaks in favor of miner $i$, all of $i$'s blocks will enter the longest chain (and conflicting blocks will not). Therefore:

\begin{theorem}[\cite{EyalS14}] If all miners $\neq i$ use a longest-chain protocol that tiebreaks in for Miner $i$, and Miner $i$ uses Strong Selfish Mining, then Miner $i$ gets reward $\frac{\alpha_i/\sum_j \alpha_j}{1-\alpha_i/\sum_j \alpha_j} > \alpha_i/\sum_j \alpha_j$.
\end{theorem}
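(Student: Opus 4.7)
The plan is to track Miner $i$'s hidden blocks via a one-dimensional Markov chain and read the reward off its stationary distribution. Let $\alpha := \alpha_i / \sum_j \alpha_j$ and let $X_t$ denote the number of blocks Miner $i$ has created but not yet broadcast just before round $t$. The entire argument should rest on (i) a structural invariant pinning down exactly where $i$'s hidden blocks sit, and (ii) recognizing that $X_t$ is a reflected biased random walk on $\{0,1,2,\ldots\}$ with explicit stationary distribution when $\alpha < 1/2$ (which I will assume, since for $\alpha \geq 1/2$ the formula gives a value $\geq 1$ and a separate argument is required).

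First I would prove by induction on $t$ the following invariant: at the start of round $t$, the public longest chain is unique (since the tiebreaker always favors $i$), has height $h_t$, and Miner $i$'s hidden blocks form a single chain of length $X_t$ sitting at heights $h_t+1, \ldots, h_t+X_t$ on top of it. The inductive step is a three-way case analysis on round $t$: if $i$ mines (probability $\alpha$), $i$ extends the hidden chain by one, so $X_{t+1}=X_t+1$ and $h_{t+1}=h_t$; if an honest miner mines (probability $1-\alpha$) and $X_t\geq 1$, the honest block lands at height $h_t+1$ and $i$'s broadcast rule releases the hidden block at $h_t+1$, which wins the tiebreak and enters the unique longest chain, yielding $X_{t+1}=X_t-1$ and $h_{t+1}=h_t+1$ with the new top-of-chain block belonging to $i$; if an honest miner mines and $X_t=0$, the honest block enters the longest chain unchallenged, giving $X_{t+1}=0$ and $h_{t+1}=h_t+1$ with the new top block honest.

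This invariant immediately identifies $X_t$ as the biased random walk on $\mathbb{Z}_{\geq 0}$ with up-step probability $\alpha$ and reflected down-step probability $1-\alpha$, which for $\alpha<1/2$ is positive recurrent with geometric stationary distribution $\pi_k=(1-\rho)\rho^k$ for $\rho := \alpha/(1-\alpha)$; in particular $\pi_0=(1-2\alpha)/(1-\alpha)$. A rate-counting step then finishes the proof: the longest chain grows by exactly one in any round an honest miner mines (and not otherwise), and by the invariant the added block belongs to $i$ precisely when $X_t\geq 1$. By the ergodic theorem for Markov chains, the long-run fractions of rounds contributing an $i$-block versus an honest block to the main chain are $(1-\alpha)(1-\pi_0)=\alpha$ and $(1-\alpha)\pi_0=1-2\alpha$ respectively, so $\lim R_t^i = \alpha/(\alpha+(1-2\alpha)) = \alpha/(1-\alpha)$, as claimed. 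The main technical subtlety I anticipate is the conversion from stationary frequencies to the $\liminf$ in the reward definition, together with arguing that the finitely many hidden blocks stranded in $i$'s private chain at any horizon make a vanishing contribution; both follow from positive recurrence and the ergodic theorem, which is why I view the core work as the invariant plus the stationary distribution computation.
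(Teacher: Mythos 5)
Your proof is correct, but it takes a more machinery-heavy route than the paper, which does not formally prove this cited theorem at all: it simply observes that under Strong Selfish Mining (i) every block Miner $i$ creates is eventually broadcast in the same round as a competing honest block of the same height and wins that tie, so \emph{all} of $i$'s blocks enter the longest chain, and (ii) every height contains exactly one honest block, so the chain grows at rate $1-\alpha$ while absorbing $i$'s blocks at rate $\alpha$, giving the ratio $\alpha/(\alpha + (1-2\alpha)) = \alpha/(1-\alpha)$ directly. Your reflected-random-walk argument reaches the same arithmetic: your quantity $(1-\alpha)(1-\pi_0)=\alpha$ is exactly the flow-balance statement that, in steady state, $i$'s blocks enter the chain at the same rate they are created, so the stationary-distribution computation is in some sense re-deriving what the paper asserts in one sentence. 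What your approach buys is a genuinely rigorous treatment of the $\liminf$ via positive recurrence and the ergodic theorem, an explicit handle on the hidden-lead process (useful if one wants, e.g., the distribution of orphan-run lengths), and an explicit flag of the $\alpha<1/2$ caveat that the theorem statement leaves implicit. One small imprecision to repair: your invariant asserts the public longest chain is \emph{unique} at the start of each round, but under the paper's definition it is momentarily non-unique right after a Pair is broadcast (two blocks share the top height until the next height is built); the correct statement is that all honest miners agree on which tip to extend, and the non-unique moments are handled by the paper's convention of deferring $R_t^i$ to the last unique time, which recurs with frequency at least $(1-\alpha)\pi_0 = 1-2\alpha > 0$.
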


Strong Selfish Mining is strictly profitable for any $\alpha_i > 0$, but relies on the fact that all other miners tiebreak in its favor. Eyal and Sirer connect tiebreaking in the above-described Nakamoto Consensus Game to network connectivity in practice. Really, a ``timestep'' in the Nakamoto consensus game corresponds to ``some miner has just succeeded in inverting SHA-256 and created a new block.'' An extremely well-connected miner can (a) listen for this block to be broadcast, and then (b) broadcast their own previously-withheld block, while still (c) ensuring that their own block arrives at other miners first. Mapping this onto the stylized Nakamoto consensus game yields Strong Selfish Mining. For the rest of this paper, we'll replace ``during any round $t$ where you are not selected to mine, and the longest chain before round $t$ has height $h$'' with ``during any round $t$ where another miner broadcasts a block of height $h+1$'' to emphasize this connection (and so that our language sounds less tailored to this specific stylization).

Of course, it is not realistic to assume that any strategic miner has such strong network connectivity. Eyal and Sirer further pose the (canonical) Selfish Mining strategy that is profitable even when the strategic miner \emph{loses all tiebreaks} --- this corresponds to a reality where the poorly-connected miner can still execute (a) and (b), but not (c). Instead, their canonical strategy (paraphrased below) relies only on the fact that honest miners will still always select a strictly longer chain over a strictly shorter one. 

\paragraph{Selfish Mining:}
\begin{itemize}
	\item When selected to mine a block, point to a longest-chain, breaking ties in favor of a block created by yourself.
	\item \emph{Do not (necessarily) immediately broadcast this block.} Instead, for each block $b$ of height $h$ that you create:
		\begin{itemize}
			\item At the moment where another miner broadcasts a block of height $h$, broadcast $b$.
			\item Or, during the moment/round where $b$ becomes \emph{pivotal}, broadcast $b$. A block of height $h$ is pivotal if (a) a block of height $h-1$ has been broadcast by another miner, (b) you created a block of height $h-1$, and (c) you have not yet created a block of height $h+1$.
		\end{itemize}
\end{itemize}


\noindent If a strategic miner $i$ uses Selfish Mining, and all other miners follow the longest-chain protocol, then the following occurs. At all points in time, there may be a unique longest chain with height $h$. If so, and a miner $\neq i$ creates the next block, it will be broadcast and the height increases to $h+1$. If instead $i$ creates the next block, $i$ will withhold the block and eventually create a conflict at height $h+1$. If there are multiple longest chains with height $h$, then miner $i$ may have created blocks that haven't been broadcast. If miner $i$ has only one such block, then it is pivotal, and miner $i$ will immediately broadcast it (to ensure that it wins the conflict, because it can only win with a strictly longer chain). If miner $i$ has multiple such blocks, it can safely wait until broadcasting, as long as they broadcast their last hidden block the moment it becomes pivotal. Intuitively, the Selfish Miner faces a risk compared to the longest-chain protocol: when withholding their first block, there is a chance that they do not find either of the next two blocks. In this case, their block is orphaned.\footnote{A block is ``orphaned'' if it is not in the eventual longest chain.} If they find exactly one of the next two blocks, then this block becomes pivotal and broadcast --- this allows the Strategic Miner to orphan another miner's block (thus increasing its own ratio). If instead they find both of the next two blocks, then they now have a lead of three withheld blocks, and will be able to orphan even more blocks of other miners (thus further increasing its own ratio). Because there is a risk, the strategy is only profitable for sufficiently large $\alpha_i$. 

\begin{theorem}[\cite{EyalS14}] If all miners $\neq i$ use a longest-chain protocol that tiebreaks \emph{against} Miner $i$, and Miner $i$ uses Selfish Mining, then Miner $i$ gets reward $>\alpha_i$ if and only if $\alpha_i > 1/3$.
\end{theorem}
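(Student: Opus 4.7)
The plan is a Markov chain analysis in the style of the original Eyal--Sirer paper. Setting $\alpha := \alpha_i/\sum_j \alpha_j$ to be Miner $i$'s normalized hashrate, I would model the execution under the assumed strategy profile as a discrete-time Markov chain whose state tracks the lead of Miner $i$'s longest private chain over the longest publicly broadcast chain. Take state space $\{0, 0', 1, 2, 3, \dots\}$: state $0$ means no lead and no public fork; state $0'$ denotes a public fork at equal height in which Miner $i$ has already revealed a matching block; and state $k \geq 1$ means Miner $i$ is currently withholding a lead of $k$ blocks. The Selfish Mining rules translate directly into transitions. From $0$, go to $1$ with probability $\alpha$ and stay otherwise. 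From $1$, go to $2$ with probability $\alpha$ and otherwise to $0'$ (the matching rule). From $2$, go to $3$ with probability $\alpha$ and otherwise to $0$ (the pivotal rule releases both hidden blocks, which then form a strictly longer chain). From $k \geq 3$, go to $k+1$ with probability $\alpha$ and otherwise to $k-1$ (matching shrinks the lead by one). From $0'$, a single step resolves the fork: with probability $\alpha$ Miner $i$ extends their own chain and wins, and with probability $1-\alpha$ an honest miner wins by the anti-tiebreak assumption; in either case we return to $0$.

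I would then solve for the stationary distribution. For $k \geq 2$ the balance equations give the geometric recurrence $\pi_{k+1} = \rho\, \pi_k$ with $\rho := \alpha/(1-\alpha)$, which is summable precisely when $\alpha < 1/2$ (the regime of interest). A short linear system handles the remaining states $\{0, 0', 1, 2\}$, and normalization determines $\pi_0$ as a rational function of $\alpha$.

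The heart of the proof is to convert the stationary distribution into Miner $i$'s long-run reward $R(\alpha)$, defined as the ratio of expected main-chain blocks authored by Miner $i$ to the expected total number of main-chain blocks per time step. The bookkeeping is: transition $0 \to 0$ contributes one honest main-chain block; transition $0' \to 0$ contributes two main-chain blocks, both strategic on the $\alpha$ branch or both honest on the $1-\alpha$ branch; transition $2 \to 0$ contributes two strategic blocks and orphans the triggering honest block; and each transition $k \to k-1$ with $k \geq 3$ retroactively locks in exactly one additional strategic block in the eventual main chain while orphaning the triggering honest block. Summing these contributions against the stationary distribution yields $R(\alpha)$ as an explicit rational function of $\alpha$, and clearing denominators in the inequality $R(\alpha) > \alpha$ reduces to a low-degree polynomial inequality whose only root in $(0, 1/2)$ is $\alpha = 1/3$, giving both directions of the ``if and only if.''

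The main obstacle will be the reward accounting in the third step. For transitions $k \to k-1$ with $k \geq 3$, the identity of the eventual main chain is not settled at the moment of transition, so the credit of one strategic block per such transition must be justified by a telescoping argument over an entire epoch (the time between successive visits to state $0$): verifying that across every epoch, the number of strategic and honest blocks that ultimately enter the main chain is exactly the sum of per-transition credits listed above. Once the telescoping is pinned down and positive recurrence is noted, the remainder of the proof is routine polynomial algebra.
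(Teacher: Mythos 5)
Your proposal is correct, but it takes a genuinely different route from the one the paper uses. You reconstruct the original Eyal--Sirer analysis: the Markov chain on the attacker's lead with states $\{0,0',1,2,3,\dots\}$, its stationary distribution via the geometric recurrence $\pi_{k+1}=\tfrac{\alpha}{1-\alpha}\pi_k$, and a per-transition reward accounting whose deferred credits (one locked-in strategic block per $k\to k-1$ transition with $k\geq 3$) must be justified by a telescoping argument over epochs between visits to state $0$ --- you correctly flag that telescoping as the delicate step. The paper instead deliberately avoids the Markov chain: it first proves a general block-counting identity (Lemma~\ref{lem:pair} and Corollary~\ref{cor:paircounting}) showing that any strategy that eventually broadcasts all blocks beats reward $\alpha$ if and only if it wins a $\delta>1-\alpha$ fraction of Pair rounds, and then in Appendix~\ref{sec:simplerSM} computes the expected number of Pairs and of winning Pairs per reset by a short case analysis on the first two blocks after the attacker hides one (the only unbounded case, a lead of two or more, is handled by the one-line recurrence for the expected excursion length), yielding the threshold $\alpha\geq\frac{1-\gamma}{3-2\gamma}$, which is $1/3$ at $\gamma=0$. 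Your approach buys more: the exact reward $R(\alpha)$ as a rational function and the full distribution of the lead, which the paper's shortcut does not provide. The paper's approach buys economy and reuse: it needs no stationary distribution and no deferred-credit bookkeeping, and the Pair-counting lemma is the same workhorse it later applies to the undetectable strategies, which is presumably why the authors chose it.
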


\subsection{Detecting Selfish Mining}


Because Bitcoin is pseudonymous, an attacker can create a new public key for every block they create, so there is no record of the same miner creating multiple blocks.  With no identity attached to any particular block, there is little to distinguish the Selfish Miner's blocks from others.\footnote{But they are not \emph{fully} indistinguishable. The Selfish Miner's blocks will always be created before the competing blocks, and this can manifest in a few ways (including timestamps and transaction content). One might therefore propose to mitigate the profitability of Selfish Mining by asking honest miners to tiebreak in favor of blocks that were created most recently. Unfortunately, this creates significant incentive issues (that designers are well aware of): now an attacker need not build upon the current longest-chain because they can just replace it instead. In general, targeted punishments against blocks that could have been created by Selfish Mining may be worthwhile to explore, but this is very far from current norms (and may be impossible to do without creating new incentive issues).}


While a Selfish Miner may not be identifiable, their presence is statistically detectable. To help make this point more rigorous, consider the following generalized Nakamoto consensus game, parameterized by a latency $\ell$.\footnote{Appendix~\ref{app:latency} contains a brief discussion relating this stylized latency model to richer latency models considered in Distributed Systems.}

\paragraph{Nakamoto Consensus Game, parameterized by $\ell$ (\nmgl)}

\noindent The Nakamoto Consensus Game, parameterized by $\ell>0$ is identical to the Nakamoto Consensus Game \emph{except} in who is selected to mine during a step $t$. Instead of picking a miner proportional to their hashrate, a coin is flipped independently for each miner $i$ that is heads with probability $\ell \cdot \alpha_i$ (we will only ever consider games where $\ell \leq \max_j\{1/\alpha_j\}$). The set of coins is repeatedly flipped until at least one lands heads. Then, all miners whose coins are heads produce a block this round.

 As $\ell > 0$ approaches 0, the distribution of miners selected to create blocks in each round approaches that of the original Nakamoto Consensus Game, so we formally define 0-NCG as the original NCG (where exactly one miner is selected each round).

\nmgl\ is a natural extension of the original stylized model to incorporate a stylized model of latency. Even if every miner is honestly following the longest-chain protocol, there will inevitably be conflicts and orphaned blocks (for example, during any round in which there are multiple miners). We now build up language to formalize what we mean by statistical undetectability.

\begin{definition}[View of a Nakamoto Consensus Game] As a Nakamoto Consensus Game is played, we refer to the \emph{view} as the collection of all blocks that are eventually broadcast (treated as nodes in a directed graph --- the only information in the view is the pointer). We also refer to the \emph{view up to height $h$} as the collection of all blocks of height at most $h$ that are eventually broadcast. Observe that for any \nmgl\ and any strategy profile, the view is drawn according to some distribution.
\end{definition}

\begin{definition}[Statistically Undetectable Deviant Strategy] We say that a strategy $s$ for miner $i$ in \nmgl\ is \emph{$\ell'$-statistically undetectable} with respect to $\vec{s}_{-i}$ if the view of the game when Miner $i$ uses $s$ and and all other miners use $\vec{s}_{-i}$ is identically distributed to the view of the \nmglp\ when Miner $i$ uses some longest-chain strategy and all other miners use $\vec{s}_{-i}$.

We say that an ensemble of strategies $\{s^\varepsilon\}_{\varepsilon > 0}$ is \emph{statistically undetectable} with respect to $\vec{s}_{-i}$ if for all $\varepsilon > 0$ there exists an $\ell' \in (\ell,\ell+\varepsilon)$ such that $s^\varepsilon$ is $\ell'$-statistically undetectable with respect to $\vec{s}_{-i}$. 

In all applications of these definitions, each $s_j$ will be a longest-chain strategy (but may tie-break differently for different applications). 
\end{definition}

\noindent Note that in order to profit, Attacker must create orphans even if the base game is NCG. Therefore, in order for the view to look consistent with fully honest participants, Attacker will need to target an $\ell' > 0$ (and hence, even if one primarily wishes to study the original NCG as the ``real world'', one must study $\ell'> 0$ for the world created by Attacker).

Let us now quickly understand why both Selfish Mining Strategies are statistically detectable. Both claims will use the following notation (which our later proofs will also use) -- the purpose of this is to ease notational burden and support a reduction from analyzing the full $n$-player game to a $2$-player game. To ease notation here, and in the rest of the paper, we will w.l.o.g.~always consider Player $1$ to be the strategic player. 

\begin{definition}[Single/Pair] We say that height $h$ in a view of a Nakamoto Consensus Game has state Pair if there is a block of height $h$ created by Player $1$, and also a block of height $h$ created by a Player $>1$ (possibly multiple players). Otherwise, $h$ has state Single.
\end{definition}

\begin{definition}[SP-Simple] Consider any execution of a Nakamoto Consensus Game where all other players use the same deterministic longest-chain strategy. Consider also modifying the execution so that in every round, if at least one Miner $>1$ is selected to create a block, instead only Miner $2$ creates a block. A strategy for Miner $1$ is \emph{SP-Simple} if it takes identical actions in both games.

That is, when playing against a profile of identical deterministic longest-chain strategies, an SP-Simple strategy is agnostic to how many blocks are created by other miners during a particular round, or which miner created them, given that some other miner created a block during this round.
\end{definition}

\begin{proposition}\label{prop:SPOK} Let $s$ be any SP-Simple strategy for Miner $1$ and $\vec{s}_{-1}$ be a profile where all other players use the same longest-chain strategy $s'$, and $s'$ tiebreaks in either lexicographical or reverse lexicographical order.\footnote{That is, $s'$ either always tiebreaks in favor of $1$, or never in favor of $1$.} Then for all $\ell, \ell'$, $s$ is $\ell'$-statistically undetectable for \nmgl\ with hashrates $\vec{\alpha}$ with respect to $\vec{s}_{-i}$ if and only if $s$ is $\ell'$-statistically undetectable for \nmgl\ with hashrates $\langle \alpha_1,\frac{1-\prod_{i=2}^n (1-\alpha_i \cdot \ell))}{\ell}\rangle$ with respect to $s'$. 

Moreover, Miner $1$'s reward using $s$ in \nmgl\ with hashrates $\vec{\alpha}$ against $\vec{s}_{-i}$ is equal to that when using $s$ in \nmgl\ with hashrates $\langle \alpha_1,\frac{1-\prod_{i=2}^n (1-\alpha_i \cdot \ell))}{\ell}\rangle$ against $s'$.
\end{proposition}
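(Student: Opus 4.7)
The plan is to construct a deterministic merging map $f$ from the $n$-player $\ell$-NCG to a 2-player $\ell$-NCG (collapsing adversaries $\{2,\ldots,n\}$ into a single merged ``Player $2$'') and use the SP-Simple hypothesis to show Player $1$'s entire trajectory agrees under a natural coupling. First I would verify the per-round match: in $\ell$-NCG with hashrates $\vec{\alpha}$, the marginal event ``Player $1$ mined this round'' has probability $\ell\alpha_1$, and independently ``at least one of Miners $2,\ldots,n$ mined'' has probability $1-\prod_{i=2}^n(1-\ell\alpha_i) = \ell\alpha_2'$, where $\alpha_2' = \frac{1-\prod_{i=2}^n(1-\ell\alpha_i)}{\ell}$. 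These marginals (with re-flipping of all-tails rounds) match precisely the mining events in the 2-player $\ell$-NCG with hashrates $\langle\alpha_1,\alpha_2'\rangle$, yielding a round-by-round coupling. Because $s$ is SP-Simple, Player $1$'s action depends only on whether Player $1$ mined, whether at least one adversary mined, and the prior broadcast history; under the coupling these features agree, so by induction on rounds Player $1$'s blocks, pointers, and broadcasts are identical across games, and the ``merged'' Player $2$ block at each height has a well-defined parent pointer (inherited canonically from the adversary blocks of the $n$-player game using the lex or reverse-lex tiebreaking of $s'$).

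The reward equivalence is now immediate: the longest chain in any view is a path with at most one block per height, so the $n$-player longest chain and its $f$-image have equal length and contain Player $1$'s block at the same heights. Consistency of tiebreaking (Player $1$'s block always wins, or always loses, against adversary blocks at the same height) makes adversary blocks interchangeable for Player $1$'s reward, so Player $1$'s limiting fraction of the longest chain is preserved under $f$, giving equality of limits as $t\to\infty$.

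For the undetectability iff, I would observe that the distribution of the $n$-player view factors as (marginal over the merged 2-player view) times (an independent-across-rounds conditional distribution over which subset of adversaries mined, given that at least one did), with the conditional factor depending only on hashrates and the ambient latency. This factorization applies both when Player $1$ plays $s$ and when all miners are honest, so equality of the full $n$-player attacker and honest distributions is equivalent to equality of their merged 2-player counterparts, giving both directions of the iff. The main obstacle is setting up $f$ so that parent-pointer choices line up consistently across the coupled games and that the conditional factor above genuinely cancels on both sides of the iff; once $f$ is in place, the iff falls out of matching factors.
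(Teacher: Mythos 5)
Your proposal is correct and follows essentially the same route as the paper: the paper likewise couples the $n$-player and two-player games round by round by matching the probabilities of the three per-round events (only Miner $1$ mines, Miner $1$ and some miner $>1$ mine, only miners $>1$ mine), invokes SP-Simplicity of $s$ together with the consistent lexicographic tiebreaking of $s'$ to conclude that Miner $1$'s trajectory and reward agree, and derives the undetectability equivalence by composing these couplings across the four relevant games. Your factorization of the $n$-player view into the merged two-player view times an independent per-round lifting kernel is just a more explicit rendering of that same composition-of-couplings step (and the obstacle you flag there is left equally implicit in the paper's own argument).
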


A proof of Proposition~\ref{prop:SPOK} appears in Appendix~\ref{app:omitted}. Intuitively, Proposition~\ref{prop:SPOK} follows by observing that a single Player 2 produces a block in the proposed two-player game with the same probability that any player $>1$ produces a block in the original game. Proposition~\ref{prop:SPOK} allows us to focus just on the two-player game, which has significantly simpler notation.

\begin{proposition}\label{prop:nostrong} For any $\vec{\alpha}$, and any player $i$, Strong Selfish Mining is \emph{not} $\ell'$-statistically undetectable for any $\ell'$ in the Nakamoto Consensus Game with respect to the strategy profile where other players use longest-chain and tiebreak in favor of $i$.
\end{proposition}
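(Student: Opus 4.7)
The plan is to exhibit a statistic of the view, readable purely from the DAG of pointers without knowing which player created each block, whose distribution under Strong Selfish Mining in the NCG is inconsistent with every honest \nmglp. The natural choice is the sequence $N_h :=$ (number of blocks at height $h$).

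First I would observe that in any honest \nmglp\ (including $\ell' = 0$), the sequence $(N_1, N_2, \ldots)$ is i.i.d.\ across $h$. In each round the protocol independently resamples its selected set of miners (via coin flips repeated until at least one lands heads); all selected miners honestly extend the current longest chain by exactly one, so each round fills a single new height with one block per selected miner. In particular $N_1$ and $N_2$ share the same marginal distribution for every $\ell'$.

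Second, I would analyze Strong Selfish Mining in the NCG by tracking the hidden queue $k :=$ (number of blocks Miner $i$ has created but not yet broadcast). Writing $\alpha := \alpha_i / \sum_j \alpha_j$ for the per-round probability that $i$ is selected, the structural claim is that each height $h$ of the view is created by the $h$-th non-$i$-round (non-$i$-rounds are the only ones that raise the public chain), and $N_h = 2$ iff $k > 0$ at the start of that round (so $i$ simultaneously releases a withheld block at height $h$), else $N_h = 1$. Granting this, $P(N_1 = 2)$ equals the probability that at least one $i$-round precedes the first non-$i$-round, which is $\alpha$. Conditioning on height $1$: after a Single, $k$ resets to $0$, so $P(N_2 = 2 \mid N_1 = 1) = \alpha$; after a Pair, $k$ just afterwards is geometric on $\{0, 1, 2, \ldots\}$ with parameter $1 - \alpha$, yielding $P(N_2 = 2 \mid N_1 = 2) = \alpha(2 - \alpha)$. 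Combining gives $P(N_2 = 2) = \alpha(1 + \alpha - \alpha^2)$, strictly greater than $\alpha = P(N_1 = 2)$ whenever $\alpha \in (0, 1)$.

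Since any honest \nmglp\ forces $P(N_1 \geq 2) = P(N_2 \geq 2)$ but Strong Selfish Mining violates this equality, no choice of $\ell'$ makes the two view distributions agree. The main obstacle is the structural claim in step two: one must verify that every withheld $i$-block ever appearing in the view is released in the same round as a non-$i$-block at the same height (so view-heights are in bijection with non-$i$-rounds) and that $k$ evolves as a reflected random walk at the granularity of view-heights. This reduces to a careful case analysis of the release rule, which fires only on non-$i$-rounds and only for the block at public-chain-height $+1$, so withheld blocks drain sequentially as the public chain grows.
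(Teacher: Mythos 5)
Your proposal is correct and follows essentially the same route as the paper: both reduce to tracking whether each height is a Single or a Pair (your $N_h\in\{1,2\}$), observe that honest play under any latency $\ell'$ makes this sequence i.i.d., and exploit the fact that under Strong Selfish Mining the probability of a Pair at height $h+1$ depends on the state at height $h$ ($\alpha$ after a Single versus $\alpha(2-\alpha)>\alpha$ after a Pair, the latter via the geometric law of the hidden queue). The only packaging difference is that you convert this conditional discrepancy into a discrepancy between the \emph{marginals} at heights $1$ and $2$, whereas the paper stops at the failure of independence (and only needs the inequality $>\alpha$ rather than the exact conditional value).
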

\begin{proof} We use Proposition~\ref{prop:SPOK} and prove the claim for every two-player game. Indeed, observe that in any two-player game parameterized by $\ell$, if we let $\beta' = \frac{\alpha_1 \cdot \ell \cdot \alpha_2 \cdot \ell}{1-\alpha_1\cdot \ell \cdot \alpha_2 \cdot \ell}$ denote the probability that both players produce a block in the same round, then height $h$ is Pair with probability $\beta'$ independently for all $h$. 

If instead we consider a Nakamoto Consensus Game where Miner $1$ uses Strong Selfish Mining, then observe that height $h$ is Single if and only if Miner $2$ produces the first block at height $h$, and is Pair if and only if Miner $1$ produces the first block at height $h$.

So consider any height $h$ such that $h$ is Single. This means that Miner $2$ produced the first block at height $h$ and immediately broadcast it. Therefore, Miner $1$ produces the first block at height $h+1$ with probability $\alpha_1/(\alpha_1+\alpha_2)$, and height $h+1$ has state Pair with exactly this probability. If instead $h$ is Pair, then Miner $1$ has (at least) one withheld block, and Miner $1$ produces the first block of height $h+1$ with probability $>\alpha_1/(\alpha_1+\alpha_2)$ (because Miner $2$ would need to produce at least the next two blocks in order to produce the first block of height $h+1$). Therefore, the probability of seeing Single vs.~Pair at height $h+1$ depends on whether height $h$ is Single vs.~Pair, and this cannot be identically distributed to honest parties in a Nakamoto consensus game with any latency.
\end{proof}

Intuitively, Proposition~\ref{prop:nostrong} observes that Strong Selfish Mining is more likely to produce a Pair when it has just produced a Pair (because it must already have hidden blocks at the moment a height is determined to be Pair).

\begin{proposition} For any $\vec{\alpha}$, and any player $i$, Selfish Mining is \emph{not} $\ell'$-statistically undetectable for any $\ell'$ in the Nakamoto Consensus Game with respect to the strategy profile where other players use longest-chain and tiebreak against $i$.
\end{proposition}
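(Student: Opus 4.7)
The plan is to apply Proposition~\ref{prop:SPOK} to reduce to a two-player game, and then to exhibit a graph-structural feature of the view that occurs with positive probability under Selfish Mining but with probability zero in any \nmglp\ played by honest miners that tie-break against Player~1. Unlike the Strong Selfish Mining proof, conditioning just on the Single/Pair pattern at one height will not obviously suffice here (the consecutive-Pair statistics are more subtle for the canonical strategy), so I will instead use a feature that directly reflects multi-block reorgs.

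The key invariant I will establish on the honest \nmglp\ side is: at every Pair at some height $h$, exactly one of the two blocks at $h$ (the tie-break winner, which under reverse-lexicographic tie-breaking is Player~2's block) has descendants in the view, and the other (Player~1's block) is a leaf. This is immediate because after the Pair is broadcast both honest miners always extend the tie-break winner, so Player~1's block at $h$ is never pointed at by any later block. In particular, a Pair at $h$ in which \emph{both} of its blocks have at least one child occurs with probability zero in any honest \nmglp.

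I will violate this invariant under Selfish Mining via a cascade starting from the lead-3 state, which recurs with positive probability in steady state (the Markov chain on Selfish's lead has $\alpha>0$, so State~3 is visited infinitely often almost surely). Concretely, suppose the public chain ends at height $H$, the Selfish Miner holds hidden blocks at $H+1, H+2, H+3$, and Honest mines next at $H+1$. Rule~1 forces Selfish to publish its $H+1$ block (creating a Pair at $H+1$); crucially, Selfish's $H+2$ is \emph{not} pivotal because condition~(c) fails thanks to the existence of Selfish's $H+3$, so $H+2$ stays withheld. When Honest subsequently mines at $H+2$ on its own chain (probability $1-\alpha>0$), Rule~1 publishes Selfish's $H+2$ (Pair at $H+2$), at which point condition~(c) now holds for Selfish's $H+3$ (no $H+4$ exists), so $H+3$ is pivotal and is published as a Single, winning the race. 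In the resulting view, the Pair at $H+1$ has both blocks extended---Selfish's $H+1$ has child Selfish's $H+2$, and Honest's $H+1$ has child Honest's $H+2$---contradicting the invariant above.

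The main obstacle is tracking the pivotal conditions carefully through the cascade (in particular, pinning down exactly why condition~(c) flips from failing for Selfish's $H+2$ to holding for Selfish's $H+3$ once the withheld chain is partially published) and confirming that the lead-3 trigger event has strictly positive steady-state probability rather than being a measure-zero artifact. Both are routine once set up, but spelling them out carefully is essential because the paper's paraphrased publication rules differ in flavor from the usual Eyal--Sirer state machine.
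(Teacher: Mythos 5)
Your construction on the Selfish-Mining side is sound: starting from a hidden lead of three, the cascade you describe does occur with positive probability and does produce a Pair at height $H+1$ in which both blocks have children. The gap is on the honest side. The definition of $\ell'$-statistical undetectability quantifies \emph{existentially} over Miner $1$'s own longest-chain strategy in the hypothetical honest world (``Miner $i$ uses \emph{some} longest-chain strategy''), while only the other miners are pinned to tie-breaking against Miner $1$. Your ``key invariant'' --- that at every Pair exactly one block has descendants --- is justified by the claim that ``both honest miners always extend the tie-break winner,'' but that is only true for the particular honest counterpart of Miner $1$ that tie-breaks against its own blocks. If honest Miner $1$ instead tie-breaks in favor of itself (a perfectly valid longest-chain strategy), then two consecutive natural Pairs --- an event of probability $(\beta')^2 > 0$ in any \nmglp\ with $\ell' > 0$ --- produce exactly the structure you declare impossible: each miner extends its own block at height $h$, so both blocks of the Pair at $h$ acquire children. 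Hence your distinguishing event does not have probability zero in \emph{every} honest world, and the argument as written only rules out one candidate honest explanation rather than all of them.

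This is precisely why the paper's proof works with the Single/Pair pattern rather than the descendant structure: whether height $h$ is Single or Pair in an honest \nmglp\ depends only on whether both miners produce a block in the round at which height $h$ is first reached, which is independent of anyone's tie-breaking rule. So in \emph{every} honest world the pattern is i.i.d.\ Bernoulli$(\beta')$, and it suffices to exhibit a conditional dependence under Selfish Mining (the paper compares $\Pr[S(h+2)=\text{Pair}]$ conditioned on $(h,h+1)$ being (Single, Pair) versus (Pair, Pair)). To salvage your approach you would need a second, separate argument against the honest worlds in which Miner $1$'s counterpart extends its own blocks at ties --- for instance by comparing the frequency or conditional statistics of the ``both-children'' structure --- at which point you have essentially reconstructed a dependence argument of the paper's flavor anyway.
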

\begin{proof} We use Proposition~\ref{prop:SPOK} and prove the claim for every two-player game. Indeed, observe that in any two-player game parameterized by $\ell$, if we let $\beta' = \frac{\alpha_1 \cdot \ell \cdot \alpha_2 \cdot \ell}{1-\alpha_1\cdot \ell \cdot \alpha_2 \cdot \ell}$ denote the probability that both players produce a block in the same round, then height $h$ is Pair with probability $\beta'$ independently for all $h$. 

If instead we consider a Nakamoto Consensus Game where Miner $1$ uses Selfish Mining, then observe that height $h$ is Single if and only if Miner $1$ has no hidden blocks when the block of height $h$ is broadcast. In particular, if two consecutive states $(h, h+1)$ are (Single, Pair), it must be that Miner $1$ has no hidden blocks at height $h$, and $h+1$ is determined to be state Pair as soon as Miner $1$ finds the next block. From here, state $h+2$ is Pair if and only if Miner $1$ finds the next block, which occurs with probability $\alpha_1/(\alpha_1+\alpha_2)$.

If instead two consecutive states $(h,h+1)$ are (Pair, Pair), it must be that Miner $1$ found a block at height $h+2$ before Miner $2$ found a block at height $h$, state $h+1$ is determined to be Pair the moment this happens. From here, state $h+2$ is Pair with probability $>\alpha_1/(\alpha_1+\alpha_2)$ (because Miner $2$ would need to produce at least the next two blocks in order for Miner $1$'s block at height $h+2$ to become pivotal). Therefore, the probability of seeing Single vs.~Pair depends on whether the previous two heights are (Single, Pair) vs.~(Pair, Pair), and this cannot be identically distributed to honest parties in a Nakamoto Consensus Game with any latency.
\end{proof}

Importantly, the above propositions note that Selfish Mining is statistically detectable \emph{only by looking at the view of the game}. In particular, it does not require timestamping information (which can easily be manipulated~\cite{YaishTZ22,YaishSZ22} and intentionally has minimal role in the consensus protocol), or real-time monitoring of the network (the entire purpose of a consensus protocol is to cope with the fact that messages arrive at unpredictable times due to latency). Still, we briefly discuss in Section~\ref{sec:conclusion} alternative detection methods based on these.

\section{Main Results and Technical Outline}\label{sec:results}
We now state our main results, which provide a statistically undetectable and profitable Selfish Mining strategy. All of our strategies have the following format: they begin with a strictly profitable selfish mining strategy (for our warmup, Strong Selfish Mining. For our first main result, Selfish Mining. For our second main result, a natural extension of Selfish Mining when there are sometimes natural forks), and sometimes ``wastes'' a hidden lead in order to preserve undetectability. That is, when a strategic miner has (say) five hidden blocks, they can reap the greatest profit by causing a fork with at least the first four. But, this creates a lot of sequential orphans. Our strategies will sometimes broadcast some of these blocks without creating an orphan. This sacrifices profit, but enables the strategy to appear indistinguishable from latency. The challenge is finding an appropriate broadcast pattern to be undetectable, and also analyzing when such patterns are still strictly 
profitable. We first begin with a warmup theorem that demonstrates some key ideas. 

\begin{theorem}\label{thm:warmup} Let $\vec{s}_{-1}$ be the strategy profile where every miner uses longest-chain and tiebreaks for Miner $1$. Then for any $\vec{\alpha}$, there is a statistically undetectable ensemble of strictly profitable strategies for Miner $1$ in the Nakamoto Consensus Game with respect to $\vec{s}_{-1}$.
\end{theorem}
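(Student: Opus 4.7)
The plan is, via Proposition~\ref{prop:SPOK}, to reduce to the 2-player NCG with Player~2 playing honest longest-chain and tiebreaking for Player~1, with hashrates $\alpha_1,\alpha_2$ and $q_i := \alpha_i/(\alpha_1+\alpha_2)$. In the honest $\ell'$-NCG, the view reduces to an iid sequence of Pair indicators, each equal to $1$ with probability $\beta(\ell') := \alpha_1\alpha_2(\ell')^2/D_{\ell'}$ where $D_{\ell'}=1-(1-\alpha_1\ell')(1-\alpha_2\ell')$; since $\beta$ is continuous with $\beta(0)=0$, any Pair rate in $(0,\beta(\varepsilon))$ is realized as $\beta(\ell')$ for some $\ell'\in(0,\varepsilon)$. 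So for each $\varepsilon$ the aim is to induce a matching iid Pair rate while beating the honest $q_1$ baseline in Miner~1's long-run share.

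I would define $s^\varepsilon$ as a randomized variant of Strong Selfish Mining: Miner~1 hides all found blocks in a secret chain of current length $L$, and upon each publication by Miner~2, releases a random number $K\in\{0,1,\ldots,L\}$ from the prefix of the secret chain, with $K$ drawn from a distribution depending on $L$ and on a fresh independent coin. The outcomes at the batch of heights just revealed are: $K=0$ yields a lone Single-2 at Miner~2's new height (the secret chain is forfeited); $K=1$ yields a Pair at Miner~2's new height (as in Strong SM); and $K\ge 2$ yields a Pair followed by $K-1$ Single-1 heights, since Miner~1's released chain is now strictly longer and honest Miner~2 immediately jumps onto it. Because Miner~2 tiebreaks in favor of Miner~1 everywhere, the pointer structure of the resulting view matches the honest $\ell'$-NCG exactly, so the only thing to control is the marginal Pair rate and the independence across heights.

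The analysis reduces to a Markov chain on $L$. For any release rule, I would derive the stationary distribution of $L$ and then the resulting marginal per-height Pair rate and per-height Single-2 rate; by tuning the release rule and invoking a continuity / intermediate-value argument, I would hit the target marginal Pair rate $\beta(\ell')$ for some $\ell'\in(0,\varepsilon)$. Miner~1's long-run share of the longest chain equals $1$ minus the per-height Single-2 rate (since Miner~1 wins every Pair via tiebreak and every Single-1 outright), so strict profitability reduces to showing the per-height Single-2 rate is strictly below $q_2$; this holds whenever the secret chain is used with positive probability, and the positive margin persists for every $\varepsilon>0$ (shrinking to $0$ only as $\varepsilon\to 0$, which is consistent with the theorem).

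The hardest step is ensuring that the Pair indicators are genuinely \emph{iid}, not merely that they have the right marginal rate. In Strong SM, a Pair at height $h$ is positively correlated with a Pair at $h+1$ because Miner~1's lead persists from one round to the next. The release rule must therefore be designed so that the chain-length process \emph{regenerates} to a fixed distribution after every height-advance event; I expect the right choice to be a length-dependent mixture between ``publish one'' and ``publish all'' (possibly with some abandonment), chosen so that the post-release distribution of $L$ exactly matches its fresh startup distribution and consecutive heights are independent. Verifying this regeneration condition, while simultaneously hitting the marginal Pair rate and the positive-profit inequality for every $\varepsilon>0$, is the technical core of the argument.
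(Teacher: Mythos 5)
Your high-level skeleton matches the paper's: reduce to the two-player game via Proposition~\ref{prop:SPOK}, target an i.i.d.\ Pair rate $\beta$ tending to $0$, and exploit the fact that Miner~1 wins every Pair by tiebreak. But your mechanism --- batch releases triggered by Miner~2's publications, governed by a Markov chain on the secret-chain length $L$ --- is not the paper's, and it has a gap that you flag but underestimate. The problem is that a release of size $K\geq 2$ \emph{deterministically} forces the next $K-1$ heights to be Single, so the state at height $h+2$ is not Pair with probability $\beta$ conditioned on a Pair at $h+1$ arising from such a batch. Your proposed fix (regenerating the post-release distribution of $L$) makes the sequence of emitted \emph{words} i.i.d., but that is not enough: if each word is either ``S'' or ``P followed by $K-1$ S's'' and the words are i.i.d., then $\Pr[S(h+2)=\text{Pair}\mid S(h+1)=\text{Pair}]=\Pr[K=1\mid K\geq 1]\cdot\Pr[\text{word starts with P}]$, and equating this with $\beta$ forces $\Pr[K=1\mid K\geq1]=1$. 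With all releases of size at most $1$, no Single-1 heights ever occur, every Miner-1 block that enters the chain does so as half of a Pair, and the long-run share is exactly the Pair rate $\beta$ --- which must tend to $0$ for the ensemble to be undetectable. So under your own regeneration condition the strategy is strictly \emph{un}profitable; relatedly, your claim that profitability ``holds whenever the secret chain is used with positive probability'' ignores the blocks lost to forfeiture when $K=0$.

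The paper escapes this tension by making the Pair/Single decision \emph{per height at block-creation time} rather than in batches at Miner~2's publication events. At the moment the first block of height $h-1$ is created, Attacker computes $P_h=1-(1-\alpha)^{i+1}\geq\alpha$ (the conditional probability that it creates the first block of height $h$, where $i$ is its current number of hidden blocks), and if it does create that block first, labels it Pair with probability $\beta/P_h$ and Single otherwise. This rejection-sampling step makes $\Pr[S(h)=\text{Pair}]$ exactly $\beta$ conditioned on \emph{all} information available when $S(h-1)$ is fixed, which immediately yields independence across heights for any $\beta\leq\alpha$; and because Single-labeled blocks are broadcast the instant the previous height is public, no block is ever forfeited, so the reward is $\alpha(1+\beta)>\alpha$ by Lemma~\ref{lem:pair}. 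If you want to salvage your route, the essential missing idea is precisely this conditional-probability normalization tied to the moment each state becomes determined, together with immediate broadcast (rather than abandonment or batching) of blocks destined to be Single.
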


\noindent We provide a complete proof of Theorem~\ref{thm:warmup} in Section~\ref{sec:warmup}. We give quantitative bounds on the profitability of the strategy in Lemma~\ref{lem:warmupprofitable}. We then prove our first main result:

\begin{theorem}\label{thm:main1} Let $\vec{s}_{-1}$ be the strategy profile where every miner uses longest-chain and tiebreaks against Miner $1$. Then for any $\vec{\alpha}$ with $\alpha_1 \geq \frac{3-\sqrt{5}}{2} \cdot \sum_j \alpha_j$,\footnote{Note that $\frac{3-\sqrt{5}}{2} \lessapprox 38.2\%$.} there is a statistically undetectable ensemble of strictly profitable strategies for Miner $1$ in the Nakamoto Consensus Game with respect to $\vec{s}_{-1}$.
\end{theorem}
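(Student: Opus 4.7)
The plan is to apply Proposition~\ref{prop:SPOK} to reduce to the two-player game between the attacker with hashrate $\alpha := \alpha_1 / \sum_j \alpha_j \geq (3 - \sqrt{5})/2$ and a single honest miner with the complementary hashrate, who breaks ties against the attacker. Because we start from NCG (so $\ell = 0$), natural mining produces no orphans at all. Following the logic of Proposition~\ref{prop:nostrong}'s proof in the two-player game, a view is $\ell'$-statistically undetectable exactly when the Pair/Single indicators across heights are i.i.d.~Bernoulli$(\beta(\ell'))$ for the appropriate $\beta(\ell')$. So the goal reduces to the following: for each $\varepsilon > 0$, exhibit an SP-Simple strategy whose view realizes an i.i.d.~Bernoulli$(\beta)$ Pair pattern for some small $\beta > 0$ (corresponding to some $\ell' \in (0,\varepsilon)$), and show that this strategy is strictly profitable.

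The strategy I would propose modifies Selfish Mining by inserting randomized, memoryless ``concessions.''  Concretely, the attacker plays standard Selfish Mining, but each time they acquire a new hidden block they flip an independent coin and with some probability release their oldest hidden block immediately (so no competing honest block has yet been published at that height) instead of continuing to build a hidden chain.  The probability of concession is chosen so that, viewed only through the realized sequence of Pair/Single indicators, the induced process is exactly i.i.d.~Bernoulli$(\beta)$.  An equivalent, more explicit construction: pre-generate i.i.d.~Bernoulli$(\beta)$ targets for each height, and drive the Selfish-Mining-like withholding/releasing decisions off of those targets; the attacker ``attempts'' a Pair only when the target says so, and concedes otherwise.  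The key structural property one needs from this strategy is that conditioning on the target sequence completely determines the view's Pair pattern, so this pattern is by construction i.i.d.~Bernoulli$(\beta)$.

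For profitability, I would set up the Markov chain on the attacker's internal state (hidden lead plus the most recent target draw), solve for its stationary distribution in closed form, and express the attacker's asymptotic share of the longest chain as the weighted sum of ``honest blocks orphaned by a successful Pair'' minus ``attacker blocks wasted when the target was Pair but the attacker lacked a block, or when concessions were drawn.''  As $\beta \to 0$ this share converges to $\alpha$, so the question is the sign of the first-order term.  I would show that the derivative of the attacker's share with respect to $\beta$ at $\beta = 0$ is positive iff $\alpha^2 - 3\alpha + 1 < 0$, whose smaller positive root is exactly $(3-\sqrt{5})/2$.  Intuitively: creating one additional Pair earns the attacker an expected $\alpha$ additional orphans of honest blocks, but sustaining the lead in a tie-broken-against regime requires roughly $(1-\alpha)/\alpha$ extra hidden-block investment, and equating these rates yields the stated quadratic.

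The main obstacle is step two: designing the concession mechanism so that the Pair indicators are exactly i.i.d.~Bernoulli$(\beta)$, not merely approximately so, despite the attacker's hidden lead being intrinsically correlated across heights.  A naive single-coin-flip modification will not suffice, because the correlation in the lead persists across multiple heights; the cleanest approach is probably to decompose the timeline into independent epochs, each carrying its own fresh Bernoulli$(\beta)$ target, and to argue via a coupling that the epoch boundaries are regenerative.  Once this is in place, the stationary analysis of the Markov chain and the extraction of the $(3-\sqrt{5})/2$ threshold reduce to a routine calculation analogous to the classical Selfish Mining profitability computation.
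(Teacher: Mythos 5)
Your skeleton is right — reduce to the two-player game via Proposition~\ref{prop:SPOK}, make the Pair/Single indicators exactly i.i.d.~Bernoulli$(\beta)$ for $\beta$ near $0$, and extract profitability from the quadratic $\alpha^2-3\alpha+1<0$, which is indeed the paper's threshold condition at $\beta=0$. But the step you flag as ``the main obstacle'' is the actual content of the theorem, and neither of your proposed mechanisms resolves it. A memoryless concession with a \emph{fixed} coin bias cannot produce i.i.d.~indicators: the probability that the attacker even has the \emph{opportunity} to realize a Pair at height $h$ depends on their hidden lead, which is correlated with whether height $h-1$ was a Pair — this is precisely why vanilla Selfish Mining is detectable (the paper's Propositions 2 and 3). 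Your pre-generated-target variant fails for the dual reason: when the target says Pair but Honest finds the first block at height $h$, the attacker cannot realize a Pair there, so the realized pattern undershoots the targets by a history-dependent amount. Regenerative epochs do not help either, because the correlation you need to kill lives \emph{inside} an epoch (across the heights spanned by one hidden lead), not across epoch boundaries. The paper's resolution is a history-dependent rescaling: define $P_h$ as the probability, conditioned on all information available at the exact moment $S(1),\ldots,S(h-1)$ become determined, that the attacker will create a \emph{safe} (non-pivotal) block at height $h$; label a safe block Pair with probability $\beta/P_h$ and label pivotal blocks and Honest-first heights Single (Definition~\ref{def:labelling}). Then $\Pr[S(h)=\text{Pair}\mid \text{everything}] = P_h\cdot(\beta/P_h)=\beta$ identically, which is what gives exact independence (Lemma~\ref{lem:undetectable-SP}). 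Making this work requires the safe/pivotal distinction (absent from your proposal but essential in the tiebreak-against regime, where a lone hidden block must be spent immediately as a Single), a proof that $P_h\geq\alpha^2$ so the coin is valid for all $\beta\leq\alpha^2$ (Lemma~\ref{lem:validity}), and a careful argument that labels are determined early enough to be realized by a broadcast schedule (Claim~\ref{claim:when-label}, Lemma~\ref{lem:broadcast-SP}).

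On profitability, your first-order-derivative-at-$\beta=0$ plan is a legitimate alternative to the paper's exact computation, and you land on the correct quadratic; but your stated intuition (equating an ``$\alpha$ orphans earned'' rate with a ``$(1-\alpha)/\alpha$ investment'' rate) does not actually produce $\alpha^2-3\alpha+1$, so the number is asserted rather than derived. The paper's route is more concrete: runs of $k\geq 2$ consecutive Pairs are always won by the attacker, a \emph{solo} Pair is won with probability at least $2\alpha-\alpha^2-\beta$ over $1-\beta$ (Claim~\ref{claim:solopairs}: the attacker loses only if Honest finds both of the next two blocks), and by Corollary~\ref{cor:paircounting} strict profitability is equivalent to winning more than a $1-\alpha$ fraction of Pairs, which at $\beta\to 0$ reads $2\alpha-\alpha^2>1-\alpha$, i.e.~$\alpha>\frac{3-\sqrt{5}}{2}$. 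As written, your proposal is missing the one idea that makes the construction exist, so it does not yet constitute a proof.
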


Our bound of $\frac{3-\sqrt{5}}{2}$ is not tight for our strategy, although we provide a complete description of an infinite Markov chain that can be simulated in order to nail the tight bound to higher precision.\footnote{We suspect our bound is not far from tight, however.} Interestingly, even our loose bound is not too far from the $1/3$-fraction of hashrate that is required for Selfish Mining to be profitable without concern for undetectability. This establishes that adding undetectability to a profitable strategy (if possible) need not significantly detriment its performance. 

We include a complete proof of Theorem~\ref{thm:main1} in Section~\ref{sec:main1}. Quantitative bounds on the profitability of this strategy are given in Claim~\ref{claim:solopairs} and Lemma~\ref{lem:pair}. Finally, we extend Theorem~\ref{thm:main1} to begin from any parameterized Nakamoto Consensus Game.

\begin{theorem}\label{thm:main2} Let $\vec{s}_{-1}$ be the strategy profile where every miner uses longest-chain and tiebreaks against Miner $1$. Then for any $\vec{\alpha}$ with $\alpha_1 \geq 0.382 \cdot \sum_j \alpha_j$, there is a statistically undetectable ensemble of strictly profitable strategies for Miner $1$ in the Nakamoto Consensus Game parameterized by $\ell$ with respect to $\vec{s}_{-1}$.\footnote{Proposition~\ref{prop:maingeneral} provides a (significantly) more precise sufficient condition on $\alpha,\beta'$ in order for a strictly profitable and undetectable strategy to exist. Theorem~\ref{thm:main2} follows by verifying that this condition holds for all $\alpha \geq 0.382$ and all $\beta'\in [0,1]$. It could be that our sufficient condition also holds for all $\alpha \geq \frac{3-\sqrt{5}}{2}$ and all $\beta \in [0,1]$, but our computational software is not precise enough to verify this. Recall that $\frac{3-\sqrt{5}}{2} \lessapprox 0.382$, so the gap in analysis for $\beta'>0$ and $\beta'=0$ is fairly small.}

\end{theorem}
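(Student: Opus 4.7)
The plan is to extend the approach of Theorem~\ref{thm:main1} by generalizing both the strategy and its Markov chain analysis to account for the baseline Pair rate $\beta'$ already induced by the latency $\ell$. The first step is to invoke Proposition~\ref{prop:SPOK} to reduce to a two-player $\ell$-NCG between Miner 1 (hashrate $\alpha_1$) and a consolidated Miner 2 (hashrate $(1-\prod_{i\geq 2}(1-\alpha_i\ell))/\ell$). In this two-player game, the honest view exhibits Pairs i.i.d.\ across heights with a probability $\beta'$ that depends on $\ell$ and the hashrates (as computed inside the proof of Proposition~\ref{prop:nostrong}). Miner 1's target is then to produce a view whose Pairs are i.i.d.\ at some larger probability $\beta$ --- matching the view of honest play in an $\ell'$-NCG for some $\ell' \in (\ell,\ell+\varepsilon)$ --- while strictly exceeding the honest reward $\alpha := \alpha_1/(\alpha_1+\alpha_2)$.

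Next, I would describe the strategy as a natural extension of the Theorem~\ref{thm:main1} construction. ``Natural'' Pairs, in which Miner 2 independently produces a block in the same round Miner 1 mined, are simply incorporated into the view, as they already look indistinguishable from the honest-latency Pairs they in fact are. To push the overall Pair rate from $\beta'$ up to the target $\beta$, Miner 1 additionally \emph{fabricates} Pairs using the block-withholding and selective-broadcast scheme of Theorem~\ref{thm:main1}, tuned by a parameter controlling the fraction of eligible heights at which a Pair is fabricated. Because the Theorem~\ref{thm:main1} fabrication already produces independent Pairs across heights, and natural Pair events are independent of Miner 1's private state (they depend only on Miner 2's coin flips), the combined distribution on the view is Pairs i.i.d.\ at rate $\beta$, as needed for undetectability with respect to an $\ell'$-NCG for a suitable $\ell' > \ell$. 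Letting the fabrication fraction tend to $0$ gives the full ensemble.

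For profitability, I would set up an infinite Markov chain on Miner 1's hidden-block state, entirely analogous to the one in Section~\ref{sec:main1}, but with transition probabilities depending on $\beta'$ in addition to $\alpha$ (natural Pairs can short-circuit some of the withholding transitions). Solving this chain yields Miner 1's reward as an explicit rational function $R(\alpha,\beta,\beta')$, and the sufficient condition for the existence of a profitable $\beta > \beta'$ is exactly what Proposition~\ref{prop:maingeneral} formalizes.

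The main obstacle is then verifying this condition for every $(\alpha,\beta') \in [0.382, 1/2] \times [0,1]$. Since $R$ has unwieldy dependence on $\beta'$, I would rely on (a) a monotonicity argument in $\alpha$ reducing the verification to the boundary $\alpha = 0.382$, and (b) a careful numerical sweep over $\beta'$ --- or equivalently, a closed-form lower bound on the derivative $\partial R/\partial \beta$ evaluated at $\beta = \beta'$ that stays bounded away from $0$. This numerical step is exactly where the small gap between the clean analytic $(3-\sqrt{5})/2$ bound for $\beta' = 0$ and the $0.382$ bound for general $\beta'$ arises: the pure-numerics verification across all $\beta' \in [0,1]$ sacrifices a small amount of precision.
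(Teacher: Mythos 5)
Your high-level plan matches the paper's skeleton (reduce to two players via Proposition~\ref{prop:SPOK}, absorb natural Pairs and fabricate additional ones, then verify a $\beta$-independent sufficient condition numerically), but there is a genuine gap at the heart of the undetectability argument. You assert that because natural Pair events ``depend only on Miner 2's coin flips,'' the fabricated and natural Pairs combine into an i.i.d.\ rate-$\beta$ process essentially for free. This is exactly the step that fails without new ideas. After a \emph{Forced} (natural) Pair at height $h$, the honest party already holds a block of height $h$, so if Attacker sole-mines the next block it is immediately pivotal and must be broadcast as a Single; the only way height $h+1$ can be a Pair is via another Forced Pair, which occurs with probability $\beta' < \beta$. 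So conditioned on the Attacker's full information, the Pair rate at the next height is sometimes pinned strictly below the target, and no rescaling of a fabrication coin can fix it. The paper's resolution is structural: the labeling probabilities $(\beta - Q_h\big\rvert_s)/P_h\big\rvert_s$ are conditioned \emph{only on the publicly visible state sequence} $s$, not on the Attacker's private information (e.g.\ whether the last Pair was Forced), and validity is then established by a Bayesian analysis over a partition of histories into ``worlds,'' culminating in the bound $P_{h+1}\big\rvert_s + Q_{h+1}\big\rvert_s \geq \beta$ for all $s$, which in turn forces the stronger constraint $\beta \leq \alpha'^2/2$ (versus $\beta \leq \alpha^2$ in Theorem~\ref{thm:main1}). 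Your proposal contains no mechanism for masking Forced versus Chosen Pairs, so as written the induced view is detectably non-i.i.d.

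On the reward side you also diverge from the paper, and in a direction that creates practical difficulty: you propose solving an infinite Markov chain to get a closed-form reward $R(\alpha,\beta,\beta')$, but even for $\beta'=0$ the paper only specifies (and simulates) such a chain rather than solving it. The paper instead invokes the $+1/-\lambda$ accounting of Sapirshtein et al.\ and lower-bounds the true strategy by a simpler ``Short Selfish Mining'' strategy whose one-excursion expected reward is computable in closed form and independent of $\beta$; this is what yields the explicit condition of Proposition~\ref{prop:maingeneral}. Your monotonicity-in-$\alpha$ reduction to the boundary $\alpha = 0.382$ is also not established anywhere and would need proof; the paper simply verifies the condition numerically over the whole region.
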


At the end of Section~\ref{sec:main1}, we overview ways in which the proof of Theorem~\ref{thm:main2} requires additional complexity, and defer a complete proof to Appendix~\ref{app:general}. In summary, Theorem~\ref{thm:main2} establishes that a strategic miner can strictly profit against honest miners who tiebreak against them in a way so that the view appears as though all miners are honest but with an arbitrarily small increase to latency.

\section{Warmup: Tiebreaking in favor of Strategic Miner}\label{sec:warmup}
In this section, we prove Theorem~\ref{thm:warmup}, which introduces some of the key ideas of our analysis. 

For simplicity of notation in this section and in all remaining technical sections, we let $n=2$ and $\alpha:=\alpha_1/(\alpha_1+\alpha_2)$. This is w.l.o.g.~due to Proposition~\ref{prop:SPOK}. We will also refer to Miner $1$ as Attacker and Miner $2$ as Honest. We first remind the reader of the concept of a state, specialized to $n=2$.

\begin{definition}[State]
    When $n=2$, observe that the \emph{state} of a height $h$, $S(h)$, is Pair if and only if both players create a block of height $h$, and Single if only one player creates a block of height $h$. We let $S(0)=\text{Single}$, namely there is a unique genesis block.
\end{definition}

We will describe our strategy first as labeling all heights as Single or Pair, and then describe the actual broadcasting strategy to match it. The outline of this section is as follows: Definition~\ref{def:warmuplabelling} describes the labeling strategy. Lemma~\ref{lem:warmupbroadcast-SP} specifies how the labeling strategy can be implemented via a broadcasting schedule. Lemma~\ref{lem:warmupundetectable-SP} argues that this strategy achieves undetectability. Finally, Lemma~\ref{lem:warmupprofitable} argues that the strategy is strictly profitable.

Recall that our goal is to produce a view so that each height is Pair with probability exactly $\beta$, for some $\beta:=\frac{\alpha \cdot \ell'\cdot (1-\alpha) \cdot \ell'}{1-(1-\alpha )\cdot \ell'\cdot \alpha \cdot \ell'}$,\footnote{This computation of $\beta$ follows by observing that in a two-player \nmglp, both players produce a block during the same timestep with probability $\frac{\alpha \cdot \ell'\cdot (1-\alpha) \cdot \ell'}{1-(1-\alpha )\cdot \ell'\cdot \alpha \cdot \ell'}$, independently.} and some $\ell'$ arbitrarily close to $0$ (which corresponds to $\beta$ arbitrarily close to $0$ as well).

\begin{definition}[Labeling Strategy]\label{def:warmuplabelling} Define $P_h$ to be the probability that Attacker creates the first block of height $h$, conditioned on all information available as of the moment that the first block of height $h-1$ is created. Then:
\begin{itemize}[nosep]
\item If Honest creates the first block at height $h$, label $h$ as Single.
\item If Attacker creates the first block at height $h$, label $h$ as Pair with probability $\beta/P_h$, and Single otherwise. 
\end{itemize}
\end{definition}

Observe that Attacker can easily compute $P_h$ as a function of the number blocks it is hiding: If Attacker has $i$ hidden blocks at the moment the first block of height $h-1$ is created, Attacker will mine the first block of height $h$ unless Honest mines $i+1$ blocks in a row. That is, $P_h=1-(1-\alpha)^{i+1}$. Observe also that this expression is at least $\alpha$ for all $i\geq 0$, so the labelling strategy is valid (that is, $\beta/P_h$ is a valid probability) as long as $\beta \leq \alpha$. 




Next, we need to define how to implement the proposed labeling strategy with an actual broadcasting strategy in the game. In particular, we need to specify a broadcasting schedule that realizes the labels output by the labeling strategy.

\begin{lemma}[Implementability]\label{lem:warmupbroadcast-SP}
    The following \emph{broadcasting strategy} realizes the labels output by the labeling strategy in Definition~\ref{def:warmuplabelling}:
        \begin{itemize}[nosep]
            \item If $S(h)$ is labeled as Single (and Attacker produced a block of height $h$): broadcast the block of height $h$ the moment a block of height $h-1$ is broadcast (which might mean broadcasting the moment that Attacker produces a block of height $h$).
            \item If $S(h)$ is labeled as Pair: broadcast the block of height $h$ the moment an Honest block of height $h$ is published.
        \end{itemize}
\end{lemma}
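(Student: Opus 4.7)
The plan is to verify (i) that the broadcast strategy only ever broadcasts blocks Attacker actually possesses (immediate, since each rule is conditioned on Attacker having produced the block in question), and (ii) that the resulting view matches the labels assigned by Definition~\ref{def:warmuplabelling} at every height. I would organize the argument around a structural invariant about Attacker's hidden blocks.

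First I would prove by induction on timesteps the invariant that, at every point in time, Attacker's hidden blocks form a contiguous chain $k+1, k+2, \ldots, k+j$ directly above the current public height $k$, and if $j \geq 1$ then the lowest hidden block (at height $k+1$) is labeled Pair. The inductive step considers each event type. When Attacker mines a new block at height $k+j+1$ and labels it Pair, the chain simply extends. When Attacker labels it Single, the trigger ``block of height $k+j$ is broadcast'' is already satisfied precisely when $j=0$ (so Attacker broadcasts at creation, the public height advances to $k+1$, and the hidden chain stays empty) and is unsatisfied when $j \geq 1$ (so the new Single extends an existing hidden chain whose anchor remains Pair). Hence a Single-labeled block can never be the anchor of a nonempty hidden chain.

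Next I would analyze the cascade triggered by Honest publishing at height $k+1$. The Pair trigger for Attacker's anchor at $k+1$ fires, so Attacker broadcasts that block; this makes block $k+1$ public, which satisfies the Single trigger for Attacker's hidden block at $k+2$ if it is Single, and so on. The cascade publishes a maximal contiguous prefix of the hidden chain starting at the anchor, terminating either at the top of the chain or immediately below the next Pair-labeled hidden block (which continues to wait for Honest to mine at its height). All of this happens within the same timestep as Honest's broadcast, so the public height jumps atomically.

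Finally I would verify the label matches the view at every height $h$ by three cases. If $h$ is labeled Single because Honest created the first block at $h$, the invariant forces Attacker to have no block at $h$ (any Attacker block at $h$ would be the Pair-labeled anchor of a hidden chain at $h$, contradicting Honest being first), so the view at $h$ is Single. If $h$ is labeled Single because Attacker was first at $h$ and chose Single, then block $h-1$ and block $h$ become public within the same cascade timestep; in every earlier timestep the public height was at most $h-2$, so Honest was always mining at some height $\leq h-1 < h$, and in every later timestep the public height is $\geq h$, so Honest mines at $h+1$ or above. If $h$ is labeled Pair, the cascade rules guarantee that the public height eventually reaches $h-1$ and then stalls at the Pair anchor at $h$ until Honest is selected and mines at $h$, at which moment Attacker's Pair trigger fires. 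The main delicacy is the Attacker-first Single case, whose correctness relies on the atomic nature of the cascade: block $h-1$ and Attacker's block $h$ are broadcast within the same timestep, leaving no window for Honest to race in a competing block at height $h$.
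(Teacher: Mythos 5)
Your proof is correct in substance and considerably more detailed than the paper's, which disposes of the lemma in two sentences: it observes that the label of height $h$ is determined the instant Attacker creates the block of height $h$ (so the broadcast rule is always actionable in time), and then asserts that Singles cannot be contested while Pairs certainly will be. Your hidden-chain invariant (a contiguous chain directly above the public height, anchored at a Pair-labeled block) and the ensuing cascade analysis make explicit exactly the structure the paper leaves implicit, and your Honest-first and Pair cases are airtight. The extra rigor buys something: it is essentially the bookkeeping the paper is forced to carry out explicitly in Section~\ref{sec:main1}, where labels are \emph{not} known at creation time.

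One step is stated imprecisely, though the conclusion survives. In the Attacker-first-Single case you assert that blocks $h-1$ and $h$ always become public in the same cascade timestep, and that in every earlier timestep the public height was at most $h-2$. This fails when Attacker creates the block of height $h$ while holding no hidden chain: there the public height is already $h-1$ at the start of that round, block $h-1$ was broadcast in a strictly earlier round, and block $h$ is broadcast at creation, so the ``atomicity leaves no window for Honest'' justification does not apply verbatim. The sub-case is still fine, for a different reason: in any round in which the public height is $h-1$ and no block of height $h$ yet exists, whichever miner is selected creates the \emph{first} block of height $h$, so conditioning on Attacker being first at $h$ already excludes an Honest block at height $h$ in that window; and once Attacker's block at $h$ is public, Honest mines only at heights $\geq h+1$. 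Patch that sub-case and the argument is complete.
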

\begin{proof}
    Clearly, if this can be implemented, it will cause the state of every block to match its label (because there will clearly not be a chance for Honest to conflict with the Single broadcasts, and Honest clearly will conflict with the Pairs). 
    
    Moreover, this strategy can be implemented: the state of $h$ is determined the moment a block of height $h$ is created. If you create a block of height $h$, you learn its state immediately, and can either broadcast it immediately (if labeled Single), or broadcast once contested (if labeled Pair).
\end{proof}

\begin{lemma}[Undetectability]\label{lem:warmupundetectable-SP} The broadcasting strategy of Lemma~\ref{lem:warmupbroadcast-SP} applied to the labeling strategy of Definition~\ref{def:warmuplabelling}, for any $\beta \leq \alpha$, played against a longest-chain strategy that breaks ties in favor of Attacker, produces a view where each height has state Pair independently with probability $\beta$.
\end{lemma}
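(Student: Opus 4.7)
The plan is to leverage Lemma~\ref{lem:warmupbroadcast-SP}, which already guarantees that the state $S(h)$ of every height equals the label assigned to it by the labeling strategy. Thus the task reduces entirely to showing that the labels produced by the labeling strategy of Definition~\ref{def:warmuplabelling} are i.i.d.\ Bernoulli($\beta$) across heights. There is nothing further to check about the actual broadcast actions of Attacker; the question is purely about the distribution of the labels.

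To compute the marginal and prove independence simultaneously, I would condition on the entire history up through the moment the first block of height $h-1$ is created (call this filtration $\mathcal{F}_{h-1}$). Crucially, $P_h$ is $\mathcal{F}_{h-1}$-measurable: by the remark following Definition~\ref{def:warmuplabelling}, if Attacker holds $i$ hidden blocks at that moment then $P_h = 1-(1-\alpha)^{i+1}$, and $i$ is determined by $\mathcal{F}_{h-1}$ (since past labels and broadcasts dictate how many blocks Attacker is currently withholding). Conditional on $\mathcal{F}_{h-1}$, by definition of $P_h$, Attacker produces the first block of height $h$ with probability $P_h$; the labeling strategy then independently flips a coin that comes up Pair with probability $\beta/P_h$. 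Hence
\[
\Pr[\text{label}(h) = \text{Pair} \mid \mathcal{F}_{h-1}] \;=\; P_h \cdot \frac{\beta}{P_h} \;+\; (1-P_h)\cdot 0 \;=\; \beta,
\]
independent of $\mathcal{F}_{h-1}$. Iterating this over $h$ yields that the label sequence is i.i.d.\ Bernoulli($\beta$), which combined with Lemma~\ref{lem:warmupbroadcast-SP} gives the desired conclusion about the states $S(h)$.

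Two small verifications remain. First, validity of the labeling: we need $\beta/P_h \in [0,1]$, which follows because $P_h = 1-(1-\alpha)^{i+1} \ge 1-(1-\alpha) = \alpha \ge \beta$ for every $i \ge 0$, using the hypothesis $\beta \le \alpha$. Second, the labeling coins are drawn independently of Honest's and Attacker's mining randomness and of each other, so they can be safely absorbed into the conditional-probability calculation above without disturbing the $\mathcal{F}_{h-1}$ argument.

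The only conceptual subtlety — and the one worth being careful about — is that $P_h$ is itself a random variable depending on the past labels (through the hidden-block count $i$). Naively one might worry that this dependence introduces correlations between $S(h)$ and prior states. The point is precisely that the design of the labeling strategy cancels $P_h$ out: no matter what $P_h$ turned out to be, the conditional probability of a Pair is exactly $\beta$. Once this cancellation is articulated cleanly under the filtration $\mathcal{F}_{h-1}$, the i.i.d.\ structure — and hence undetectability — falls out immediately.
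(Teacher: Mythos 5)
Your proposal is correct and follows essentially the same route as the paper's proof: the paper likewise conditions on all information available at the instant $S(h-1)$ is fixed (your $\mathcal{F}_{h-1}$), observes that the coin-flip probability $\beta/P_h$ cancels $P_h$ so that $S(h)$ is Pair with probability exactly $\beta$ regardless of the history, and concludes independence. Your explicit verification that $\beta/P_h\in[0,1]$ matches the validity remark the paper makes just after Definition~\ref{def:warmuplabelling}.
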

\begin{proof}
    Observe that at the moment that a block of height $h-1$ is first created, $S(h)$ is equal to Pair with probability exactly $\beta$. This is true not only conditioned on $S(1),\ldots, S(h-1)$, but also conditioned on any additional information known at the instant $S(h-1)$ is fixed. Therefore, for all $S(1),\ldots, S(h-1)$, the probability that $S(h)$ is Pair conditioned on $S(1),\ldots, S(h-1)$ is exactly $\beta$. Therefore, the distribution of states is independent each round, and equal to Pair w.p.~$\beta$.
\end{proof}
\noindent
Lemma~\ref{lem:warmupundetectable-SP} concludes undetectability, which holds for $\beta$ arbitrarily close to $0$ (and therefore $\ell'$ arbitrarily close to $0$ as well). 

\begin{lemma}[Profitability]\label{lem:warmupprofitable}
The broadcasting strategy of Lemma~\ref{lem:warmupbroadcast-SP} applied to the labeling strategy of Definition~\ref{def:warmuplabelling}, for any $\beta \leq \alpha$, when played against a longest-chain strategy that breaks ties in favor of Attacker, achieves reward $\alpha + \alpha \beta > \alpha$.
\end{lemma}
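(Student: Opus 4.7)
My plan is to compute Miner 1's long-run fraction of longest-chain blocks by separating the analysis by state (Pair vs.\ Single) and tracking both how many \emph{heights} fall into each state and how many \emph{mining rounds} are consumed at each. By Lemma~\ref{lem:warmupundetectable-SP}, in the long run a $\beta$ fraction of heights are Pair and a $1-\beta$ fraction are Single. Because Honest tiebreaks in favor of Attacker, the single block at every Single height enters the longest chain (whoever mined it), while at every Pair height Attacker's block enters the longest chain and Honest's competing block is orphaned. So Attacker's reward equals the long-run fraction of heights $h$ such that either $h$ is Pair or $h$ is Single and created by Attacker.

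The key step is counting Attacker blocks at Single heights. I would first observe that each Pair height consumes exactly two mining rounds (one by each player) and each Single height consumes exactly one, so over $H$ heights the total number of mining rounds satisfies $T = (1+\beta) H$ in the limit. Since each round selects Attacker independently with probability $\alpha$, the strong law of large numbers gives that Attacker mines $\alpha T = \alpha(1+\beta)H$ blocks total. Exactly $\beta H$ of these sit at Pair heights (precisely one Attacker block per Pair), leaving $\alpha(1+\beta)H - \beta H = (\alpha + \alpha\beta - \beta)H$ Attacker blocks at Single heights. Summing the two contributions gives $\beta H + (\alpha + \alpha\beta - \beta)H = (\alpha + \alpha\beta)H$ Attacker blocks in the longest chain out of $H$ total, for a reward of $\alpha + \alpha\beta$. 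Strict profitability then follows since we may take $\beta > 0$.

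The main subtlety is verifying that every block Attacker mines actually ends up in the longest chain (rather than being orphaned or perpetually withheld), which is what licenses the identification of Attacker blocks with an $\alpha$ fraction of all mined blocks. For this I would argue, by induction on height, that the broadcast rule from Lemma~\ref{lem:warmupbroadcast-SP} cascades correctly: a Single block at height $h$ is released the moment the block at height $h-1$ is released, which is triggered either immediately (if $h-1$ is Honest's or was already public) or by a Pair at some lower height being contested by a fresh Honest block, which then unblocks the entire stack of hidden Attacker blocks above. Once broadcast, tiebreaking in favor of Attacker ensures Honest mines on top of Attacker's block, so the longest chain eventually absorbs all of Attacker's blocks. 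Combining this invariant with the strong law of large numbers applied to the i.i.d.\ state sequence and the i.i.d.\ mining sequence yields $\lim_{t \to \infty} R_t^1 = \alpha + \alpha\beta$ almost surely, which is the claimed bound.
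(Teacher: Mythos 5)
Your proposal is correct and is essentially the paper's argument: the paper proves this lemma by invoking Lemma~\ref{lem:pair} with $\delta=1$ (the strategy wins all Pairs), and the proof of Lemma~\ref{lem:pair} is exactly your counting --- $\alpha(1+\beta)$ Attacker blocks per height, of which $\beta$ sit in Pairs and $\alpha(1+\beta)-\beta$ in Singles, summing to $\alpha+\alpha\beta$. Your added verification that every Attacker block eventually enters the longest chain corresponds to the ``eventually broadcasts all blocks'' hypothesis of Lemma~\ref{lem:pair}, which the paper discharges via Lemma~\ref{lem:warmupbroadcast-SP}.
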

\begin{proof}
The proof is a direct application of Lemma~\ref{lem:pair} (stated and proved immediately below in Section~\ref{sec:technical}), observing that the strategy wins all Pairs.
\end{proof}

\begin{proof}[Proof of Theorem~\ref{thm:warmup}]
The proof follows from Proposition~\ref{prop:SPOK}, Lemma~\ref{lem:warmupundetectable-SP}, and Lemma~\ref{lem:warmupprofitable}.
\end{proof}

\subsection{Technical Lemmas}\label{sec:technical}
Below are two technical lemmas that we use in this section and will reuse in later sections as well. 

\begin{lemma}\label{lem:pair} Let $n=2$, and consider any strategy that eventually broadcasts all blocks when playing against a longest-chain. Then, if this strategy wins a $\delta$ fraction of Pairs, and Pairs occur in a $\beta$ fraction of rounds, it achieves expected reward:
$$\alpha - (1-\alpha-\delta)\cdot \beta.$$
\end{lemma}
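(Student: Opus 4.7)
The plan is to do a long-run accounting argument: over $T$ rounds, count how many blocks each side has produced in total, count how many distinct heights those blocks occupy, and then read off the share of Attacker's blocks in the longest chain.

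First, I would note that in NCG with $n=2$, exactly one miner is selected per round, so after $T$ rounds there are exactly $T$ blocks, of which (by the law of large numbers) $\alpha T$ were created by Attacker and $(1-\alpha)T$ by Honest. The hypothesis that the strategy eventually broadcasts every block ensures that all of these blocks appear in the view, so nothing is ``lost.''

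Next I would convert rounds into heights. A Single height holds one block while a Pair height holds two, so if $H$ denotes the number of heights reached after $T$ rounds and a $\beta$ fraction of them are Pair, then
\[
(1-\beta)H + 2\beta H \;=\; T, \qquad \text{so} \qquad H \;=\; \frac{T}{1+\beta}.
\]
Since the longest chain contains exactly one block per height, its length is also $H$.

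Now I would split Attacker's blocks into those at Pair heights and those at Single heights. Every Pair height contains exactly one Attacker block (the withheld one that created the conflict), so Attacker has $\beta H$ blocks at Pair heights. The remaining $\alpha T - \beta H = H(\alpha(1+\beta)-\beta)$ blocks of Attacker lie at Single heights, and each such Single-height block is automatically in the longest chain. Of the $\beta H$ Pair heights, Attacker wins a $\delta$-fraction by assumption, contributing another $\delta\beta H$ blocks to the longest chain. Summing and dividing by the longest-chain length $H$ yields
\[
\text{reward} \;=\; \alpha(1+\beta) - \beta + \delta\beta \;=\; \alpha - (1-\alpha-\delta)\beta,
\]
as claimed.

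There is no real obstacle here beyond being careful about two bookkeeping points: (i) that every created block eventually appears in the view (this is exactly the hypothesis), so the $\alpha T$ and $(1-\alpha)T$ counts are sharp, and (ii) that no height hosts more than two blocks, which is automatic for $n=2$ because each round produces a single block and only Attacker withholds. Everything else is linear algebra on limiting frequencies, and the statement of the lemma presumes the relevant limits exist.
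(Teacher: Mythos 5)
Your proposal is correct and follows essentially the same accounting argument as the paper's proof: count Attacker's blocks per height of the chain ($\alpha(1+\beta)$), subtract the $\beta$ that sit at Pair heights to get the guaranteed Single-height contribution, and add back $\delta\beta$ for the Pairs won. The only difference is cosmetic --- you normalize by $T$ timesteps and convert to $H = T/(1+\beta)$ heights explicitly, while the paper works directly with per-height frequencies.
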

\begin{proof}
Say that the strategy results in a $\beta$ fraction of rounds having Pairs. Observe that an $\alpha$ fraction of all blocks are created by Attacker, and a $(1-\alpha)$ fraction of all blocks are Honest. Observe further that every Single round has exactly one block, and every Pair round has one block from each creator. Therefore, we see that $\alpha\cdot(1+\beta)$ blocks must be produced by Attacker per-round. Moreover, $\beta$ blocks per round of Attacker are in Pair rounds, leaving $\alpha \cdot (1+\beta) - \beta$ blocks per round that are Attacker in Single rounds. This means that Attacker wins $\alpha \cdot (1+\beta) - \beta$ fraction of rounds because they are Attacker Single, and an additional $\delta \cdot \beta$ fraction of rounds because they win $\delta$ fraction of Pairs.
\end{proof}

\begin{corollary}\label{cor:paircounting} An SP-Simple strategy that eventually broadcasts all blocks and wins a $\delta$ fraction of Pairs achieves reward strictly better than $\alpha$ if and only if $\delta > 1-\alpha$.
\end{corollary}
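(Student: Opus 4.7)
The plan is to derive Corollary~\ref{cor:paircounting} as an essentially immediate consequence of Lemma~\ref{lem:pair}. First I would invoke Proposition~\ref{prop:SPOK} to reduce to the two-player case: since the strategy is SP-Simple, the strategic miner's reward in the $n$-player game (against honest miners using a common longest-chain tiebreaking rule) is the same as its reward in the two-player game with $\alpha = \alpha_1/\sum_j\alpha_j$, which is precisely the setting of Lemma~\ref{lem:pair}.

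Next I would apply Lemma~\ref{lem:pair} directly, using the hypothesis that the strategy eventually broadcasts all blocks. This gives expected reward
\[
\alpha - (1-\alpha-\delta)\cdot \beta \;=\; \alpha + \bigl(\delta - (1-\alpha)\bigr)\cdot \beta,
\]
where $\beta$ is the fraction of rounds with state Pair. The rearrangement in the second expression makes the corollary transparent.

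Finally I would argue the ``iff'' by case analysis on the sign of the second term. Since $\beta > 0$ (the strategy produces Pairs, as implicit in the hypothesis that it wins a $\delta$-fraction of them), the quantity $(\delta - (1-\alpha))\cdot \beta$ is strictly positive if and only if $\delta > 1-\alpha$, and otherwise is at most $0$. Hence reward is strictly greater than $\alpha$ exactly when $\delta > 1-\alpha$.

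There is no real obstacle here; the only mild subtlety is the implicit assumption $\beta > 0$, which is justified by interpreting ``wins a $\delta$ fraction of Pairs'' as referring to a nonempty population of Pairs (the degenerate $\beta = 0$ case trivially gives reward $\alpha$, consistent with the iff since no $\delta$ can then be strictly positive in expected contribution). I would therefore flag this briefly but not belabor it.
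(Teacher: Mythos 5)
Your proof is correct and is essentially the paper's own argument: the paper likewise derives the corollary in one line by rearranging the reward expression $\alpha - (1-\alpha-\delta)\beta$ from Lemma~\ref{lem:pair} and reading off the sign of the coefficient of $\beta$. Your explicit flagging of the implicit $\beta>0$ assumption is a small, reasonable addition that the paper leaves tacit.
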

\begin{proof} Observe that $\alpha  - (1-\alpha - \delta)\cdot \beta >\alpha \Leftrightarrow -(1-\alpha - \delta) > 0 \Leftrightarrow \delta > 1-\alpha$.
\end{proof}

\section{Main Result I: Tiebreaking against Strategic Miner}\label{sec:main1}
Here, we prove our first main result: a profitable and statistically undetectable Selfish Mining strategy when all miners tiebreak against the strategic miner. In the same canonical model where Selfish Mining was first introduced, we show that a modified strategy is strictly profitable and also statistically undetectable. We begin with some concepts, and will also reuse concepts/definitions from our warmup.

\subsection{Concepts}

\begin{definition}[Pivotal Block]
    A block of height $h$ created by Attacker is \emph{pivotal} if height $S(h-1)$ is Pair, and when the honest block of height $h-1$ is broadcast, Attacker does not have a block of height $h+1$. That is, Attacker's block of height $h$ is pivotal if there is a conflict at height $h-1$, and Attacker needs to use this block in order to win it.
\end{definition}

\begin{definition}[Safe Block]
    A block created by Attacker is \emph{safe} if it is not pivotal. Observe that a block of height $h$ can be safe if either $S(h-1)$ is single, or if Attacker finds a block of height $h+1$ before Honest finds a block of height $h-1$.
\end{definition}

Intuitively, just like Selfish Mining, we really want to broadcast all pivotal blocks immediately, so that we don't risk losing a conflict that we can win right now. If a block is safe, we don't need to broadcast it immediately, but our strategy may choose to do so anyway to maintain undetectability.

\subsection{The Strategy}
We again first describe the strategy as labeling all heights as Single or Pair, and then describe the actual broadcasting strategy to match it. 

\begin{definition}[Labeling Strategy]\label{def:labelling}
    Define $P_h$ to be the probability that Attacker creates a block of height $h$ and that block is safe, conditioned on all information available \emph{as of the first moment we know} $S(h')$ for all $h' < h$.\footnote{The value of $P_h$ is easily computable; see Table~\ref{tab:Ph} for explicit expressions.} The labeling is strategy is then as follows:
    \begin{itemize}[nosep]
        \item If Honest first creates a block at height $h$, then $h$ is labeled Single.
        \item If Attacker first creates a block at height $h$, and it is pivotal, then $h$ is labeled Single.
        \item If Attacker first creates a block at height $h$, and it is safe, then $h$ is labeled Pair with probability $\beta/P_h$ and Single otherwise.
    \end{itemize}
    In particular, the instant that we have the necessary information to label a height, we label it (and the instant we have all the necessary information to know whether to flip a coin, we flip it).
\end{definition}

Lemma~\ref{lem:validity} will ensure that the coin flip probabilities in the third bullet are valid. Before that, we will first show the following useful claim about the moment when there is enough information to decide the label of a given height. Importantly, we \textit{don't} necessarily know $S(h)$ the moment its created (unlike in the warmup, where this information was sufficient), so we have to carefully track when it is first determined, and the information available at that time.

\begin{claim}[When labels are determined]\label{claim:when-label}
    For all $h$, there is sufficient information to determine $S(h)$ by the moment that a block of height $h+1$ is created (possibly sooner).
\end{claim}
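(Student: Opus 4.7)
The plan is to proceed by induction on $h$, splitting into cases on who creates the first block of height $h$ and, when Attacker does, on the state of the preceding height. The inductive hypothesis is the claim applied to $h-1$: $S(h-1)$ is determined by the moment the first block of height $h$ is created, so in every case below $S(h-1)$ is already known by the instant a candidate height-$h$ block appears. The base case $S(0)=\text{Single}$ is given by definition.

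If Honest first creates a block of height $h$, it is broadcast immediately by the longest-chain protocol and $S(h)=\text{Single}$ per Definition~\ref{def:labelling}, and this moment strictly precedes any possible block of height $h+1$ (mining at $h+1$ requires a parent at height $h$, which did not exist before this instant). If Attacker first creates the block of height $h$ and $S(h-1)=\text{Single}$, pivotality is vacuous since its defining condition requires $S(h-1)=\text{Pair}$, so the block is automatically safe; Attacker computes $P_h$ from the current configuration (in particular the number of its hidden blocks, as discussed after Definition~\ref{def:warmuplabelling}) and flips the prescribed coin to fix $S(h)$ on the spot.

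The case requiring care is when Attacker first creates a block $b$ of height $h$ with $S(h-1)=\text{Pair}$. Here $b$ is pivotal iff, at the moment the honest block of height $h-1$ is broadcast, Attacker has not yet mined at height $h+1$. Attacker watches the race between the honest broadcast at height $h-1$ (which may have already occurred, in which case $b$ is pivotal from the instant of its creation) and its own next block at height $h+1$: if the honest broadcast happens first, $b$ is pivotal and $S(h)=\text{Single}$ is fixed strictly before any height-$h+1$ block exists; if Attacker mines at $h+1$ first, $b$ is safe and the coin flip prescribed by Definition~\ref{def:labelling} can be executed at that same instant, fixing $S(h)$ exactly when the first height-$h+1$ block appears.

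The main obstacle is verifying that no other height-$h+1$ block can sneak in before this race resolves. The key observation is that Honest cannot mine at height $h+1$ without first seeing a broadcast block at height $h$, and up to the resolution of the race no such broadcast has occurred: the only candidate, $b$, is withheld by Attacker until its pivotal/safe status has been decided. Hence the first block of height $h+1$ is either mined by Attacker (which itself settles the race in favor of ``safe'') or mined by Honest only after Attacker has already chosen to broadcast $b$ (which in turn presupposes the race was already settled). Combining the three cases completes the induction.
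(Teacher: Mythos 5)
Your proof is correct and follows essentially the same route as the paper's: induction on $h$, with the same case split on who first mines at height $h$ and on $S(h-1)$, and the same observation that when $S(h-1)$ is Pair the label is settled by the race between Attacker finding a block at height $h+1$ and Honest reaching height $h-1$. Your added check that Honest cannot produce a height-$(h+1)$ block before this race resolves (since the only height-$h$ block is withheld) is a valid refinement of a step the paper leaves implicit.
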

\begin{proof}
    We'll proceed by induction on $h$. The base case holds vacuously when $h=0$, since the genesis block is Single by definition.
    
    Let's first consider the moment that the block of height $h$ is created. If the first block of height $h$ is created by Honest, then $h$ is certainly Single (and we learn this immediately when the first block of height $h$ is created, which is before the first block of height $h+1$ is created). 

    If instead, the first block of height $h$ is created by Attacker, then there are a few cases to consider. Recall that by the inductive hypothesis, we certainly know the labels of all blocks of height $h' < h$. This in particular implies that we have all the necessary information to compute $P_h$, and can already flip any coins that might be needed to label $h$. The only remaining uncertainty is around whether $h$ is safe or pivotal. Zooming in on $S(h-1)$:
    \begin{enumerate}[label=(\alph*),nosep]
        \item If $S(h-1)$ is Single, then $h$ is safe (by definition), and we learn this the moment that the first block at height $h$ is created.
        \item If $S(h-1)$ is Pair, and Honest has broadcast a block of height $h-1$, then $h$ is pivotal, and we learn this the moment $h$ is created, so we can label $S(h)$ as Single.
        \item If $S(h-1)$ is Pair, and Honest has \emph{not} yet broadcast a block of height $h-1$, then we do not yet know whether $h$ is safe or pivotal. We learn this the moment either of the following events happens (whichever comes first):
        \begin{itemize}
            \item Attacker finds another block (which will be at height $h+1$); then $h$ is safe.
            \item Honest finds a block at height $h-1$ (which may or may not require Honest to find multiple blocks); then $h$ is pivotal.
        \end{itemize}
        In particular, by the time there is a block of height $h+1$, we certainly know which case occurred first, and thus know whether $h$ is safe or pivotal. From here, we can flip the necessary coins to label $S(h)$. \qedhere
    \end{enumerate}
\end{proof}

\begin{corollary}
    For all $h$, $P_h$ can be evaluated by the time the first block of height $h$ is created.
\end{corollary}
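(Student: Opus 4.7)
The plan is to derive this corollary as an immediate consequence of Claim~\ref{claim:when-label} combined with the Markov structure of the game. The corollary is really a bookkeeping statement: since $P_h$ is defined as a conditional probability given the information at the first moment all earlier labels are known, I just need to verify that this ``first moment'' occurs by the time the first block of height $h$ is created, and then to note that $P_h$ is a deterministic function of the state at that moment.

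The key steps, in order, are as follows. First I would invoke Claim~\ref{claim:when-label} at index $h-1$ to conclude that $S(h-1)$ is determined no later than the creation of the first block of height $h$. Then I would apply Claim~\ref{claim:when-label} at each $h' \leq h-2$: it gives that $S(h')$ is determined by the creation of the first block of height $h'+1 \leq h-1$, which strictly precedes the creation of the first block of height $h$. Chaining these, by the time the first block of height $h$ is created, every label $S(h')$ with $h' < h$ has been determined. In particular, the ``first moment at which we know $S(h')$ for all $h' < h$'' referenced in Definition~\ref{def:labelling} is a well-defined instant, occurring weakly before the first block of height $h$ is created.

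Next I would observe that $P_h$ depends only on the game's internal state at that instant, namely Attacker's number of currently hidden blocks together with whether Honest has yet to broadcast the block that would make some hidden Attacker block pivotal. Conditional on this state, the remaining randomness (who mines each subsequent block, and in particular whether the first block of height $h$ is created by Attacker and is safe) is governed by the fixed per-step probabilities $\alpha$ and $1-\alpha$, so $P_h$ admits a finite closed-form expression — exactly the kind of expression enumerated in Table~\ref{tab:Ph}. This expression can be evaluated at the aforementioned instant, hence certainly by the time the first block of height $h$ is created.

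I do not expect a real obstacle; the only point of care is simply recognizing that the ``first moment'' in the definition of $P_h$ is a well-defined stopping time bounded above by the creation time of block $h$, which is precisely what Claim~\ref{claim:when-label} delivers. Once that is in hand, the computability of $P_h$ is just finite enumeration over the Markov state. Consequently the proof is essentially a one-line invocation of Claim~\ref{claim:when-label} together with a pointer to Table~\ref{tab:Ph}.
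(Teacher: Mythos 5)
Your proposal is correct and matches the paper's (implicit) argument: the corollary is stated without proof precisely because it follows immediately from Claim~\ref{claim:when-label} applied at heights $h'\leq h-1$, and indeed the proof of that claim already remarks that knowing all labels $S(h')$ for $h'<h$ suffices to compute $P_h$ and flip the needed coins. Your additional observations about the stopping time being well-defined and $P_h$ being a deterministic function of the Markov state are accurate elaborations of the same one-line argument.
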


\begin{lemma}[Validity]\label{lem:validity}
    The probabilities in Definition~\ref{def:labelling} are valid for any $\beta\leq \alpha^2$.
\end{lemma}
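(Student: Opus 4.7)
The plan is to verify that $P_h \geq \alpha^2$ under every possible history, which is sufficient since then $\beta/P_h \leq \alpha^2/P_h \leq 1$ whenever $\beta \leq \alpha^2$.

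By Claim~\ref{claim:when-label}, the moment $S(h-1)$ is fully determined always occurs before any block of height $h+1$ is created. Consequently, at that moment Attacker owns at most one block at height $\geq h$ (specifically, a block at height $h$ itself), and this only occurs when the safe-versus-pivotal race triggered by a Pair at $h-2$ was resolved in Attacker's favor (Attacker mined $h$ before Honest could create $h-2$ in the relevant race). I would organize the analysis into three mutually exclusive types based on the state at that moment. Type (A): Attacker has no hidden block of height $\geq h-1$, which covers Honest winning the $h-1$ race, Attacker's $h-1$ being pivotal (hence broadcast, causing Honest to abandon its shorter fork by following the strictly longer chain), and Attacker's $h-1$ being safe via $S(h-2)$ Single with the coin landing Single. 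Type (B): Attacker has exactly the block at $h-1$ hidden and Honest's chain tip is at $h-2$, which requires $S(h-2)$ Single, Attacker winning the $h-1$ race, and a Pair coin flip. Type (C): Attacker has both $h-1$ and $h$ hidden, which requires $S(h-2)$ Pair with Attacker's $h-1$ resolved safe by Attacker mining $h$ first.

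I would then compute $P_h$ in each type. In type (A), Attacker starts a fresh race for height $h$, wins with probability $\alpha$, and automatically gets a safe block since $S(h-1)$ is Single; hence $P_h = \alpha$. In type (C), Attacker already owns the first block at $h$ deterministically; if $S(h-1)$ was labeled Single then safety is automatic and $P_h = 1$, while if $S(h-1)$ was labeled Pair then safety requires Attacker to mine $h+1$ before Honest first creates $h-1$, which yields $P_h \geq \alpha$ since Honest is at most at $h-2$. Type (B) is the tight case: enumerating the identity of the next two mined blocks from the starting state (Attacker has $h-1$ hidden, Honest at $h-2$), only the (Attacker, Attacker) ordering yields an Attacker first block at $h$ that is also safe. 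In (Attacker, Honest), Attacker mines $h$ but Honest's $h-1$ arrives before Attacker can extend to $h+1$, rendering Attacker's $h$ pivotal; in the two (Honest, $\cdot$) orderings, Honest's $h-1$ is broadcast first, so any later Attacker-mined $h$ is again pivotal or Honest beats Attacker to $h$. This yields $P_h = \alpha^2$, and minimizing across all three types gives $P_h \geq \alpha^2$ uniformly.

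The main obstacle will be type (B), where I must carefully show that producing $h$ first is not enough: even after Attacker mines $h$, the block becomes pivotal unless Attacker also extends to $h+1$ before Honest reaches $h-1$, which rules out three of the four next-two-block orderings. A secondary subtlety concerns the consistency of a Single label sitting above a Pair label (the pivotal sub-case of type (A) and the Single sub-case of type (C)); here I would use that Honest follows a strictly longer chain regardless of tiebreak, so Attacker's timely broadcast of $h-1$ (and of the hidden $h-2$ from the Pair) causes Honest to abandon its fork before ever mining the conflicting block at height $h-1$.
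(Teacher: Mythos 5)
Your proof is correct and follows essentially the same route as the paper's: both reduce validity to the uniform bound $P_h \geq \alpha^2$ and establish it by a case analysis of the attacker's hidden-block configuration at the moment the conditioning information for $P_h$ is fixed. Your (A)/(B)/(C) decomposition is a mild reorganization of the paper's case split (which follows the bullets of Claim~\ref{claim:when-label}), and your identification of type (B) --- a lone hidden Pair-labeled block with Honest one height behind, where only the (Attacker, Attacker) ordering yields a safe block --- as the unique tight case with $P_h = \alpha^2$ exactly is the same binding constraint the paper exhibits in its bullet (a).
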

\begin{proof}
    We will show that for all $h$, we have $P_{h+1} \geq \alpha^2$, which completes the proof.

    Let us again focus on the moment a block of height $h$ is created for the first time. Consider the bullets in the proof of Claim~\ref{claim:when-label}. In bullets (a) and (b), we learn $S(h)$ the moment that the block at height $h$ is created. Then if Attacker creates the next two blocks, they will have created a safe block at height $h+1$. Therefore, $P_{h+1} \geq \alpha^2$ in both of these cases. 
    \\ \noindent
    If we learn $S(h)$ through bullet (c), then:
    \begin{itemize}[nosep]
        \item If it is because Attacker found a block at height $h+1$, then if they find the next block, this block of height $h+1$ certainly safe (because Attacker found a block of height $h+1$ before Honest found a block of height $h-1$). So $P_{h+1} \geq \alpha \geq \alpha^2$.
        \item If it is because Honest creates a block at height $h-1$, then Attacker broadcasts their pivotal block of height $h$ and this acts like a new genesis block. From here, if Attacker creates the next two blocks, they will create a safe block at height $h+1$. So $P_{h+1} \geq \alpha^2$.\qedhere
    \end{itemize}
\end{proof}

Observe some subtlety in the proof of Lemma~\ref{lem:validity}, and in particular that the choice of moments to compute $P_{h}$ is significant. For example, if we were to compute the probability that Attacker creates a safe block at height $h$ conditioned on $S(h-2)$ being Single, and $S(h-1)$ being Pair, \emph{and Honest has broadcast a block of height $h-1$}, then the Attacker cannot possibly find a safe block of height $h$ (because their block of height $h$, if created, is immediately pivotal). So, it is important that we compute $P_h$ at the moment that $S(h-1)$ is determined (which is immediately when Attacker finds its first block at height $h-1$, and it is unknown whether Honest will find the next block or not). This highlights some of the subtlety needed to keep the analysis clean.

Next, we need to define how to implement the proposed labeling with a broadcasting strategy.

\begin{lemma}[Implementability]\label{lem:broadcast-SP}
    The following \emph{broadcasting strategy} realizes the labels output by the labeling strategy in Definition~\ref{def:labelling}:
        \begin{itemize}[nosep]
            \item If $S(h)$ is labeled as Single (and you created a block of height $h$): broadcast the block of height $h$ the moment a block of height $h-1$ is broadcast (and you know that $S(h)$ is labeled Single). Note that this may imply broadcasting a block of height $h$ the moment its created.
            \item If $S(h)$ is labeled as Pair: broadcast the block of height $h$ the moment another block of height $h$ is broadcast (and you know that $S(h)$ is labeled Pair).\footnote{In particular, if the public longest chain has height $h-2$, and Attacker has two hidden blocks of heights $h-1$ labeled Pair and $h$ labeled Single, when Honest broadcasts a block of height $h-1$, both bullets trigger simultaneously, and the attacker will broadcast both its hidden blocks at once.}
        \end{itemize}
\end{lemma}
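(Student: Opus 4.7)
The plan is to verify two things: first, that the broadcasting rule is well-defined, meaning Attacker always knows the label when the rule fires; and second, that the resulting view has exactly the intended label at each height. Well-definedness follows from Claim~\ref{claim:when-label}, which guarantees that $S(h)$ is determined no later than the creation of a block of height $h+1$. I would match each broadcast trigger to this timeline and check that it fires strictly after (or simultaneously with, as in the situation flagged by the lemma's footnote) the moment the label is fixed. For the Single trigger, the rule activates when a block of height $h-1$ is public \emph{and} $S(h)$ is known; for the Pair trigger, the rule activates when a block of height $h$ other than Attacker's own surfaces, which can only happen after Attacker has already created its own block at $h$ and hence after the window in which the label of $h$ is fixed.

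Next I would prove view correctness by casing on $S(h-1)$ and on which party first creates a block of height $h$, mirroring the three cases in the proof of Claim~\ref{claim:when-label}. If Honest creates the first block at height $h$, Honest broadcasts immediately (it is playing longest-chain) and the view records a unique block at $h$, matching the Single label; Attacker, who extends its own longest chain, will not later add another block at height $h$. If Attacker creates the first block at $h$ and it is labeled Single, the rule broadcasts the block the instant height $h-1$ is public: in case (a) this is immediate, and in cases (b) and (c) Attacker's broadcast exposes a chain strictly longer than anything Honest has built, so Honest switches over before it can mine a competing block at $h$, giving a Single at $h$. If Attacker's block of height $h$ is instead labeled Pair, the withholding rule delays its release until an Honest block of height $h$ is broadcast, which is precisely the event that installs both blocks at $h$ and realizes the Pair in the view.

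The main obstacle is case (c) of Claim~\ref{claim:when-label}: during the window in which $S(h-1)$ is undetermined, Attacker may accumulate several hidden blocks at consecutive heights with distinct labels, and the Pair rule at $h-1$ can fire simultaneously with a Single rule at $h$ (the exact situation described in the lemma's footnote). The argument I would use is that when Honest finally publishes its block of height $h-1$, Attacker broadcasts the entire hidden suffix at once, presenting a chain strictly longer than anything Honest can extend from its own block; thus Honest adopts Attacker's chain, the adverse tiebreak is irrelevant because the tip is unique, and the view records Pair at $h-1$ and Single at $h$ exactly as labeled. Once these broadcasts occur and $S(h)$ is thereby fixed, no subsequent action by either player can change the set of height-$h$ blocks in the eventual view, since Attacker only ever mines on its own longest chain and Honest only ever extends the public longest chain. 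This stability across all cases yields the match between labels and view claimed by the lemma.
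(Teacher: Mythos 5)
Your proposal is correct and follows essentially the same route as the paper's proof: verify that Single-labeled blocks are known to be Single (and hence broadcast) before Honest can produce a competing block at that height, that Pair-labeled blocks are released exactly when contested, and that the two rules do not conflict when they fire simultaneously on a hidden suffix. Your case analysis mirroring Claim~\ref{claim:when-label} is somewhat more granular than the paper's, but the substance is the same.
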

\begin{proof}

We first need to confirm that for all $h$ that are eventually labeled as Single, we learn that $S(h)$ is Single early enough to broadcast our block of height $h$ according to the rule above \emph{before Honest finds a block of height $h$}. If we can do this, then $h$ will indeed be Single. 
    
    To see this, observe that if a block of height $h$ is pivotal, we learn this immediately when there are two blocks of height $h-1$, and Attacker creates the first block of height $h$. This is before Honest has a block of height $h$, and therefore if a block is pivotal, we label it as Single in time.
    
    Similarly, if a block of height $h$ is safe, we learn this either the moment we learn that $S(h-1)$ is Single (which we certainly know by the moment the block of height $h$ is created, perhaps earlier, by Claim~\ref{claim:when-label}), or the moment that we create a block of height $h+1$ before Honest has a block of height $h-1$. In both cases, Honest does not have a block of height $h$. This confirms that if we decide $S(h)$ is Single, we know this before Honest finds a block of height $h$, and therefore our broadcasting strategy broadcasts our block of height $h$ in time for $h$ to be Single.

    For broadcasting Pairs, we just need to confirm that the timing constraints do not conflict with those demanded for Single (that is, we need to confirm that the strategy does not accidentally claim to broadcast a block of height $h+1$ before a block of height $h$ that it points to). Assuming this is the case, then because we wait until Honest broadcasts their block of height $h$ before broadcasting ours, there will certainly be a Pair at height $h$. To see that there is no conflict, observe that the broadcasting strategy for $h$ labeled Pair only asks to wait to broadcast a block of height $h$ until Honest finds a block of height $h$, whereas for $h+1$ labeled Single it only asks to broadcast a block of height $h+1$ as soon as a block of height $h$ is broadcast. Therefore, there is no conflict where the broadcasting strategy asks to broadcast a block of height $h+1$ before a block of height $h$.
\end{proof}

\begin{lemma}[Undetectability]\label{lem:undetectable-SP} The broadcasting strategy of Lemma~\ref{lem:broadcast-SP}, with the labeling strategy of Definition~\ref{def:labelling}, for any $\beta \leq \alpha^2$, produces the distribution where a $\beta$-fraction of rounds have Pairs, and $1-\beta$ have Singles, independently.
\end{lemma}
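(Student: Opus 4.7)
The plan is to closely mirror the proof of Lemma~\ref{lem:warmupundetectable-SP}, but carefully account for the fact that $S(h)$ is no longer pinned down at the instant the first block of height $h$ is created. The content of the lemma is that the states $S(1), S(2), \ldots$ are i.i.d.\ Bernoulli with $\Pr[S(h)=\text{Pair}] = \beta$, and by iterated conditioning it suffices to show that for every $h$ the conditional probability $\Pr[S(h) = \text{Pair} \mid \mathcal{F}_{h-1}] = \beta$, where $\mathcal{F}_{h-1}$ is the information available at the moment $T_{h-1}$ when $S(h-1)$ is first determined (in particular, $\mathcal{F}_{h-1}$ includes $S(1),\ldots,S(h-1)$).

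First I would invoke Claim~\ref{claim:when-label} to assert that each $T_h$ is well-defined and occurs no later than the creation of the first block of height $h+1$; its corollary guarantees that $P_h$ is $\mathcal{F}_{h-1}$-measurable. I would then unpack Definition~\ref{def:labelling}: the event $\{S(h) = \text{Pair}\}$ is by definition the intersection of (i) Attacker creates the first block of height $h$, (ii) that block turns out to be safe, and (iii) an independent biased coin with heads-probability $\beta/P_h$ comes up heads. The joint probability of (i)+(ii) conditional on $\mathcal{F}_{h-1}$ is \emph{exactly} $P_h$, by the definition of $P_h$ as the probability of this very event conditioned on all $\mathcal{F}_{h-1}$ information. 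The coin in (iii) is a fresh draw taken only after (i)+(ii) resolve, so it is independent of $\mathcal{F}_{h-1}$ and of the randomness realizing (i)+(ii). Multiplying, $\Pr[S(h) = \text{Pair} \mid \mathcal{F}_{h-1}] = P_h \cdot (\beta/P_h) = \beta$. Validity of $\beta/P_h \in [0,1]$ is precisely Lemma~\ref{lem:validity}, which gives $P_h \geq \alpha^2 \geq \beta$.

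Since the conditional probability is the deterministic constant $\beta$ regardless of $\mathcal{F}_{h-1}$, $S(h)$ is independent of $(S(1),\ldots,S(h-1))$ and distributed as Bernoulli$(\beta)$. Inducting on $h$ yields that all states are mutually independent and Pair with probability $\beta$, which is exactly the distribution of heights produced by honest play in an \nmglp\ with the $\beta'=\beta$ implied by the choice of $\ell'$.

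The main obstacle, and the only subtle point, is the filtration bookkeeping: one must be careful that $P_h$ really is $\mathcal{F}_{h-1}$-measurable and that conditioning on the additional randomness which resolves the safe/pivotal question between $T_{h-1}$ and $T_h$ does not change matters. This is handled by the deliberate choice in Definition~\ref{def:labelling} to evaluate $P_h$ at the instant $S(h-1)$ is first known (as flagged in the discussion after Lemma~\ref{lem:validity}); evaluating it any later would condition on information that correlates with whether Attacker's block at height $h$ can be safe. Making this measurability explicit is the only nontrivial step; the rest is a direct calculation.
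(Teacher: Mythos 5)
Your proposal is correct and follows essentially the same route as the paper's proof: both argue that at the instant $S(h-1)$ is determined, the conditional probability of $S(h)=\text{Pair}$ given \emph{all} information available (not just the prior states) equals $P_h\cdot(\beta/P_h)=\beta$, which then implies independence by iterated conditioning. Your version simply makes explicit the event decomposition and the measurability of $P_h$ that the paper leaves implicit (and correctly defers validity of the coin bias to Lemma~\ref{lem:validity}).
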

\begin{proof}
    Observe that at the moment that $S(1),\ldots, S(h-1)$ are fixed, $S(h)$ is equal to Pair with probability exactly $\beta$. This is true not only conditioned on $S(1),\ldots, S(h-1)$, but also conditioned on any additional information known at the instant $S(h-1)$ is fixed. Therefore, for all $S(1),\ldots, S(h-1)$, the probability that $S(h)$ is Pair conditioned on $S(1),\ldots, S(h-1)$ is exactly $\beta$. Therefore, the distribution of states is independent each round, and equal to Pair with probability $\beta$.
\end{proof}

\subsection{Reward Analysis}
We provide in Appendix~\ref{app:exactreward} a description of a Markov Chain that can be used to compute exactly the reward of our strategy. In this section, we'll aim instead to simply find a sufficiently large $\alpha$ so that our strategy is strictly profitable.

Recall that, by Corollary~\ref{cor:paircounting}, it suffices to focus on Paired rounds and determine their winner. Consider the sequence $(S(h))_{h=1}^\infty$, and observe that by Lemma~\ref{lem:undetectable-SP}, each of its entries is Pair with probability $\beta$ and Single otherwise. Consider now a sequence of $k$ Pairs between two Single rounds.
\begin{claim} For a sequence of $k> 1$ consecutive Pair rounds, Attacker wins all these rounds.
\end{claim}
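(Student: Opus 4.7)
The plan is to exploit the key asymmetry introduced by requiring more than one consecutive Pair: the safety condition on the last Pair in the sequence forces Attacker to have already created a block two heights ahead of the sequence, which gives Attacker a lead that it converts into a strictly longer chain the moment the final Pair resolves.

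Index the Pair sequence as heights $h, h+1, \dots, h+k-1$, bracketed by Single heights $h-1$ and $h+k$. Since $S(h+k-1)$ is labeled Pair, Definition~\ref{def:labelling} says that Attacker created the first block of height $h+k-1$ and that block is \emph{safe}. Because $k>1$, the previous height $h+k-2$ is also Pair, so safety of the $h+k-1$ block translates, by the definition of safe versus pivotal, into the statement that Attacker already possesses a block of height $h+k$ at the moment Honest broadcasts its block of height $h+k-2$.

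Next, I would trace the broadcast rules of Lemma~\ref{lem:broadcast-SP} at the instant Honest broadcasts its block of height $h+k-1$. The Pair rule ensures that Attacker has by now released its blocks at heights $h, h+1, \dots, h+k-1$ in step with Honest's corresponding broadcasts. For Attacker's hidden block at height $h+k$, whose label is Single by assumption, there are two cases: (i) $S(h+k)$ is Single because the block at $h+k$ is pivotal, in which case Attacker learns pivotality exactly when Honest broadcasts at $h+k-1$ and the Single rule then releases the $h+k$ block simultaneously; (ii) $S(h+k)$ is Single because the block at $h+k$ is safe and the coin flipped Single, which by the definition of safe means Attacker has also produced a block at $h+k+1$ strictly earlier, so the label was already determined and the Single rule again releases the $h+k$ block at this instant. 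In both cases Attacker's broadcast chain reaches height $h+k$, while Honest's reaches only $h+k-1$.

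Finally, I would conclude that Attacker's chain is strictly longer, so it becomes the unique longest chain irrespective of tiebreak, and every Pair block at heights $h, \dots, h+k-1$ lies on it, so Attacker wins all $k$ Pair rounds. The step I expect to be the most delicate is case (ii) of the broadcast analysis: I must verify that Attacker knows $S(h+k)=\text{Single}$ early enough for the Single rule to fire in sync with the Pair release at $h+k-1$. This boils down to identifying when sufficient information becomes available, in the style of Claim~\ref{claim:when-label}, and is the only place where the argument depends on the information-timing subtleties already flagged in the warmup.
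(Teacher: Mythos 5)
Your proposal is correct and follows essentially the same route as the paper: both arguments hinge on the observation that a Pair label requires a safe block, and (since $k>1$) safety of the last Pair at height $h+k-1$ forces Attacker to already hold a block at height $h+k$, yielding a strictly longer chain that carries all $k$ Pair blocks regardless of tiebreaking. Your additional case analysis of the broadcast timing for the block at height $h+k$ is a more detailed spelling-out of what the paper leaves implicit, but it is not a different argument.
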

\begin{proof} Say that height $h$ is the first Pair. Observe that in order for each subsequent height to possibly be labeled as Pair, Attacker must have a safe block at that height. In order to have a safe block at heights $h,\ldots, h+k-1$, Attacker must also have a block at height $h+k$ before Honest has a block of height $h+k-1$, and height $h+k$ is Single. Therefore, Attacker will certainly win all these Pairs.
\end{proof}
\noindent
We can therefore focus on \emph{Solo Pairs}: $h$ such that $S(h)$ is Pair, and $S(h-1)$ and $S(h+1)$ are Single.

\begin{claim}\label{claim:solopairs} Let $w$  denote the fraction of Solo Pairs won by the attacker. We have $w\geq \frac{2\alpha - \alpha^2 -\beta}{1-\beta}$.
\end{claim}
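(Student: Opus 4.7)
The plan is to compute $w$ directly by case-analyzing the race between Attacker and Honest starting the instant $S(h)$ is labeled Pair, then dividing by $\Pr[S(h+1) = \text{Single}] = 1 - \beta$ (the independence holds by Lemma~\ref{lem:undetectable-SP}). Conditioning on $S(h-1) = \text{Single}$, the broadcasting strategy ensures that Attacker holds exactly one hidden block (the one at height $h$) at the labeling moment, while Honest is mining on a public chain of height $h-1$. I would enumerate the subsequent outcomes: in case (A), Honest produces a block at $h$ first (probability $1-\alpha$), causing Attacker to broadcast its Pair block at $h$ and then race Honest at height $h+1$; in case (B), Attacker produces a block at $h+1$ first (probability $\alpha$), extending its lead to $2$. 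Case (A) automatically yields $S(h+1) = \text{Single}$: Attacker wins the Pair at $h$ with probability $\alpha$ (its block at $h+1$ becomes pivotal), and loses with probability $1-\alpha$. Case (B) splits into (B.i), where Honest finds $h$ next (probability $1-\alpha$), so Attacker broadcasts both hidden blocks, wins the Pair, and $S(h+1)$ becomes pivotal and therefore Single; and (B.ii), where Attacker finds $h+2$ next (probability $\alpha$), so the block at $h+1$ becomes safe and its label is coin-flipped.

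The technical heart is identifying $P_{h+1}$ under this conditioning. Since $S(h)$ is Pair, Claim~\ref{claim:when-label} implies that Attacker can only produce a \emph{safe} block at $h+1$ via the specific sequence in case (B.ii), hence $P_{h+1} = \alpha^2$; the coin then outputs Single with probability $1 - \beta/\alpha^2$, and in that event Attacker still wins the Pair at $h$ because it retains a $2$-or-more-block lead when Honest eventually reaches $h$. Summing the contributions to ``Attacker wins Pair at $h$ and $S(h+1) = \text{Single}$'', conditional on $S(h-1) = \text{Single}$ and $S(h) = \text{Pair}$, gives
\[
(1-\alpha)\alpha + \alpha(1-\alpha) + \alpha^2 (1 - \beta/\alpha^2) = 2\alpha - \alpha^2 - \beta,
\]
and dividing by $1 - \beta$ yields the claimed bound (with equality, in fact).

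The main obstacle is the $P_{h+1} = \alpha^2$ step: it requires a careful reading of Definition~\ref{def:labelling} and Claim~\ref{claim:when-label} to pin down the reference moment for $P_{h+1}$, and then an argument that every alternative sequence from that moment either fails to produce a block at $h+1$ or produces it only after Honest broadcasts $h$ (making it pivotal rather than safe). Once this is established, the case sum and the division by $1-\beta$ are routine arithmetic.
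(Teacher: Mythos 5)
Your enumeration and arithmetic reproduce the paper's argument for the main case, but the proof rests on a false premise: conditioning on $S(h-1)=\text{Single}$ does \emph{not} ensure that Attacker holds exactly one hidden block when $S(h)$ is labeled Pair, nor that the public chain has height $h-1$ at that moment. Concretely, suppose the public chain sits at height $h-3$ and Attacker finds the next three blocks in consecutive rounds, with the coin flips labeling $h-2$ as Pair and $h-1$ as Single (all of positive probability whenever $\beta<\alpha^2$). The Single-labeled block at $h-1$ cannot be broadcast yet --- its broadcast rule waits for a block of height $h-2$ to be broadcast, and none has been --- so when the block at height $h$ is created and labeled Pair, Attacker is hiding blocks at heights $h-2$, $h-1$, and $h$, and Honest is still mining at height $h-2$. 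In that configuration Honest needs two blocks, not one, merely to contest height $h$, so $P_{h+1}\neq\alpha^2$, the coin bias $\beta/P_{h+1}$ changes, and your conditional probabilities do not apply. This also kills the parenthetical ``with equality, in fact'': in such configurations Honest wins the Solo Pair at $h$ only by finding three consecutive blocks, so Attacker's conditional win rate there is $1-\frac{(1-\alpha)^3}{1-\beta}$, strictly above the claimed bound, and $w$ is a convex combination that strictly exceeds $\frac{2\alpha-\alpha^2-\beta}{1-\beta}$ whenever these configurations occur.

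The paper dispatches the extra case in a single sentence, asserting that when Attacker holds additional hidden blocks it wins the Pair at height $h$ anyway, so such Solo Pairs only help the lower bound. Your proof needs some version of this observation --- for instance, that with additional hidden blocks Honest must find strictly more consecutive blocks to win the Pair at $h$, so the conditional win rate in those worlds dominates the one you computed for the one-hidden-block case (which is the binding case). With that added, your three sub-cases summing to $2\alpha-\alpha^2-\beta$ and the division by $1-\beta$ are exactly the paper's argument, but the conclusion must remain an inequality.
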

\begin{proof}
Consider a Single at height $h-1$, and a Pair at height $h$. If Attacker has any hidden blocks, then Attacker will certainly win the pair at height $h$. Otherwise, $P_{h+1} = \alpha^2$ (because Attacker's next block is safe only if they create each of the next two). This means that:
\begin{itemize}[nosep]
\item If Honest finds the next two blocks, then $h$ is a Solo Pair, and Honest wins it. This happens with probability $(1-\alpha)^2$.
\item If Attacker finds exactly one of the next two blocks, then their block of height $h+1$ is pivotal, so $h$ is a Solo Pair and Attacker wins it. This happens with probability $2\alpha(1-\alpha)$.
\item If Attacker finds both of the next blocks, then $h+1$ is safe, but is marked Single anyway with probability $1-\beta/\alpha^2$. Therefore, this is a Solo Pair that Attacker wins with probability $\alpha^2\cdot (1-\beta/\alpha^2) = \alpha^2 - \beta$.
\end{itemize}
\noindent
Altogether, we conclude that Attacker wins at least a $\frac{2\alpha - \alpha^2 -\beta}{1-\beta}$ fraction of Solo Pairs.
\end{proof}

\begin{corollary} The fraction of pairs won by Attacker is $>(1-\alpha)$ as long as $\alpha > \frac{3-2\beta -\sqrt{5-4\beta}}{2(1-\beta)}$.
\end{corollary}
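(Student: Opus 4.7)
The plan is to apply Corollary~\ref{cor:paircounting}, which reduces profitability to showing that the overall fraction $\delta$ of Pairs won by Attacker exceeds $1-\alpha$. I will combine Claim~\ref{claim:solopairs} (which lower-bounds the fraction of \emph{Solo} Pairs won) with the observation immediately preceding it (that Attacker wins every Pair belonging to a run of length $\geq 2$) to bound $\delta$ from below, then solve the resulting quadratic inequality in $\alpha$.

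First, I will use Lemma~\ref{lem:undetectable-SP}: under the proposed strategy, the sequence $(S(h))_h$ is i.i.d.~Bernoulli with Pair-probability $\beta$. Consequently, conditional on height $h$ being a Pair, the probability that both $S(h-1)$ and $S(h+1)$ are Single — i.e., that $h$ is a Solo Pair — is exactly $(1-\beta)^2$. So the fraction of all Pairs that are Solo is $(1-\beta)^2$, and the remaining $1-(1-\beta)^2$ fraction lies in runs of length $\geq 2$ and is entirely won by Attacker. Writing $w$ for the fraction of Solo Pairs won by Attacker (from Claim~\ref{claim:solopairs}, $w \geq (2\alpha - \alpha^2 - \beta)/(1-\beta)$), the total fraction of Pairs \emph{lost} by Attacker is therefore
\[
1-\delta \;=\; (1-\beta)^2 (1-w) \;\leq\; (1-\beta)^2 \cdot \frac{1-\beta - 2\alpha + \alpha^2 + \beta}{1-\beta} \;=\; (1-\beta)(1-\alpha)^2,
\]
where I used $1-w \leq (1-\alpha)^2/(1-\beta)$.

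By Corollary~\ref{cor:paircounting} the strategy is strictly profitable whenever $1-\delta < \alpha$, so it suffices to verify $(1-\beta)(1-\alpha)^2 < \alpha$. Expanding gives the quadratic inequality $(1-\beta)\alpha^2 - (3-2\beta)\alpha + (1-\beta) < 0$, whose discriminant is $(3-2\beta)^2 - 4(1-\beta)^2 = 5 - 4\beta$. Since the coefficient of $\alpha^2$ is positive and the larger root exceeds $1$, the inequality holds exactly when $\alpha$ is greater than the smaller root
\[
\alpha \;>\; \frac{3-2\beta - \sqrt{5-4\beta}}{2(1-\beta)},
\]
which is the stated bound.

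The only subtlety worth watching for — and which I do not expect to be a genuine obstacle — is justifying the conditional probability $(1-\beta)^2$ for the Solo event when we speak of a long-run ``fraction'' of Pairs rather than a single random height: because $(S(h))$ is i.i.d., the empirical fractions converge to the conditional probabilities, so the bookkeeping between ``Attacker wins $w$-fraction of Solo Pairs'' and ``Attacker wins $\delta$-fraction of all Pairs'' is just the linearity computation above. Everything else is routine algebra that reduces, step by step, to the displayed quadratic.
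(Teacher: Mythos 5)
Your proof is correct and follows essentially the same route as the paper: decompose Pairs into Solo Pairs (handled by Claim~\ref{claim:solopairs}) versus runs of length at least two (all won by Attacker), and reduce to the same quadratic $(1-\beta)\alpha^2-(3-2\beta)\alpha+(1-\beta)<0$. The only difference is bookkeeping --- you compute the Solo-Pair fraction directly as $(1-\beta)^2$ from independence of neighboring states, whereas the paper derives it from the geometric run-length distribution and a law-of-large-numbers argument; the two computations give identical bounds ($1-(1-\beta)(1-\alpha)^2 = \beta + (1-\beta)(2\alpha-\alpha^2)$, matching the paper's expression), and your version is, if anything, slightly cleaner.
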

\begin{proof}
Recall that the number of Pair rounds in a row (between two Single rounds) is distributed independently across time, and equal to $i$ with probability $(1-\beta)\cdot \beta^i$.
Therefore, we can write the expected number of Pair rounds between two Single rounds as
\begin{align*}
    \sum_{i=0}^\infty (1-\beta)\cdot \beta^i \cdot i = \frac{\beta}{1-\beta}.
\end{align*}
Similarly, we can write an expression for the expected number of Pair rounds won by the attacker,
\begin{align*}
    w\cdot (1-\beta)\beta + \sum_{i=2}^\infty (1-\beta)\cdot \beta^i \cdot i =
    w\cdot\beta\cdot(1-\beta) + \frac{(2-\beta)\cdot \beta^2}{1-\beta} = \frac{w\cdot \beta \cdot (1-\beta)^2 + (2-\beta)\cdot \beta^2}{1-\beta},
\end{align*}
where $w$ denotes the fraction of Solo Pairs won.

Now, we can compute the expected fraction of Pairs won by the attacker. By linearity of expectation, the expected number of Pair rounds before we see the $n^{th}$ Single is $n\cdot \frac{\beta}{1-\beta}$, and the expected number of Pair rounds won by the attacker is $n \cdot \frac{w \beta (1-\beta)^2 + (2-\beta)\beta^2}{1-\beta}$. By the law of large numbers, we have that with probability $1$:

$$\lim_{n\rightarrow \infty}(\text{\# Pairs before }n^{th}\text{ Single})/n = \frac{\beta}{1-\beta},$$
$$\lim_{n\rightarrow \infty}(\text{\# Pairs won by Attacker before }n^{th}\text{ Single})/n = \frac{w\cdot \beta \cdot (1-\beta)^2 + (2-\beta)\cdot \beta^2}{1-\beta}.$$

Therefore, the expected fraction of Pairs won by the attacker is:

$$\frac{\frac{(2-\beta)\cdot \beta^2 + w\cdot \beta \cdot (1-\beta)^2}{1-\beta}}{\frac{\beta}{1-\beta}} = \frac{w\cdot \beta \cdot (1-\beta)^2 + (2-\beta)\cdot \beta^2}{1-\beta}.$$

Substituting in our lower bound on $w$ from Claim~\ref{claim:solopairs} in the right hand side gives the following lower bound on the expected fraction of Pairs won by the attacker:
\begin{align*}
2\beta - \beta^2 + w \cdot (1-\beta)^2 &\geq 2\beta - \beta^2 + (2\alpha-\alpha^2 - \beta)\cdot (1-\beta).
\end{align*}

And now we need to understand when $2\beta - \beta^2 + (2\alpha-\alpha^2 - \beta)\cdot (1-\beta) \geq 1-\alpha$. Rearranging, this inequality is equivalent to $\alpha^2 \cdot (\beta-1) + \alpha \cdot (3-2\beta) + (\beta-1) \geq 0$. The left-hand-side has roots of: $$\frac{-3+2\beta \pm\sqrt{(3-2\beta)^2-4(1-\beta)^2}}{2(\beta-1)}=\frac{3-2\beta \pm\sqrt{5-4\beta}}{2(1-\beta)}.$$
So the inequality holds iff
$$\alpha \in \left(\frac{3-2\beta - \sqrt{5-4\beta}}{2(1-\beta)}, \frac{3-2\beta +\sqrt{5-4\beta}}{2(1-\beta)}\right).$$
In particular, observe that the upper bound is at least $1$ for all $\beta \in [0,1]$ (the numerator is always at least $2$ and the denominator is at most $2$). So the desired claim holds as long as $\alpha > \frac{3-2\beta -\sqrt{5-4\beta}}{2(1-\beta)}$. \end{proof}

\begin{proof}[Proof of Theorem~\ref{thm:main1}]

Observe that $\frac{3-2\beta -\sqrt{5-4\beta}}{2(1-\beta)}$ is monotone decreasing in $\beta$.\footnote{Its derivative with respect to $\beta$ is $\frac{2\beta-3+\sqrt{5-4\beta}}{2 \cdot \sqrt{5-4\beta}\cdot (1-\beta)^2}$. The denominator is $\geq 0$ for all $\beta \in (0,1]$. The numerator is $0$ at $\beta = 1$, and has derivative $2-\frac{2}{\sqrt{4-5\beta}}$, which is positive on $(0,1)$. Therefore, the numerator is negative on the entire interval $(0,1)$. } Therefore, when $\alpha > \frac{3-\sqrt{5}}{2}$, we have $\alpha >\frac{3-2\beta -\sqrt{5-4\beta}}{2(1-\beta)}$ for all $\beta \in (0,1)$. Therefore, when $\alpha > \frac{3-\sqrt{5}}{2}$, there is a range of $\beta$ arbitrarily close to $0$ (and therefore a range of $\ell'$ arbitrarily close to $0$) where our strategy is both statistically undetectable and strictly profitable.
\end{proof}

We also leverage Corollary~\ref{cor:paircounting} to derive a bound on $\alpha$ that suffices for a strictly profitable strategy that is $\ell'$-statistically undetectable, but not necessarily $\ell'$ close to $0$.

\begin{proposition} Let $\alpha^*\approx 0.3586$ denote the unique real root in $(0,1)$ of $\alpha^4 - 2\alpha^3 +3\alpha-1$. Then for any $\alpha > \alpha^*$, there exists an $\ell'$ such that there is a strictly profitable strategy for the Nakamoto Consensus Game that is $\ell'$-statistically undetectable.
\end{proposition}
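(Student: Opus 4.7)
The plan is to push the previous analysis exactly to the validity boundary $\beta=\alpha^2$ from Lemma~\ref{lem:validity}, rather than sending $\beta\to 0$ as in Theorem~\ref{thm:main1}. The previous proof of Theorem~\ref{thm:main1} already shows that strict profitability is equivalent to a single polynomial inequality in $(\alpha,\beta)$; relaxing the requirement that $\ell'$ be arbitrarily close to $0$ means we are free to optimize $\beta$ subject only to validity, and the optimum is at the top of that range.

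Concretely, I would start from the condition derived in the proof of Theorem~\ref{thm:main1}: combining Corollary~\ref{cor:paircounting}, Claim~\ref{claim:solopairs}, and the Markov-chain bookkeeping for the expected fraction of Pairs won, strict profitability is equivalent to
\[
\alpha^2(\beta-1)+\alpha(3-2\beta)+(\beta-1) > 0,
\]
or equivalently $\beta (1-\alpha)^2 > \alpha^2-3\alpha+1$. Substituting the largest admissible value $\beta=\alpha^2$ (valid by Lemma~\ref{lem:validity}) and simplifying yields $\alpha^4-2\alpha^3+3\alpha-1>0$. Hence for every $\alpha$ in the open set where this quartic is positive, the pair $(\alpha,\alpha^2)$ satisfies both validity and strict profitability, and Lemma~\ref{lem:undetectable-SP} then gives statistical undetectability at the unique $\ell'>0$ with $\beta=\tfrac{\alpha(1-\alpha)\ell'^2}{1-\alpha(1-\alpha)\ell'^2}=\alpha^2$.

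It then remains to identify this open set as $(\alpha^*,1)$. Setting $f(\alpha):=\alpha^4-2\alpha^3+3\alpha-1$, I would verify $f(0)=-1$, $f(1)=1$, and that $f$ is strictly increasing on $[0,1]$: one checks $f'(\alpha)=4\alpha^3-6\alpha^2+3$ has $f''(\alpha)=12\alpha(\alpha-1)\le 0$ on $[0,1]$, so $f'$ is decreasing there with $f'(1)=1>0$, hence $f'>0$ throughout. Thus $f$ has a unique root $\alpha^*\in(0,1)$, numerically $\alpha^*\approx 0.3586$, and $f(\alpha)>0$ exactly on $(\alpha^*,1]$. The only real subtlety is ensuring that the corresponding $\beta$ lies strictly in $(0,1)$ so that a positive $\ell'$ exists; since $\alpha^2\in(0,1)$ and the lower endpoint $\tfrac{\alpha^2-3\alpha+1}{(1-\alpha)^2}$ is below $\alpha^2$ for $\alpha>\alpha^*$, any $\beta$ in this nonempty interval (including $\beta=\alpha^2$ itself) works, completing the proof. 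I expect the only mild obstacle is packaging the earlier computations cleanly enough that the substitution $\beta=\alpha^2$ leads directly to the stated quartic without reopening the Markov-chain analysis.
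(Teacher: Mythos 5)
Your proposal is correct and follows essentially the same route as the paper: take $\beta=\alpha^2$ at the validity boundary, feed it into the profitability condition from Corollary~\ref{cor:paircounting} and Claim~\ref{claim:solopairs}, and reduce to the quartic $\alpha^4-2\alpha^3+3\alpha-1>0$. Your algebra is in fact slightly cleaner (substituting $\beta=\alpha^2$ into the quadratic-in-$\alpha$ form $\alpha^2(\beta-1)+\alpha(3-2\beta)+(\beta-1)>0$ avoids the paper's squaring of the $\sqrt{5-4\alpha^2}$ expression and the resulting degree-six factorization), and your monotonicity argument for the uniqueness of the root is more explicit than the paper's; the only nitpick is that the profitability condition is sufficient rather than ``equivalent,'' since Claim~\ref{claim:solopairs} only lower-bounds the fraction of Solo Pairs won.
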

\begin{proof}
By Lemma~\ref{lem:undetectable-SP}, our proposed strategy is statistically undetectable for all $\beta \leq \alpha^2$, so we will take $\beta = \alpha^2$.\footnote{We have previously observed in the proof of Theorem~\ref{thm:main1} that the profitability threshold is monotone decreasing in $\beta$, so taking the largest viable $\beta$ will optimize this.} By Corollary~\ref{cor:paircounting}, we have designed a strategy that is strictly profitable (and produces the desired view) as long as $\alpha > \frac{3-2\alpha^2 -\sqrt{5-4\alpha^2}}{2(1-\alpha^2)}$. Expanding out the calculations, we need:
\begin{align*}
\alpha &>\frac{3-2\alpha^2 -\sqrt{5-4\alpha^2}}{2(1-\alpha^2)}\\
\Leftrightarrow 2\alpha(1-\alpha^2) &>3-2\alpha^2 -\sqrt{5-4\alpha^2}\\
\Leftrightarrow  3-2\alpha(1-\alpha^2)-2\alpha^2&< \sqrt{5-4\alpha^2}\\
\Leftrightarrow 3 - 2\alpha + 2\alpha^3- 2\alpha^2 &< \sqrt{5-4\alpha^2}
\end{align*}
Because we are only interested in the range $\alpha \in (0,1/2)$, the LHS and RHS are always non-negative. Therefore we may continue with:
\begin{align*}
3-2\alpha-2\alpha^2 +2\alpha^3&< \sqrt{5-4\alpha^2}\\
\Leftrightarrow (3-2\alpha-2\alpha^2 + 2\alpha^3)^2 &< 5-4\alpha^2\\
\Leftrightarrow 4\alpha^6 -8\alpha^5 -4\alpha^4 +20\alpha^3 -8\alpha^2 -12\alpha +9 -5 + 4\alpha^2 &< 0\\
\Leftrightarrow 4\alpha^6 -8\alpha^5 -4\alpha^4 +20\alpha^3 -4\alpha^2 -12\alpha +4 &< 0\\
\Leftrightarrow 4(\alpha-1)\cdot (1+\alpha)\cdot (\alpha^4 - 2\alpha^3 + 3\alpha -1) &< 0.
\end{align*}
The quartic on the LHS has two real roots. One is negative, and the other is $\approx 35.86\%$. Therefore, the quartic is positive on $[.3586,1]$, and the entire LHS is negative.
\end{proof}

\subsection{Looking Forward: Generalizing to Main Result II}
Here, we briefly note complexities extending to our second main result. Recall that our second main result considers statistically undetectable Selfish Mining strategies for an \nmgl\ with $\ell > 0$. 

The key difference stems from the following. In our warmup result, Attacker is relatively free to decide which heights will be Single vs.~Pair, because they will win every Pair round anyway. In our first main result, Attacker is sometimes ``forced'' to set an upcoming height to be Single, because they have only a pivotal block left (and having a conflict with this block is risky, and could cause the loss of many other conflicts that Attacker could instead win right now). In our second main result, Attacker is still sometimes ``forced'' to immediately broadcast pivotal blocks and create Single heights, but they are also sometimes ``forced'' to label a height as Pair (simply because both miners find a block at the same time). We now briefly elaborate how this impacts our analysis.

Because of these ``Forced Pairs'', our labeling strategy must change. One particular challenge is that we cannot set $P_h$ conditioned on the entire state of information known to the Attacker at a given time, as doing so prevents any dishonest strategy from also being statistically undetectable.\footnote{We refer the reader to the technical appendices to see why. Briefly, the issue is that if the attacker finds a single block following a ``Forced Pair'', that block is immediately pivotal, and therefore should be broadcast. But this then means that following a Forced Pair, we can only get another Pair if we get another Forced Pair, which happens with a fixed probability set by $\ell'$ that is outside our control.} Instead, our labeling strategy instead conditions only on the states, and \emph{not} on the full information available to Attacker. For example, our strategy computes (e.g.) $P_5$ (and other related probabilities) conditioned on the fact that the first four states are $\langle$Single, Single, Pair, Pair$\rangle$, and at the moment these states are set. But, it does \emph{not} condition on further information available to Attacker (such as whether the last two Pairs were Forced, whether Attacker has any hidden blocks, etc.). This forces our analysis to be Bayesian, and further understand the probabilities of a particular ``complete world'' given the current sequence of states. This is the key complexity associated with confirming statistical undetectability.

The key complexity associated with our reward calculation is the following. For our first main result, the honest strategy gets reward $\alpha$, and this provides a clean target fraction of pairs to win (of $>(1-\alpha)$) in order to be strictly profitable. In an \nmgl\ with $\ell > 0$, the honest strategy does not win an $\alpha$ fraction of rewards, so our benchmark is different and calculations become significantly more involved. In order to keep the calculations as tractable as possible, we use an analysis technique introduced in~\cite{SapirshteinSZ16} to directly understand when withholding Attacker's first block (and attempting to Selfish Mine) might be more profitable than immediately broadcasting it (and claiming immediate reward). In particular, Proposition~\ref{prop:maingeneral} provides a sufficient condition (as a function of $\alpha,\beta'$) for a strictly profitable and undetectable strategy to exist (and Theorem~\ref{thm:main2} follows by confirming numerically that this condition holds for $\alpha \geq 38.2\%$ and all $\beta'\in [0,1]$).

\section{Conclusion}\label{sec:conclusion}
We provide a statistically undetectable Selfish Mining attack. We choose to study the canonical setting of proof-of-work longest-chain protocols with a block reward, as it is the theoretically most-developed. The key takeaway from our work is that relying on risk of statistical detection to dissuade attackers is not necessarily sound --- there may be ways for attackers to still profit and avoid statistical detection. Our paper leaves several directions open for future work and discussion.

On the technical front, our paper considers a particular stylized model of latency, where orphans naturally occur independently every round. It is worthwhile to further explore alternative latency models, and develop an understanding of which types of ``honest-with-latency'' worlds can be induced by profitable deviations.

On the modeling front, our paper considers detection methods based only on the shape of the blockchain (e.g. the pattern of orphaned blocks), and not on any time-sensitive information (such as timestamps and timing of messages). On one hand, it is likely \emph{quite} challenging (and perhaps even impossible) to design a profitable Selfish Mining strategy whose timestamps mimic that of natural latency. On the other hand, if timestamps are necessary to detect a deviant strategy, and detection of Selfish Mining via timestamps could significantly impact the value of the underlying cryptocurrency, this is potentially a significant vulnerability (because timestamps are trivial to manipulate). Indeed, the deviation discovered by~\cite{YaishSZ22} involves manipulating timestamps on Ethereum (for the purpose of extracting additional mining revenues, not to affect the underlying value of ETH).

Also on the modeling front, our strategy doesn't risk significantly impacting the underlying cryptocurrency's value, as its impact is consistent with orphans caused by latency. Still, note that while Attacker can safely disguise its behavior as latency, our Attacker still introduces extra latency into the system (and in particular, causes orphaned blocks at a higher rate). It is entirely possible that an increased orphan rate might have a smooth negative impact on the underlying cryptocurrency's value without outright tanking it. This aspect would also be important to explore, although the model might look significantly different than ours.

On the broader web3/blockchain ecosystem front, the conceptual contributions of our paper apply quite generally: deviant behavior can sometimes disguised as natural occurrences. Therefore, our work motivates further study of any domain where deviant strategies exist that are profitable when denoted in the underlying token/cryptocurrency, but are currently believed to be disincentivized due to risk of detection. In general, wherever detectable profitable deviations exist, it is important to understand whether undetectable (or ``explainable-by-nature'') profitable deviations exist as well.

\bibliographystyle{plain}
\bibliography{MasterBib}

\appendix
\section{Selfish Mining in Practice}\label{app:whynot}
\subsection{Relevance of Selfish Mining}
Strategic manipulations, like Selfish Mining, which do not directly affect consensus do not pose the same immediate threat as a double-spend attack. However, they are concerning in practice for (at least) the following two reasons.

\paragraph{Centralizing Force.} Strategic manipulations that provide mild supralinear rewards are a centralizing force: a cohort of miners can achieve greater rewards together than separately. Or, put another way, a sufficiently large miner will achieve greater per-investment rewards than a smaller miner simply due to their size. In general, centralizing forces are undesirable in blockchain applications, as they can lead to centralization over time (which may enable more devastating attacks like double-spends). 

\paragraph{Synergy with consensus-breaking attacks.} Canonical strategic manipulations happen to have synergy with canonical attacks. Consider for example Selfish Mining and double-spending. A Selfish Miner sometimes finds themselves with a hidden chain of several blocks, and launching a double-spend attack at this time is more likely to succeed. Consider also seed manipulation and committee takeover in BFT-based protocols. Manipulating the pseudorandom seed not only helps the deviator lead more rounds, but also helps them control a greater fraction of the BFT committee. If strategic manipulations are profitable, they may additionally lower the cost of stronger consensus-breaking attacks.

\paragraph{Summary.} The above two paragraphs give a brief summary of why strategic manipulations that mildly increase profit of deviators are relevant in practice. Although these manipulations don't directly undermine consensus, they increase the risk the risk of consensus-breaking attacks in the future (via centralizing forces) and in the present (via increased success probability).

\subsection{Factors impacting Selfish Mining}

It is worth distinguishing deviations from consensus protocols in the the following manner: does the deviator's profit come at the cost of other participants in consensus, or at the cost of a user of the protocol? For example, Selfish Mining steals rewards from other miners, whereas double-spending steals cryptocurrency from a user. Numerous attacks at the cost of users have been detected across multiple blockchains. To the best of our knowledge,~\cite{YaishSZ22} recently discovered the first instance of a strategic manipulation of a consensus protocol whose profits were primarily at the expense of other consensus participants. In particular, to the best of our knowledge, no evidence of Selfish Mining in practice has been discovered.\footnote{It is not entirely clear how much effort has been spent aiming to detect Selfish Mining in practice, but we are not aware of any successful detections.} 

Note that this holds on Bitcoin, despite periods of time (including at the time of writing) when a single mining pool controls $>1/3$ of Bitcoin's total hashrate. This also holds on much smaller ``altcoins'' that are vulnerable to the same attack, despite the low cost of a single entity acquiring $>1/3$ of the total hashrate. The following is a summary of several potential explanations for this put forth by the community:

\begin{enumerate}[topsep=0pt]
\item It is possible that no single entity has had sufficient hashpower/network connectivity for Selfish Mining to be profitable on large cryptocurrencies like Bitcoin.\footnote{Note that a public mining pool is not a ``single entity'' --- see later bullet.} 

It is worth observing some limitations of this argument, beyond the present state of large cryptocurrencies. For example, this argument does not apply well to altcoins, as it is significantly less expensive to control a $>1/3$ fraction of the hashrate. Moreover, it explains why Selfish Mining \textit{has perhaps not yet occurred on Bitcoin}, but does not explain why it \textit{forever will not occur on Bitcoin}. Indeed, recall that with sufficiently high network connectivity, the hashrate required for Selfish Mining to be profitable can drop all the way to $0$ (so confirming that Selfish Mining is not a threat would require constant estimations of hashrate and network connectivity of large miners, which may be problematic in a pseudonymous protocol). 
\item There are significant engineering challenges to running a mining operation. Moreover, implementing Selfish Mining requires interacting with ASICs, and may not be as simple as its pseudocode. Therefore, the small boost to profit may not outweigh the opportunity cost of tackling more significant engineering challenges.\footnote{The authors learned of this possibility from Arvind Narayanan and Aviv Zohar, and thank them for suggesting it.} 

It is also worth observing some limitations of this argument, beyond the present state of cryptocurrency mining. Indeed, the cryptocurrency mining industry is still rapidly evolving, and the opportunity cost of ignoring other engineering challenges is high. But as the industry stabilizes, miners might have fewer options to improve profit margins.

\item If Selfish Mining were traceable to the deviator, the deviator might suffer targeted repercussions. In particular, note that Selfish Mining by a public mining pool would be traceable to the operator, because the mining pool has to ask its participants to Selfish Mine on its behalf\footnote{Observe that in order to successfully Selfish Mine, the Selfish Miner \emph{must not immediately broadcast blocks after creation}. Therefore, if a public mining pool wishes to Selfish Mine, they must request that members of the pool not broadcast their blocks immediately upon creation. Moreover, in order to successfully Selfish Mine, the Selfish Miner must sometimes \emph{point to a block that has not yet been broadcast} when creating a new block. Therefore, if a public mining pool wishes to Selfish Mine, they must request that members of the pool not broadcast the existence of the block to which their own created blocks will point. Due to both of these, any members of a public mining pool will detect that the mining pool is attempting to Selfish Mine (and because the mining pool is public, it will be hard to keep this hidden).} Therefore, miners may leave the pool, eliminating its ability to Selfish Mine in the first place. Similarly, the community (or perhaps a legal enforcement agency) could decide to censor all coins owned by the deviator. Note that targeted repercussions are certainly not guaranteed,\footnote{Indeed, we've previously noted that~\cite{YaishSZ22} recently identify a deviant mining pool on Ethereum. F2Pool quickly admitted to deviation, but it remains unclear if F2Pool has suffered as a result.} but the possibility may be sufficient to deter a risk-averse miner.

The limitation of this argument is simply that Selfish Mining is untraceable in absence of outside information (such as instructions from a public pool). So, while this argument applies well to public mining pools, it does not apply well to large individual miners (who can make a new public key for every block they mine).
\item Selfish Mining is statistically detectable, and the value of the underlying cryptocurrency may be (significantly) negatively affected as a result. Therefore, while Selfish Mining may be profitable when denominated in the underlying cryptocurrency, it could be wildly unprofitable when denominated in an objective unit of value (e.g. USD). 

This argument applies broadly to any miner considering any statistically detectable deviation. However, it still has limitations. For example, it relies on the miner's belief that detection will negatively impact the underlying cryptocurrency's value and inability to hedge against this,\footnote{It would be interesting for future work to estimate whether the value of ETH was impacted by~\cite{YaishSZ22}'s discovery and F2Pool's admission, although currently this remains unclear.} or the miner's risk-aversion. After our work, another limitation is that perhaps the deviation is statistically undetectable (in which case the market cannot possibly react to its detection).
\end{enumerate}

\noindent In summary, the above bullets provide answers to the following questions:

\noindent Despite theoretical profitability in a stylized model, \textbf{why might it be the case that no individual entity has Selfish Mined on Bitcoin?} Bullet One argues the possibility that no single entity ever possessed the hashrate/network connectivity to profit. Bullet Two argues that even if they did, the modest gains may not justify the engineering opportunity cost. Bullet Four further argues that even if the math works out when denominated in bitcoin, there is significant uncertainty and high risk of significant losses when denominated in USD.

\noindent Despite theoretical profitability in a stylized model, \textbf{why might it be the case that no individual entity has Selfish Mined on smaller altcoins?} Bullet Two argues that, even with sufficient resources to profitably Selfish Mine, the modest gains may not justify the engineering opportunity cost. Bullet Four further argues that even if the math works out when denominated in altcoin, there is significant uncertainty and high risk of significant losses when denominated in USD.

\noindent Despite theoretical profitability in a stylized model, \textbf{why might it be the case that no public mining pool has Selfish Mined on any cryptocurrency?} Bullet Two argues that, even with sufficient resources to profitably Selfish Mine, the modest gains may not justify the engineering opportunity cost. Bullet Three argues that there is significant uncertainty and high risk of significant retaliation (denominated in either the underlying cryptocurrency, or USD). Bullet Four further argues that even if the math works out when denominated in bitcoin, there is significant uncertainty and high risk of significant losses when denominated in USD.

They key question of interest to mechanism design researchers with an eye towards fundamental concepts not overfit to the current state of affairs is the following: given that Selfish Mining is theoretically profitable in a stylized model but has not been detected in practice, \textbf{is there a realistic concern of Selfish Mining in the future?} Prior to our work, Bullet Four is (in the authors' opinion, and anecdotally from community discussions) perhaps the strongest argument in the negative direction. Our work pushes this answer closer to the affirmative, and serves as practical motivation for the design and analysis of updated consensus protocols with stronger incentive guarantees.

Finally, note that while we have phrased this key question in the language of Selfish Mining in longest-chain proof-of-work cryptocurrencies, and our theorems hold only in this setting, the same principles apply broadly across the blockchain and web3 ecosystem. Indeed many other consensus protocols admit profitable deviations~\cite{CarlstenKWN16, KiayiasKKT16, SapirshteinSZ16, KalodnerGCWF18, BrownCohenNPW19,  FerreiraW21, FerreiraHWY22, YaishSZ22, YaishTZ22}, and many other aspects of the broader ecosystem currently leave (small numbers of) individual entities with the power to take significant harmful actions. However, it is commonly argued that participants in consensus protocols and powerful individuals elsewhere in the ecosystem are disincentivized from deviant behavior by identical reasoning to Bullet Four: it may be profitable when denominated in the underlying cryptocurrency/token/etc., but carry an uncertain risk of massive unprofitability when denominated in USD. Our work suggests that further care should be taken in such settings to understand whether such deviations are indeed detectable.
\section{Brief Comparison to Richer Latency Models}\label{app:latency}

Here, we first briefly note that all prior work we are aware of concerning strategic block withholding consider stylized latency models (essentially, our $0$-NCG)~\cite{EyalS14, KiayiasKKT16,SapirshteinSZ16, CarlstenKWN16, BrownCohenNPW19, NeuderMRP19, NeuderMRP20,FerreiraW21, FerreiraHWY22}. Still, it is worth a brief effort to connect this common stylized model to richer models from distributed computing, such as~\cite{GarayKL15, BirmpasKLC20}.

In comparison to~\cite{GarayKL15}, which is an adversarial latency model, our Attacker could certainly be implemented by their adversary, and therefore all of our results would hold in their model. But, our Attacker is using nowhere near their adversary’s capability (in particular, only the ``rushing'' aspect of their adversary is relevant to our Attacker). For example, to ``win all ties'' in our model, after finding a block their adversary could wait until an honest party finds a block, and then reorder all Receive() strings to place their own block first. The adversary does not need to delay messages outside of this reordering, nor corrupt the content of any messages. To ``lose all ties'' in our model, their adversary does not need to take any malicious actions, and just needs to be aware of all parties’ Receive() strings to time their own broadcasts.

In comparison to~\cite{BirmpasKLC20}, which is a stochastic latency model, our model is equivalent to their special case with $q_i = 1$ for all $i$. The two key differences to the most general version of the their model are: (a) their model is more complex in that it allows for different miners to experience different latency, and (b) in our model, every player is always within one block of being up to date, whereas players in the~\cite{BirmpasKLC20} model are sometimes multiple blocks behind. 
\section{Exact Analysis of Rewards Using Markov Chains}\label{app:exactreward}

To do the award analysis using a Markov Chain, we'll simplify analysis by counting \emph{only the Paired rounds, and who wins}. See Lemma~\ref{lem:pair} for why this suffices.

To analyze the strategy, we'll need a state to keep track of the following information:
\begin{itemize}[nosep]
\item How many blocks does the attacker have hidden that we \emph{know} will form a Pair?
\item Is the last hidden block of the attacker Pair, Single, or Undecided? (P, S, U). If there are no hidden blocks, we leave this blank.
\end{itemize}

We will transition \emph{every time the status of a new height is determined}. Note that several blocks may be created during a single transition. We do this for the following reason. Recall that our undetectability analysis computed a probability $P_h$ at the moment that $S(h-1)$ was determined. It is therefore simplest if we only have transitions from the moment $S(h-1)$ is determined to the moment $S(h)$ is determined (and nothing in between). This will let us a) compute $P_h$ the moment $S(h-1)$ is determined, and b) use $P_h$ explicitly in the transition to where $S(h)$ is determined.

Table~\ref{tab:transitions} summarises all transitions the Markov chain might take. The column $q_e$ refers to the probability of taking this transition, which fully defines the Markov chain. We introduce three counting schemes, which will let us confirm that calculations are done correct. $H^p_e$ and $S^p_e$ count the number of pairs we learn are guaranteed to be won by the honest party and attacker, respectively, during this transition. $H^b_e$ and $S^b_e$ count the number of blocks in the longest chain (part of a pair or not) that we learn are guaranteed to be won by the honest party and attacker, respectively, during this transition. $H^s_e$ and $S^s_e$ count the number of solo pairs we learn are guaranteed to be won by the honest party and attacker, respectively, during this transition.

\newcolumntype{C}{>{\arraybackslash}p{.95\textwidth}}

\begin{table}%
    \caption{The Markov Chain describing the attacker's strategy in the SP-Simple model, along with brief explanations of the events corresponding to each transition. The transitions out of State 0 are ommitted, as they are identical to transitions out of state $i$S with $i=0$. The respective values of $P_h$ are given in Table~\ref{tab:Ph}.}
    \label{tab:transitions}
        \begin{center}
        \begin{tabularx}{\textwidth}{ XXcccccc }
            \toprule
            Transition & $q_e$ & $H^p_e$ & $S^p_e$ &$H^b_e$&$S^b_e$&$H^s_e$&$S^s_e$ \\ \midrule
            ($i$S,0) & $(1-\alpha)^{i+1}$ & 0 & 0 & 0 & 1 & 0 & 0  \\
            \multicolumn{8}{C}{Honest completes all pairs plus an extra block, then Attacker finds a block. Attacker publishes all hidden blocks.}\\ \hline
            ($i$S,$(i-j+1)$P) & $(1-\alpha)^j\alpha \frac{\beta}{P_h}$ & 0 & 0 & 0 & 0 & 0 & 0 \\
            \multicolumn{8}{C}{Honest completes $j$ pairs, then Attacker finds a block. The  Attacker immediately decides the new block is Pair.}\\ \hline
            ($i$S,$(i-j)$S) & $(1-\alpha)^j\alpha (1-\frac{\beta}{P_h})$ & 0 & 0 & 0 & 1 & 0 & 0 \\ 
            \multicolumn{8}{C}{Honest completes $j$ pairs, then Attacker finds a block. The  Attacker immediately decides the new block is Single.}\\ \hline
              ($i$U, 0) & $(1-\alpha)^i$ & 0 & 0 & 0 & 0 &0 &0 \\
            \multicolumn{8}{C}{Honest finds the next $i$ blocks. The last block is pivotal, and Attacker publishes it as Single.}\\ \hline
            ($i$U,$(i+1-j)$U) & $(1-\alpha)^j\alpha\frac{\beta}{P_h}$ & 0 & 1 & 0 & 1 & 0 & 0 \\
            \multicolumn{8}{C}{Honest finds $j$ blocks and then Attacker finds a block. Attacker decides (in advance) that the undecided block is Pair. The new Attacker block might be pivotal.}\\ \hline
            ($i$U,$(i+1-j)$P) & $(1-\alpha)^j\alpha(1-\frac{\beta}{P_h})\beta$ & 0 & 0 & 0 & 0 & 0 & 0 \\ 
            \multicolumn{8}{C}{Honest finds $j$ blocks and then Attacker finds a block. Attacker decides that the undecided block is Single and the new block is Pair.}\\ \hline
            ($i$U,$(i+1-j)$S) & $(1-\alpha)^j\alpha(1-\frac{\beta}{P_h})(1-\beta)$ & 0 & 0 & 0 & 1 & 0 & 0 \\
            \multicolumn{8}{C}{Honest finds $j$ blocks and then Attacker finds a block. Attacker decides that the undecided block is Single and the new block is Single.}\\ \hline
            ($i$P,0) & $(1-\alpha)^{i+1}$ & $1-\gamma$ & $\gamma$ & $2-\gamma$ & $\gamma$ & $1-\gamma$ & $\gamma$ \\
            \multicolumn{8}{C}{Honest completes all pairs, triggers a race, which Honest wins. Tiebreaks for Attacker w.p. $\gamma$}\\ \hline
            ($i$P,0) & $(i+1)(1-\alpha)^i\alpha$ & 0 & 1 & 0 & 2 & 0 & 1 \\
            \multicolumn{8}{C}{Honest completes all pairs, triggers a race, which Attack wins.}\\ \hline
            ($i$P, $(i+1-j)$U) & $(j+1)\alpha^2(1-\alpha)^{j}\frac{\beta}{P_h}$ & 0 & 2 & 0 & 3 & 0 & 0 \\ 
            \multicolumn{8}{C}{Honest finds $j$ blocks before Attacker finds 2 blocks $a,b$, and the first Attacker block $a$ is marked as Pair. Guarantees win of last hidden pair and $a$ for Attacker. $b$ will be in longest chain.}\\ \hline
            ($i$P, $(i+1-j)$P) & $(j+1)\alpha^2(1-\alpha)^{j}(1-\frac{\beta}{P_h})\beta$ & 0 & 1 & 0 & 2 & 0 & 1 \\
            \multicolumn{8}{C}{Honest finds $j$ blocks before Attacker finds 2 blocks $a,b$, with $a$ is marked as Single and $b$ as Pair. Guarantees win of last hidden pair by Attacker, but the fate of $b$ is unknown.}\\ \hline
            ($i$P, $(i-j)$S) & $(j+1)\alpha^2(1-\alpha)^{j}(1-\frac{\beta}{P_h})(1-\beta)$ & 0 & 1 & 0 & 3 & 0 & 1 \\
            \multicolumn{8}{C}{Honest finds $j$ blocks before Attacker finds 2 blocks $a,b$ and marks both as Single. Guarantees win of last hidden pair by Attacker.}\\
            \bottomrule
            \end{tabularx}
        \end{center}
             \bigskip    \bigskip
\end{table}%

\begin{table}
    \caption{The values of $P_h$ in the attacker's strategy in the SP-Simple Model, described in Table~\ref{tab:transitions}. Justification for these expressions is given in Lemma~\ref{lem:Ph-justify}}\bigskip
    \label{tab:Ph}
    \begin{minipage}{\columnwidth}
        \begin{center}
            \begin{tabular}{ cl }
                \toprule
                Transition & $P_h$ \\ \midrule
                ($i$S,$\cdot$) & $1-(1-\alpha)^{i+1}$ \\
                ($i$U,$\cdot$) & $1-(1-\alpha)^i$ \\
                ($i$P,$\cdot$) & $1-(1-\alpha)^{i+1}-(i+1)\alpha(1-\alpha)^i$ \\
                \bottomrule
            \end{tabular}
        \end{center}
    \end{minipage}
\end{table}

\begin{lemma}\label{lem:Ph-justify} Table~\ref{tab:Ph} gives the correct expressions for computing $P_h$ in SP-Simple game.
\end{lemma}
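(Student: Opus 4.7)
The plan is to verify each of the three formulas in Table~\ref{tab:Ph} separately by directly unwinding Definition~\ref{def:labelling}. For each state category, the key inputs are (a) which case of Claim~\ref{claim:when-label} applies at the moment $S(h-1)$ is first known (which determines whether the block at $h$ has already been created, and if so whether it is automatically safe or whether its safety is still undetermined), and (b) the position of Honest's public chain relative to the top $h^*$ of Attacker's hidden chain at that moment. Once these two pieces are pinned down, $P_h$ is a textbook geometric race.

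The first substantive step is to establish an invariant on Honest's chain position: in states $i$P and $i$S, Honest's public chain sits at height $h^*-i$, whereas in state $i$U it sits at height $h^*-i-1$. The extra offset in the U case arises because entering a U state requires Attacker to produce the top block via case (c) of Claim~\ref{claim:when-label} (i.e., before Honest broadcasts the previous Pair), so Honest is one extra step behind relative to the top. I would prove this invariant by induction over the arrows in Table~\ref{tab:transitions}: for each incoming transition, track how Attacker's new wins push $h^*$ upward and how each Honest win either closes out a hidden Pair (advancing the public chain by one) or cascade-broadcasts past hidden Singles (advancing by more than one), and verify that Honest's resulting position matches the invariant in the destination state.

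With the invariant in hand, the three $P_h$ formulas drop out as direct race computations. For state $i$S the previous label is Single, so any Attacker block at $h=h^*+1$ is automatically safe by case (a); hence $P_h$ equals the probability that Attacker wins at least one of the $i+1$ rounds needed for Honest to reach $h$, which is $1-(1-\alpha)^{i+1}$. For state $i$U, the block at $h=h^*$ has already been created by Attacker at the moment $S(h-1)$ is first known (case (c)), so ``Attacker creates $h$'' occurs with probability $1$; safety only requires Attacker's next win before Honest accumulates the $i$ wins needed to reach $h-1$, yielding $1-(1-\alpha)^i$. For state $i$P I would condition on the round $T_1\in\{1,\ldots,i+1\}$ at which Attacker makes its first post-entry win (creating $h$), at which point safety is a further geometric race in which Honest needs $i-T_1+1$ more wins to reach $h-1$; summing
\[
\sum_{T_1=1}^{i+1}(1-\alpha)^{T_1-1}\alpha\bigl(1-(1-\alpha)^{i-T_1+1}\bigr)
\]
and telescoping the two resulting geometric sums gives $1-(1-\alpha)^{i+1}-(i+1)\alpha(1-\alpha)^i$.

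The main obstacle is pinning down the position invariant in the second paragraph. The transitions in Table~\ref{tab:transitions} are deliberately coarse---each arrow bundles several block creations together with coin flips and cascade broadcasts of Single blocks---so it is easy to be off by one when tracking exactly which hidden Pair blocks have been closed out by the time $S(h-1)$ is first known. Once the invariant is established, the three race computations above are essentially one-line arguments that follow immediately from the definition of $P_h$.
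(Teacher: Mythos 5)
Your three race computations are correct and, modulo the different decomposition in the $i$P case (you condition on the round of Attacker's first win and telescope, whereas the paper subtracts the probabilities of the two complementary events ``Honest creates the first block at height $h$'' and ``Attacker creates it but it is pivotal''), this is exactly the paper's proof. However, the step you yourself flag as the crux --- the position invariant --- is wrong as stated, and the induction you propose would fail. In state $i$S the hidden chain consists of the $i$ unresolved Pairs together with \emph{at least one} trailing hidden Single on top (and possibly several: e.g.\ the transition $(i\text{S},(i-j)\text{S})$ with $j<i$ stacks a second hidden Single on the first, and Singles can also sit interleaved between hidden Pairs). Consequently Honest's public chain sits at height $h^*-i-1$ or lower, not $h^*-i$, and more fundamentally its position relative to $h^*$ is \emph{not a function of the Markov state at all}, since the state does not record how many hidden Singles are present. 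Any induction over the transitions of Table~\ref{tab:transitions} attempting to establish your stated invariant will break at the first transition that creates a second trailing Single.

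The reason your final formulas are nonetheless correct is that the quantity that actually matters --- and that \emph{is} determined by the state --- is the number of Honest round-wins needed to reach the relevant heights: each Honest win resolves exactly one hidden Pair, and every hidden Single sitting above a just-resolved height cascades out at zero cost, so Honest needs exactly $i+1$ wins to create the first block at height $h$ from states $i$S and $i$P, and exactly $i$ wins to reach height $h-1$ from states $i$P and $i$U, regardless of how many hidden Singles are present or where they sit. Restate your invariant in terms of these win-counts (with the cascading observation as the inductive step) rather than absolute positions, and the rest of your argument goes through verbatim; this is also, implicitly, what the paper's one-line justifications (``Honest completes all $i$ Pairs, and then finds an extra block'') are relying on.
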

\begin{proof}
    Recall that in the SP-Simple model, $P_h$ is defined as the probability that Attacker creates a block of height $h$ and that block is safe, conditioned on all information available \emph{as of the first moment we know} $S(h')$ for all $h' < h$. Recall also that transitions in the Markov chain happen from when $S(h-1)$ is determined to when $S(h)$ is determined. When the originating state in a transition is $i$S or $i$P, $h-1$ is the height of the last hidden block, since its state is known to be S or P, and $h$ is the new height whose state is determined via the transition. When the originating state in a transition is $i$U, height $h$ refers to \emph{the undecided block} whose state is determined via the transition.
\begin{itemize}
    \item{($i$S,$\cdot$) or $(0,\cdot)$}: Height $h-1$ is Single, so if Attacker finds the first block of height $h$, it will be safe. Attacker will mine the first block of height $h$ unless Honest completes all $i$ Pairs, and then finds an extra block, all before Attacker finds a block. Therefore, $P_h=1-(1-\alpha)^{i+1}$.

    \item{($i$U,$\cdot$)}: The Undecided block at height $h$ will be safe unless Honest completes all $i$ Pairs before Attacker finds a block, making the undecided block pivotal. Therefore, $P_h=1-(1-\alpha)^{i}$.

    \item{($i$P,$\cdot$)}: The event measured by $P_h$ holds \emph{unless} one of the following (mutually exclusive) events occures: 
    \begin{itemize}
        \item The block at height $h$ is mined by Honest. This happens if Honest completes the $i$ Pairs and finds another block before Attacker finds any blocks, which happens with probability $(1-\alpha)^{i+1}$.
        \item The block at height $h$ is mined by Attacker but is Pivotal; that is, Honest finds a block of height $h-1$ before attacker finds a block of height $h+1$. This happens iff among the next $i+1$ blocks to be discovered, Honest finds $i$ and Attacker finds one. The probability of this is $(i+1)\alpha(1-\alpha)^i$.
    \end{itemize}
    Combining the two implies $P_h1-(1-\alpha)^{i+1}-(i+1)\alpha(1-\alpha)^i$.\qedhere
\end{itemize}
\end{proof}

\begin{theorem}[Markov Chain Specification]
The following system of equations characterizes the stationary distribution $\mathbf{p}$ of the Markov chain in Table~\ref{tab:transitions}:
\begin{align*}
    p_{0} &=
        \sum_{j=0}^\infty (1-\alpha)^{j+1} \cdot p_{j,S}
        + \sum_{j=1}^\infty \left((1-\alpha)^{j+1}+(j+1)\alpha(1-\alpha)^j\right)\cdot p_{j,P}
        + \sum_{j=2}^\infty(1-\alpha)^j \cdot p_{j,U}\\
    p_{i,S} &=
        \sum_{j=0}^\infty (1-\alpha)^{j}\alpha\left(1-\frac{\beta}{1-(1-\alpha)^{i+j+1}}\right)\cdot p_{i+j,S}\\
        &+ (j+1)\alpha^2(1-\alpha)^{j}\left(1-\frac{\beta}{1-(1-\alpha)^{i+j+1}-(i+j+1)\alpha(1-\alpha)^{i+j}}\right)(1-\beta) \cdot p_{i+j,P}\\
        &+ \alpha(1-\alpha)^j \left(1-\frac{\beta}{1-(1-\alpha)^{i+j}}\right)(1-\beta) \cdot p_{i+j,U} \tag{\text{for $i\geq 1$}}\\
    p_{i,U} &=
        \sum_{j=0}^\infty (j+1)\alpha^2(1-\alpha)^{j} \frac{\beta}{1-(1-\alpha)^{i+j+1}-(i+j+1)\alpha(1-\alpha)^{i+j+1}}(1-\beta)\cdot p_{i+j+1,P}\\
        &+ (1-\alpha)^j\alpha \cdot \frac{\beta}{1-(1-\alpha)^{i+j+1}} \cdot p_{i+j+1,U}\tag{\text{for $i\geq 2$}}\\ 
    p_{i,P} &=
        \sum_{j=0}^\infty (1-\alpha)^{j}\alpha\cdot\frac{\beta}{1-(1-\alpha)^{i+j-1}}\cdot p_{i+j-1,S}\\
        &+ (j+1)\alpha^2(1-\alpha)^{j}\left(1-\frac{\beta}{1-(1-\alpha)^{i+j+1}-(i+j+1)\alpha(1-\alpha)^{i+j+1}}\right)\beta\cdot p_{i+j+1,P}\\
        &+ (1-\alpha)^j\alpha \left(1-\frac{\beta}{1-(1-\alpha)^{i+j+1}}\right)\beta\cdot p_{i+j+1,U}\tag{\text{for $i\geq 1$}}
\end{align*}
\end{theorem}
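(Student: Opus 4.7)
The plan is to establish these four equations as the global balance equations for the Markov chain in Table~\ref{tab:transitions}, substituting the explicit expressions for $P_h$ given by Lemma~\ref{lem:Ph-justify} (Table~\ref{tab:Ph}). Concretely, for each target state $s \in \{0\} \cup \{iS, iU, iP : i \geq 1\}$, I will write $p_s = \sum_{s'} p_{s'} \cdot q(s' \to s)$, enumerate all source states $s'$ that can transition into $s$, look up $q(s'\to s)$ in Table~\ref{tab:transitions}, and match terms with the claimed formula.

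First I would verify that the underlying Markov chain admits a unique stationary distribution so that the global balance equations uniquely characterize $\mathbf{p}$ (irreducibility and positive recurrence; the chain is clearly irreducible because from any state we can reach $0$ via a sequence of Honest blocks, and positive recurrence follows from the fact that the expected return time to $0$ is finite whenever $\alpha < 1$). Once this is in place, showing that $\mathbf{p}$ solves the equations suffices.

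Next I would go target-by-target. For $p_0$, the only incoming transitions are $(jS,0)$, $(jP,0)$ (both rows collapse: $(1-\alpha)^{j+1} + (j+1)\alpha(1-\alpha)^j$), and $(jU,0)$, yielding precisely the three summations in the statement; the $p_{j,U}$ sum begins at $j=2$ because state $1U$ has no incoming mass (there is no way to produce exactly one undecided hidden block without first passing through a state that already contains undecided blocks). For $p_{i,S}$ with $i \geq 1$, the incoming transitions are $((i+j)S, iS)$, $((i+j)P, iS)$ of the safe-marked-Single variety, and $((i+j)U, iS)$ of the undecided-marked-Single variety; reindexing the "distance traveled" by $j$ and plugging in $P_h = 1-(1-\alpha)^{i+j+1}$, $1-(1-\alpha)^{i+j+1} - (i+j+1)\alpha(1-\alpha)^{i+j}$, and $1-(1-\alpha)^{i+j}$ respectively from Table~\ref{tab:Ph} yields the claimed expression. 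The equations for $p_{i,U}$ and $p_{i,P}$ follow analogously: for $p_{i,U}$ I collect the $(\cdot,U)$ rows of Table~\ref{tab:transitions}, and for $p_{i,P}$ I collect the $(\cdot,P)$ rows.

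The main obstacles I expect are bookkeeping rather than mathematical. First, I need to be careful about \emph{which height $h$ is referenced by $P_h$ in each transition}, because for originating states $iS$ and $iP$ the height $h$ is one beyond the last hidden block, while for originating state $iU$ the height $h$ is that of the undecided block itself; this changes the exponent on $(1-\alpha)$ inside $P_h$. Second, the reindexing between the transition row labels (which are indexed by the distance $j$ from the source) and the summation indices in the claimed formula (which are indexed by the \emph{source} state) must be done consistently, and the boundary indices ($i=1$ for $S$ and $P$, $i=2$ for $U$) require checking that all allegedly-missing transitions truly are impossible. Once these case distinctions are set up cleanly in a table, the rest is direct substitution.
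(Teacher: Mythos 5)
Your proposal is correct and matches the paper's (implicit) approach: the paper states this theorem with no proof at all, treating the equations as a direct transcription of the global balance equations $p_s=\sum_{s'}p_{s'}\cdot q(s'\to s)$ read off from Table~\ref{tab:transitions} with the $P_h$ values of Table~\ref{tab:Ph} substituted in, which is exactly the computation you describe. Your additional care about uniqueness of the stationary distribution, the height referenced by $P_h$ in each originating state, and the boundary indices is a reasonable fleshing-out of what the paper leaves unsaid.
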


\begin{theorem}[Reward Specification]
Let $\{p_u\}_u$ denote the stationary distribution of the Markov Chain described in Table~\ref{tab:transitions}. For any transition $e$ from state $u$ to state $v$, let $p_e:=p_u\cdot q_e$ denote the fraction of transitions spent taking transition $e$. Then the following three equations each compute the expected reward of Undetectable Selfish Mining:

$$\frac{\sum_{e} p_e \cdot S^b_e}{\sum_e p_e \cdot (H^b_e+S^b_e)},$$
$$\frac{\alpha - \beta + \beta \cdot \frac{\sum_{e} p_e \cdot S^p_e}{\sum_e p_e \cdot (H^p_e+S^p_e)}}{1-\beta}.$$
$$\frac{\alpha - \beta + \beta \cdot \left(\frac{\sum_{e} p_e \cdot S^s_e}{\sum_e p_e \cdot (H^b_e+S^s_e)}+\left(2-\frac{\sum_{e} p_e \cdot S^s_e}{\sum_e p_e \cdot (H^b_e+S^s_e)}\right)\beta -\left(1-\frac{\sum_{e} p_e \cdot S^s_e}{\sum_e p_e \cdot (H^b_e+S^s_e)}\right)\beta^2\right)}{1-\beta}$$
\end{theorem}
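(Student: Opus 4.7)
The plan is to verify each of the three candidate expressions equals the steady-state reward of Undetectable Selfish Mining, by combining the ergodic theorem for the Markov chain of Table~\ref{tab:transitions} with Lemmas~\ref{lem:undetectable-SP} and~\ref{lem:pair}. First I would confirm that the chain is irreducible, aperiodic, and positive recurrent (the chain visits state $0$ after any finite excursion with uniformly positive probability by the geometric tail of Honest wins, giving a regeneration structure), so that $\mathbf{p}$ is well-defined and time averages of any per-transition counter $X_e$ converge almost surely to $\sum_e p_e X_e$. Each of the three candidate expressions is a ratio of such averages, so the rest of the proof reduces to matching the bookkeeping counters of Table~\ref{tab:transitions} to the quantities they are meant to represent.

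For the first expression, I would verify row-by-row that $(H^b_e, S^b_e)$ records exactly the blocks newly guaranteed to lie in the longest chain when transition $e$ fires. Granted this, $\sum_e p_e\, S^b_e$ is the expected number of Attacker longest-chain blocks per transition and $\sum_e p_e(H^b_e + S^b_e)$ the expected total, so their ratio is by definition the reward. For the second expression I would identify $\delta := \sum_e p_e\, S^p_e / \sum_e p_e(H^p_e + S^p_e)$ as the long-run fraction of Pair heights whose winner is Attacker, after again checking the $(H^p_e, S^p_e)$ column row-by-row. Since by Lemma~\ref{lem:undetectable-SP} the states $S(h)$ are i.i.d.~with $\Pr[\text{Pair}] = \beta$, the fraction of heights that are Pair is exactly $\beta$, and I can substitute $\delta$ and $\beta$ into the reward formula of Lemma~\ref{lem:pair} and simplify to the claimed expression. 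The third expression is handled analogously: I read off $w := \sum_e p_e\, S^s_e / \sum_e p_e(H^s_e + S^s_e)$ as the fraction of \emph{Solo} Pairs won by Attacker, use the i.i.d.~property to conclude that runs of consecutive Pairs have a geometric length distribution with parameter $\beta$ and (by the argument from the proof of Claim~\ref{claim:solopairs}) that every Pair inside a run of length $\geq 2$ is won by Attacker, reconstruct the overall $\delta$ from $w$ and $\beta$ exactly as in the proof of the main corollary in Section~\ref{sec:main1}, and substitute into Lemma~\ref{lem:pair}.

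The principal obstacle is the transition-by-transition validation of Table~\ref{tab:transitions}. Each of the twelve templated rows must be checked to confirm that the transition probability $q_e$, the counter values $(H^b_e, S^b_e, H^p_e, S^p_e, H^s_e, S^s_e)$, and the textual narrative all describe the same event with no double-counting of previously-pending Pairs, no missing the block that triggers the transition, and correct accounting for the $i\textrm{U}$ states where resolution of an earlier block is deferred. Transitions out of $i\textrm{P}$ states are particularly delicate because a single transition can close out an entire block of $i$ pending Pairs (with the race on the bottom pair resolved by either $\gamma$-tiebreak or Attacker's next block) while simultaneously fixing the label of one or two newly-created Attacker blocks; the solo-pair counter in particular must filter out all Pairs that happen to lie adjacent to another Pair. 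Once every row is checked and the counters are reconciled with the definitions of ``block in the longest chain,'' ``Pair whose winner is determined,'' and ``Solo Pair whose winner is determined,'' the three expressions collapse (via Lemma~\ref{lem:pair} and the geometric run computation) to the same steady-state reward, completing the proof.
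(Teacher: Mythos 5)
Your plan follows the paper's proof almost exactly: the paper's own argument is a three-line sketch (the first ratio is ``classical Selfish Mining analysis,'' the second ``follows by Lemma~\ref{lem:pair},'' and the third is obtained by expressing the overall Pair-win rate $\delta$ in terms of the Solo-Pair win rate $w$ via the geometric run-length computation $\delta = w + (2-w)\beta - (1-w)\beta^2$), and your proposal is that same argument with the missing bookkeeping made explicit (ergodicity of the chain, row-by-row validation of the counters, the identification $\delta = \sum_e p_e S^p_e / \sum_e p_e (H^p_e + S^p_e)$). Two small notes: the recurrence argument should not claim a \emph{uniform} lower bound on the return probability to state $0$ (the collapse probability $(1-\alpha)^{i+1}$ out of state $i$S decays in $i$; positive recurrence instead follows from negative drift when $\alpha < 1/2$), and the denominator $H^b_e + S^s_e$ in the third display is evidently a typo for $H^s_e + S^s_e$, which you silently and correctly fix.

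There is, however, one step in your plan that does not close as written, and the paper's one-line justification has the identical issue. You assert that substituting $\delta$ and $\beta$ into Lemma~\ref{lem:pair} ``simplifies to the claimed expression.'' Lemma~\ref{lem:pair} gives reward $\alpha - (1-\alpha-\delta)\beta = \alpha + \alpha\beta - \beta + \delta\beta$, while the theorem's second display is $\frac{\alpha - \beta + \delta\beta}{1-\beta}$; these coincide only when $\beta = 0$ or $\delta = 1-\alpha$ (exactly the break-even point), so no algebraic simplification connects them in general. As a sanity check, a strategy winning every Pair has $\delta = 1$, for which Lemma~\ref{lem:pair} (and Lemma~\ref{lem:warmupprofitable}) gives $\alpha + \alpha\beta$, whereas the second display gives $\alpha/(1-\beta)$. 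The third display inherits the same discrepancy, since it is the second with $\delta$ replaced by $w + (2-w)\beta - (1-w)\beta^2$ (that substitution itself you justify correctly). So either the displayed formulas should read $\alpha + \alpha\beta - \beta + \delta\beta$ to match Lemma~\ref{lem:pair}, or a different accounting must be supplied to justify the $1-\beta$ normalization; your plan, like the paper's, does neither, and you should flag this rather than assert that the algebra goes through.
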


\begin{proof}
The first equation follows by the exact same logic as classical Selfish Mining analysis. The second follows by Lemma~\ref{lem:pair}. 

The final equation follows by the following calculations, building off Lemma~\ref{lem:pair}. Observe that Undetectable Selfish Mining has the following properties:
\begin{itemize}[nosep]
\item If there are ever multiple Pair rounds in a row, USM wins them all.
\item If there is a solo Pair round, USM may win or lose it.
\item The number of Pair rounds in a row (between two Single rounds) is distributed independently across time, and equal to $i$ with probability $(1-\beta)\cdot \beta^i$. 
\end{itemize}
\noindent
Therefore, we can write the expected number of Pair rounds between two Single rounds as:
$$\sum_{i=0}^\infty (1-\beta)\beta^i \cdot i = \frac{\beta}{1-\beta}.$$
\noindent
And the expected number of Pair rounds won by the attacker between two Single rounds as:
$$w\cdot (1-\beta)\beta + \sum_{i=2}^\infty (1-\beta)\beta^i \cdot i = \frac{(2-\beta)\beta^2}{1-\beta}+w\beta(1-\beta) = \frac{2\beta^2 - \beta^3 + w\beta - 2w\beta^2+w\beta^3}{1-\beta}.$$
\noindent
Therefore, the expected fraction of Pairs won by the attacker is:
\begin{align*}
\frac{\frac{2\beta^2 - \beta^3 + w\beta - 2w\beta^2+w\beta^3}{1-\beta}}{\frac{\beta}{1-\beta}} = w+(2-w)\beta -(1-w)\beta^2.
\end{align*}
\end{proof}

\section{A Simpler Selfish Mining Analysis}\label{sec:simplerSM}
    For readers who are mostly interested in when Selfish Mining is strictly profitable (but not by how much), we provide a shorter analysis of the classical selfish mining strategy that avoids analyzing a Markov chain.

    \begin{lemma}[\cite{EyalS14}] When using the Selfish Mining Strategy that wins a $\gamma$ fraction of ties, Attacker wins more than a $(1-\alpha)$ fraction of Pair rounds if and only if $\alpha \geq \frac{1-\gamma}{3-2\gamma}$.
    \end{lemma}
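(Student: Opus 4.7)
The plan is to apply Corollary~\ref{cor:paircounting}: the classical Selfish Mining strategy is strictly profitable if and only if Attacker wins strictly more than a $(1-\alpha)$ fraction of Pair rounds. So the task reduces to computing, in closed form, the long-run fraction of Pairs won by Attacker. Let the \emph{state} denote Attacker's current number of hidden blocks. The structural observation driving the argument is that a Pair is created precisely when Honest mines while the state is at least $1$: in state $1$, Honest mining initiates a race that Attacker wins with probability $\alpha + (1-\alpha)\gamma$; in any state $k \geq 2$, Attacker has another hidden block on top of the matching block they reveal, so Honest's new block is eventually orphaned and Attacker wins the Pair with certainty. In particular, the only Pairs Attacker ever loses come from lost races out of state $1$.

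I would then run a renewal argument across cycles between consecutive visits to state $0$. Attacker loses at most one Pair per cycle, and a cycle contains a lost Pair with probability $\alpha(1-\alpha) \cdot (1-\alpha)(1-\gamma) = \alpha(1-\alpha)^2(1-\gamma)$ (the path state $0 \to $ state $1 \to $ race $\to $ lost race). For the expected number of Pairs per cycle, I condition on the initial moves: probability $1-\alpha$ gives a trivial cycle; probability $\alpha(1-\alpha)$ gives exactly one race and hence one Pair; probability $\alpha^2$ advances to state $2$, after which the process becomes a biased random walk ($+1$ with probability $\alpha$, $-1$ with probability $1-\alpha$) that generates one Attacker-won Pair at every $-1$ step and terminates on the first downward exit from state $2$. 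Wald's identity applied to the first passage from state $2$ down to state $1$ (net displacement $-1$) gives expected hitting time $1/(1-2\alpha)$, hence $(1-\alpha)/(1-2\alpha)$ expected Pairs from state $2$, so expected Pairs per cycle equal $\alpha(1-\alpha)^2/(1-2\alpha)$.

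Dividing, the fraction of Pairs Attacker loses is $(1-\gamma)(1-2\alpha)$ and the fraction won is $1-(1-\gamma)(1-2\alpha)$; asking the latter to strictly exceed $1-\alpha$ and simplifying gives $\alpha > (1-\gamma)/(3-2\gamma)$, which is the claim. The main obstacle is the first-passage expectation from state $2$; Wald's identity handles it in a single line and sidesteps any explicit stationary distribution computation. As sanity checks, setting $\gamma = 0$ recovers the classical $\tfrac{1}{3}$ Eyal--Sirer threshold, and $\gamma = 1$ recovers the trivial $\alpha > 0$ threshold for Strong Selfish Mining. (The random-walk argument assumes $\alpha < 1/2$, but since $(1-\gamma)/(3-2\gamma) \leq 1/3$ for all $\gamma \in [0,1]$, the case $\alpha \geq 1/2$ is immediate.)
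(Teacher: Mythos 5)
Your proposal is correct and follows essentially the same route as the paper's proof in Appendix~\ref{sec:simplerSM}: both reduce to the fraction of Pairs won via Corollary~\ref{cor:paircounting}, decompose into renewal cycles at returns to state $0$ with the same three-case analysis of the first moves, and arrive at the identical expected count $\alpha(1-\alpha)^2/(1-2\alpha)$ of Pairs per cycle. The only cosmetic differences are that you compute the lead-$\geq 2$ excursion length via Wald's identity where the paper uses the one-line recurrence $x=(1-\alpha)\cdot 0+\alpha(2x+1)$, and that you count lost Pairs rather than won ones.
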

    \begin{proof}
    Let's count the ways that a Paired round can arise. Starting from state $0$, the attacker must find the next block (otherwise, we produce a Single honest round). Conditioned on this, there are three possible cases:
    \begin{itemize}
    \item The next two blocks are honest. This happens with probability $(1-\alpha)^2$, and results in a winning pair with probability $\gamma$.
    \item The next block is honest, and the second block is attacker. This happens with probability $(1-\alpha)\alpha$, and results in a winning pair with probability $1$.
    \item The next block is from the attacker. From here, there will be at least one Pair, and the attacker will win all pairs. This occurs with probability $\alpha$. The expected number of Pairs, conditioned on this, is $1+\frac{\alpha}{1-2\alpha}$.\footnote{To quickly see this, observe that the number of Pairs is equal to one plus the number of blocks the attacker finds before it loses its lead of one. This latter number $x$ satisfies the recurrence $x = (1-\alpha)\cdot 0 + \alpha \cdot (2x+1)$: if the honest finds a block immediately, $x=0$. If the attacker finds the next block, we now need to lose a lead of one twice.}
    \end{itemize}

    Thus, every time the attacker finds a block after a reset, the expected number of pairs created is:
    \[(1-\alpha)^2\cdot 1 + \alpha(1-\alpha) \cdot 1 + \alpha\cdot \frac{1-\alpha}{1-2\alpha}.\]

    And the expected number of winning pairs created is:

    $$(1-\alpha)^2\cdot \gamma + \alpha(1-\alpha) \cdot 1 + \alpha\cdot \frac{1-\alpha}{1-2\alpha}.$$

    Observe that when $\gamma = 1$, their ratio (unsurprisingly) is $1 > 1-\alpha$ for all $\alpha > 0$. Observe also that when $\gamma = 0$, substituting in $\alpha = 1/3$ results in a ratio of $2/3 = 1-\alpha$. For general $\gamma$, observe that their ratio exceeds $(1-\alpha)$ if and only if:

    \begin{align*}
    \frac{(1-\alpha)^2\cdot \gamma + \alpha(1-\alpha) \cdot 1 + \alpha\cdot \frac{1-\alpha}{1-2\alpha}}{(1-\alpha)^2\cdot 1 + \alpha(1-\alpha) \cdot 1 + \alpha\cdot \frac{1-\alpha}{1-2\alpha}} &\geq 1-\alpha\\
    \Leftrightarrow \frac{(1-2\alpha)(1-\alpha)^2\cdot \gamma + \alpha(1-\alpha)(1-2\alpha)+\alpha(1-\alpha)}{(1-2\alpha)(1-\alpha)^2\cdot 1 + \alpha(1-\alpha)(1-2\alpha)+\alpha(1-\alpha)} &\geq 1-\alpha\\
    \Leftrightarrow \frac{(1-2\alpha)(1-\alpha)\cdot \gamma + \alpha(1-2\alpha)+\alpha}{(1-2\alpha)(1-\alpha)\cdot 1 + \alpha(1-2\alpha)+\alpha} &\geq 1-\alpha\\
    \Leftrightarrow \frac{(1-2\alpha)(1-\alpha)\cdot \gamma + 2\alpha(1-\alpha)}{(1-2\alpha)(1-\alpha)\cdot 1 + 2\alpha(1-\alpha)} &\geq 1-\alpha\\
    \Leftrightarrow \frac{(1-2\alpha)\cdot \gamma + 2\alpha}{(1-2\alpha)\cdot 1 + 2\alpha} &\geq 1-\alpha\\
    \Leftrightarrow (1-2\alpha)\cdot \gamma + 2\alpha &\geq 1-\alpha\\
    \Leftrightarrow (3-2\gamma)\alpha &\geq 1-\gamma\\
    \Leftrightarrow \alpha &\geq \frac{1-\gamma}{3-2\gamma}.
    \end{align*}
    \end{proof}

\section{Generalizing the SP-Simple Model}\label{app:general}
For simplicity of notation, we'll let $\beta'$ refer to the probability that both players create a block, $\alpha'$
refer to the probability that only the strategic player creates a block. Note that
$1-\alpha' - \beta'$
is the probability that only the honest player creates a block. Solving $\alpha=\frac{\beta'+\alpha'}{1+\beta'}$ for $\alpha'$, we get $\alpha'=\alpha(1+\beta')-\beta'$.

\subsection{SP-Simple vs. SP-General}
In the SP-Simple model, the attacker scales the bias of the their coin flip by $P_h$ to account for the possibility of their block becoming pivotal and thus a \emph{Forced Single}. The scaling serves to up-weight the probability of a Pair label in cases where the block does in fact end up safe.

In the SP-General model, there is still a possibility of Forced Singles when an attacker block ends up pivotal. Additionally, there is also the possibility of \emph{Forced Pairs} since forks are possible even if the attacker is honest. This imposes an additional constraint on ranges of $\beta$ with respect to $\beta'$ (Lemma~\ref{lem:validity-general}).

\subsection{The Strategy}
We'll again break down the strategy into a labeling strategy first, and a broadcasting strategy second.

\begin{definition}[$P$s and $Q$s]\label{def:P-Q}
For each height $h\geq 1$, define $P_h$ to be the probability that the attacker creates a safe block at height $h$, and $Q_h$ to be the probability that there is a Forced Pair at height $h$.
\end{definition}

\begin{definition}[Conditional $P$s and $Q$s]\label{def:con-P-Q}
For each height $h\geq 1$ and state sequence \footnote{For simplicity, we will omit the label of the Genesis block, and start all state sequences at height 1 from here on.} $s\in\{\text{Single, Pair}\}^*$, we write $P_h\big\rvert_s$ to denote the probability that the attacker creates a safe block at height $h$, conditioned on the state sequence $s$. Similarly, we write $Q_h\big\rvert_s$ to denote the probability that there is a Forced Pair at height $h$, conditioned on the state sequence $s$.
\end{definition}

\begin{definition} Let $s(h)$ be the sequence of states $(S(1),\ldots,S(h))$.
\end{definition}

\begin{definition}[Generalized Labeling Strategy]\label{def:labelling-general}
Given $P$ and $Q$ defined in Definition~\ref{def:P-Q}, the generalized labeling strategy is then as follows:
\begin{itemize}
    \item If honest creates the first block at height $h$, then $h$ is labeled Single.
    \item If both players find their first block at height $h$ simultaneously, then $h$ is labeled Pair. We call this a \emph{Forced Pair}.
    \item If the attacker creates the first block at height $h$, and it is pivotal, then $h$ is labeled Single.
    \item If the attacker creates the first block at height $h$, and it is safe, then $h$ is labeled Pair with probability $(\beta-Q_h\big\rvert _{s(h-1)})/{P_h\big\rvert_{s(h-1)}}$, and Single otherwise. Here $s$ denotes the sequence of states of all blocks of height strictly smaller than $h$.
\end{itemize}
Again, the instant we have the necessary information to label a height, we do so.
\end{definition}

The labeling strategy here is pretty similar to the labeling strategy in the SP model. The second bullet is new in the Generalized SP model and corresponds to the creation of Forced Pairs. The last bullet is a modification of its counterpart in the SP Model, except that the probability of labeling a safe block as Pair is $(\beta-Q_h\big\rvert _{s(h-1)})/{P_h\big\rvert_{s(h-1)}}$ instead of $\beta/P_h$. The lower probability balances the higher incidence of forks due to Forced Pairs in the Generalized SP model. The main difference between the two models is that unlike in the SP labeling strategy, here we do not condition on \emph{all the available information}, rather \emph{only on the states of all blocks at a smaller height}.

Before diving into the analysis, we will outline a list of conceptually distinct equivalence classes (called ``worlds''), each representing a possible path that got us here along with all the relevant information to compute relevant quantities. At a high level, these worlds allow us to express additional conditioning on information beyond just the states of blocks of lower height. While the labelling strategy itself is agnostic about these worlds (since all its randomness is a function of states of lower blocks only), its analysis requires treating these worlds separately.

\begin{definition}
    For each $h$, let $t(h)$ be the round in which the first block of height $h$ is mined.
\end{definition}
\newpage
\begin{definition}[Worlds with respect to $h$]\label{def:worlds}
For any $h$, we can partition possible realizations of the randomness in block production into the following set of exhaustive and disjoint \emph{worlds with respect to $h$}:
\begin{itemize}[label=,leftmargin=*,nosep]
    \item $\mathbf{(A_h)}$ Only Honest creates a block at time $t(h)$.
    \item $\mathbf{(B_h)}$ Both players simultaneously create blocks at height $h$ during time $t(h)$.
    \item  \quad \ \ \ \ Only Attacker creates a block at time $t(h)$, and
    \begin{itemize}[label=,leftmargin=10mm]
        \item \quad \ \ \ \ $S(h-1)$ is Single, and
        \begin{itemize}[label=,leftmargin=15mm]
            \item $\mathbf{(C_h)}$ $S(h)$ is Single.
            \item $\mathbf{(D_h)}$ $S(h)$ is Pair.
        \end{itemize}
        \item $\mathbf{(E_h)}$ $S(h-1)$ is Pair, and Honest has published a block of height $h-1$ by $t(h)$ (including possibly $t(h)$, because Honest and Attacker both found blocks during $t(h)$ \footnote{Recall that honest does \emph{not} yet have a block of height $h$, as this case deals only with the case that attacker finds the first block of height $h$ alone --- we have already handled Forced Pairs in an earlier case.}
        \item \quad \ \ \ \ $S(h-1)$ is Pair, and honest has \emph{not} published a block of height $h-1$ by time $t(h)$, and
        \begin{itemize}[label=,leftmargin=15mm]
            \item $\mathbf{(F_h)}$ Honest finds a block at height $h-1$ before Attacker mines at height $h+1$.
            \item $\mathbf{(G_h)}$ Attacker finds a block at height $h+1$ before Honest mines at height $h-1$.
            \item $\mathbf{(H_h)}$ both events happen at the same time (\textit{i.e.} during time $t(h+1)$, Honest finds a blog of height $h-1$ and Attacker finds a block of height $h+1$).
        \end{itemize}
    \end{itemize}
\end{itemize}
We call each of $\mathbf{A_h, B_h,\ldots, H_h}$ a \emph{world}, and denote by $W_h$ the set of all (eight) worlds with respect to height $h$.
\end{definition}

Given the worlds outlined in Definition~\ref{def:worlds}, we can define $P^w$ and $Q^w$ as the counterparts to $P$ and $Q$ but \emph{conditioned on world $w$} (in addition to the normal conditioning on previous states). This is formalized in the next definition.

\begin{definition}[Conditional $P$s and $Q$s]\label{def:conditional-P-Q}
    For any height $h\geq 1$, any world $w\in W_h$, and any state sequence $s\in\{\text{Single},\text{Pair}\}^*$, define $P_h\big\rvert_{w,s}$ as the probability that the attacker creates a safe block at height $h$, conditioned on \emph{both} the states matching $s$ \emph{and} the world with respect to $h$ being $w$ (and 0 if the event being conditioned on is empty). Similarly, define $Q_h\big\rvert_{w,s}$ as the probability that there is a forced pair at height $h$, conditioned on \emph{both} the states matching $s$ \emph{and} the world $w$ (and 0 if the event being conditioned on is empty).
\end{definition}

\begin{claim}[When labels are determined]\label{claim:when-label-general}
    For all $h$, there is sufficient information to determine $S(h)$ by the moment that a block of height $h+1$ is created.
\end{claim}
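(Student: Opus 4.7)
The plan is to prove Claim~\ref{claim:when-label-general} by induction on $h$, following the template of Claim~\ref{claim:when-label} but enumerating over the eight worlds of Definition~\ref{def:worlds} to accommodate Forced Pairs and the fact that our labeling coin-flip probability now conditions only on $s(h-1)$ rather than on the Attacker's full private information. The base case $h=0$ holds vacuously since the genesis block is Single by convention. For the inductive step, the hypothesis gives us that $s(h-1) = (S(1),\ldots,S(h-1))$ is known by time $t(h)$, so the probability $(\beta - Q_h\big\rvert_{s(h-1)})/P_h\big\rvert_{s(h-1)}$ used in Definition~\ref{def:labelling-general} is already computable at $t(h)$; any coin flip needed to label $h$ can therefore be performed the instant the safe/pivotal status of the Attacker's block is resolved.

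I would then walk through the eight worlds of Definition~\ref{def:worlds} and identify, for each, the latest time at which $S(h)$ becomes determined. Worlds $A_h$, $B_h$, $C_h$, $D_h$, $E_h$ are all decided by time $t(h)$ itself: Honest-only and pivotal-attacker cases label $h$ as Single immediately, the simultaneous case is labeled as a Forced Pair, and the Attacker-safe cases with $S(h-1)=\text{Single}$ are labeled by coin flip at $t(h)$. The remaining worlds $F_h, G_h, H_h$, where $S(h-1)=\text{Pair}$ but Honest has not yet published a block of height $h-1$ at time $t(h)$, require us to wait for the next round in which either Honest finds a block at height $h-1$ or Attacker finds a block at height $h+1$. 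By the definition of $t(h+1)$, at least one such event has occurred no later than time $t(h+1)$, so at that moment we know whether the Attacker's height-$h$ block is pivotal (world $F_h$, labeled Single) or safe (worlds $G_h$ and $H_h$, labeled by the appropriate coin flip using $s(h-1)$).

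I expect the main obstacle to be the careful treatment of world $H_h$, the edge case where during round $t(h+1)$ Honest finds a block of height $h-1$ and Attacker finds a block of height $h+1$ \emph{simultaneously}. I would verify that under the definition of pivotal --- ``when the honest block of height $h-1$ is broadcast, Attacker does not have a block of height $h+1$'' --- a simultaneous arrival leaves the Attacker's block of height $h$ safe (since Attacker does have a block of height $h+1$ at the moment of broadcast), so the labeling proceeds through the coin-flip branch. A secondary subtlety is to confirm that restricting the conditioning in the labeling step to $s(h-1)$ (rather than the richer Attacker-side information leveraged in the SP-Simple analysis) does not leave any required quantity unavailable at the time of the coin flip --- this cleaner conditioning is precisely what enables the subsequent undetectability analysis in the Generalized SP model, and preserving it here is essential.
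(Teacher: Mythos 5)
Your proposal is correct and follows essentially the same route as the paper: induction on $h$, with the case analysis (who mines at $t(h)$, the state of $S(h-1)$, and whether Honest's height-$(h-1)$ block or Attacker's height-$(h+1)$ block arrives first, including the simultaneous case which you correctly resolve as safe) matching the paper's proof case for case — your organization via the eight worlds of Definition~\ref{def:worlds} is just a repackaging of the paper's nested enumeration. The observation that the inductive hypothesis makes $P_h\big\rvert_{s(h-1)}$ and $Q_h\big\rvert_{s(h-1)}$ computable at $t(h)$ is likewise exactly the paper's argument.
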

\begin{proof}
    The proof is very similar to proof of Claim~\ref{claim:when-label}, with two added cases to handle Forced Pairs (bullets (2) and (3)(c) below). We will include the full proof of all cases for completeness.
    
    We'll proceed by induction on $h$. The base case holds vacuously when $h=0$, since the genesis block is Single by definition.

    Let's first consider the moment $t(h)$ when the first block of height $h$ is created. We break down the analysis into the following exhaustive set of cases:
    \begin{enumerate}
        \item The first block of height $h$ is created by honest. Then $S(h)$ is certainly Single (and we learn this immediately when the first block of height $h$ is created, which is by the time the first block of height $h+1$ is created).
        \item The first blocks of height $h$ are created simultaneously by both attacker and honest. Then $h$ is a Forced Pair, and we immediately learn that $S(h)$ is Pair (again, the moment the blocks of height $h$ are created, which is before the first block of height $h+1$ is created).
        \item The attacker finds the first block of height $h$ (and honest does not yet have a block of height $h$ at this moment). By the inductive hypothesis, we certainly know the labels of all blocks of height $h' < h$. This in particular implies that we have all the necessary information to compute $P_h\big\rvert_s(h-1)$ and $Q_h\big\rvert_s(h-1)$, and can already flip any coins that might be needed to label $h$, in case $h$ is safe. There may still be remaining uncertainty around whether $h$ is safe or pivotal, but this is the only uncertainty. 
        \begin{enumerate}[label=(\alph*)]
            \item If $S(h-1)$ is Single, then $h$ is safe. Therefore, the moment $h$ is created, we can label $S(h)$ (because we can compute $P_h\big\rvert_s(h-1)$ and $Q_h\big\rvert_s(h-1)$, and flip the necessary coins).
            \item If $S(h-1)$ is Pair, and honest has published a block of height $h-1$ by this moment (possibly at exactly this moment because honest and attacker found blocks simultaneously during the round when attacker found its first block at height $h$),\footnote{Recall that honest does \emph{not} yet have a block of height $h$, as this case deals only with the case that attacker finds the first block of height $h$ alone --- we have already handled Forced Pairs in an earlier case.} then $h$ is pivotal. Therefore, the moment $h$ is created, we can label $S(h)$ as Single. 
            \item If $S(h-1)$ is Pair, and honest has \emph{not} published a block of height $h-1$, then we do not yet know whether $h$ is safe or pivotal. We learn this the moment either of the following events happens (whichever comes first):
            \begin{enumerate}
                \item Honest finds a block at height $h-1$; then $h$ is pivotal.
                \item Attacker finds a block (at height $h+1$); then $h$ is safe.
                \item Both events happen at the same time; then $h$ is safe (because there is never a moment when honest has a block of height $h-1$, but attacker does not have a block of height $h+1$).
            \end{enumerate}
        \end{enumerate}
    \end{enumerate}
    In particular, by the time there is a block of height $h+1$, we know which case we are in and thus know whether $h$ is safe or pivotal. \qedhere
\end{proof}

\begin{corollary}\label{cor:when-P-Q}
    For all $h$, we can fully determine $P_h\big\rvert_s(h-1)$ and $Q_h\big\rvert_s(h-1)$ by the time a block of height $h$ is first created.
\end{corollary}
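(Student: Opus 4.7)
The plan is to derive this corollary as a direct consequence of Claim~\ref{claim:when-label-general}, combined with the observation that $P_h\big\rvert_{s(h-1)}$ and $Q_h\big\rvert_{s(h-1)}$ depend only on the fixed game parameters (the hashrates and $\ell$) together with the state sequence $s(h-1) = (S(1),\ldots,S(h-1))$. So the only thing we need to show is that by time $t(h)$, the entire state sequence $s(h-1)$ has already been determined.

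First I would argue that for every $h' \leq h-1$, the state $S(h')$ is known by time $t(h'+1) \leq t(h)$. The inequality $t(h'+1) \leq t(h)$ holds because any block of height $h$ must point to a block of height $h-1$, which must already exist, and iterating this gives $t(1) \leq t(2) \leq \cdots \leq t(h)$. Applying Claim~\ref{claim:when-label-general} with $h'$ in place of $h$ then gives that $S(h')$ is determined by time $t(h'+1)$, and thus by time $t(h)$. Taking this over all $h' \leq h-1$ shows that the entire sequence $s(h-1)$ is fixed by the moment the first block of height $h$ is created.

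Second, I would note that once $s(h-1)$ is fixed, the quantities $P_h\big\rvert_{s(h-1)}$ and $Q_h\big\rvert_{s(h-1)}$ reduce to conditional probabilities on a well-defined probability space (the joint distribution of block creation and the labeling strategy's coin flips), conditioned on an event of positive probability. Since the conditioning event depends only on the states at heights $h' \leq h-1$ and the game parameters are fixed, these conditional probabilities are deterministic functions of $s(h-1)$; thus knowing $s(h-1)$ suffices to evaluate them, completing the argument.

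The only subtle point, though not really an obstacle for the corollary itself, is ensuring that the conditioning event has positive probability so that the Bayesian update is well-defined. This holds because the labeling strategy (Definition~\ref{def:labelling-general}) puts nontrivial probability on every possible state sequence reachable under the underlying block-creation dynamics, so $\Pr[s(h-1)] > 0$ whenever $s(h-1)$ is realizable. In fact, the corollary can be stated more quantitatively via explicit formulas for $P_h\big\rvert_{s(h-1)}$ and $Q_h\big\rvert_{s(h-1)}$ in terms of $\alpha, \ell, \beta'$ and the recent portion of $s(h-1)$, analogous to Table~\ref{tab:Ph}; but the corollary as stated only asserts computability in principle, which the two paragraphs above establish.
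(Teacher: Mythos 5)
Your proposal is correct and matches the paper's intended derivation: the corollary is presented without a separate proof precisely because it follows from Claim~\ref{claim:when-label-general} in the way you describe (all states $S(1),\ldots,S(h-1)$ are fixed by time $t(h)$ since $t(h'+1)\leq t(h)$ for $h'\leq h-1$, and $P_h\big\rvert_{s(h-1)}$, $Q_h\big\rvert_{s(h-1)}$ are deterministic functions of that state sequence and the fixed game parameters). Indeed the proof of Claim~\ref{claim:when-label-general} already notes inside its inductive step that knowing the labels of all lower heights suffices to compute these quantities, so your argument is essentially the paper's own.
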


Table~\ref{tab:P-Q} includes each of the worlds along with bounds on the values of $P_{h+1}$ and $Q_{h+1}$ conditioned on different worlds. More formally, for all $h\geq 1$, Table~\ref{tab:P-Q} lists the values of $P_{h+1}\big\rvert w$ and $Q_{h+1}\big\rvert w$ in each world $w\in W_h$. Notably, observe that these bounds do not depend on the full sequence of historical labels -- they only depend on a subset of the labels fixed by virtue of conditioning on a given world (namely a subset of $S(h-1)$ and $S(h)$). We argue in Lemma~\ref{lem:P-Q-values} that these bounds are all correct.

\begin{table}
    \caption{Some bounds on the values of $P_{h+1}$ and $Q_{h+1}$ conditioned on different worlds. Equality holds where no inequality sign is explicitly included. Bounds on $P_{h+1}$ and $Q_{h+1}$ in the third row depend on the attacker's secret lead, the distribution of which is tricky to compute. For our analysis it suffices that $P_{h+1}+Q_{h+1}\geq \alpha'+\beta'$ which we will argue holds in Lemma~\ref{lem:P-Q-values}.}\bigskip
    \label{tab:P-Q}\centering
    \begin{minipage}{.8\textwidth}
        \begin{center}
            \begin{tabular}{ ccccc }
                \toprule
                World & $S(h)$ & $P_{h+1}$ & $Q_{h+1}$ & $P_{h+1}+Q_{h+1}$ \\ \midrule
                $\mathbf{A_h}$ & Single  & $\alpha'$ & $\beta'$ & $\alpha'+\beta'$  \\
                $\mathbf{B_h}$ & Pair    & $0$ & $\beta'$ & $\beta'$ \\
                $\mathbf{C_h}$ & Single  &  & & $\geq \alpha' + \beta'$ \\
                $\mathbf{D_h}$ & Pair    & $\geq \alpha'^2$ & $\geq 0$ & $\geq \alpha'^2$\\
                $\mathbf{E_h}$ & Single  & $\alpha'$ & $\beta'$ & $\alpha'+\beta'$ \\
                $\mathbf{F_h}$ & Single  & $\alpha'$ & $\beta'$ & $\alpha'^2+\beta'$ \\
                $\mathbf{G_h}$ & Single or Pair & $\geq\alpha'+\beta'$ & $0$ & $\geq \alpha'+\beta'$\\
                $\mathbf{H_h}$ & Single or Pair & $\geq \alpha'+\beta'$ & $0$ & $\geq \alpha'+\beta'$\\
                \bottomrule
            \end{tabular}
        \end{center}
    \end{minipage}
\end{table}

\begin{lemma}\label{lem:P-Q-values}
    For all $h\geq 1$, Table~\ref{tab:P-Q} correctly lists the values of $P_{h+1}\big\rvert w$ and $Q_{h+1}\big\rvert w$ in each world in $w\in W_h$.
\end{lemma}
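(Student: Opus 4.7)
The plan is to go through each world $w \in W_h$ one by one and derive the corresponding entry of Table~\ref{tab:P-Q} by tracking (a) the attacker's hidden-block state at the moment $S(h)$ is determined, as implied by $w$ together with the broadcasting strategy, and (b) the first-round outcomes that can produce the first block of height $h+1$, evaluated against the per-round probabilities $\alpha'$ (attacker alone), $\beta'$ (simultaneous, a potential Forced Pair), and $1-\alpha'-\beta'$ (honest alone).

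For the worlds $A_h, B_h, E_h, F_h$, the broadcasting rules imply that once the world resolves, the attacker has published all blocks up to height $h$ and carries no hidden lead, so both parties are synchronized at height $h$. The race for the first block of height $h+1$ is then a single round of \nmgl\ coin flips. In $A_h, E_h, F_h$, where $S(h)$ is Single, an attacker-alone mine produces a safe $h+1$ block by the first clause of the safety definition, and a simultaneous mine is a Forced Pair at $h+1$; this gives $P_{h+1}=\alpha'$ and $Q_{h+1}=\beta'$. In $B_h$ the Forced Pair at $h$ places both parties at height $h$, so the attacker can never obtain $h+2$ before honest obtains $h$; any attacker-mined $h+1$ is therefore pivotal, giving $P_{h+1}=0$ and $Q_{h+1}=\beta'$.

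For $C_h$ and $D_h$, the attacker has just labeled a freshly mined safe block at $h$. In $C_h$, the Single label triggers immediate broadcast (its parent at $h-1$ is already broadcast since $S(h-1)$ is Single); even if the attacker retains inherited hidden blocks at earlier heights, the event that the attacker mines next (alone or simultaneously, total probability $\alpha'+\beta'$) produces a safe attacker block at $h+1$, since $S(h)$ is Single, yielding the lower bound $P_{h+1}+Q_{h+1}\geq \alpha'+\beta'$. In $D_h$, the attacker retains the hidden block at $h$ with Pair label and waits for honest; the bound $P_{h+1}\geq\alpha'^2$ follows by considering the event that the attacker mines alone in each of the next two rounds (probability $\alpha'^2$), extending the hidden chain to $h,h+1,h+2$ while honest has at most reached $h-1$, which makes the $h+1$ block safe by the second clause of the safety definition.

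For $G_h$ and $H_h$, the world's defining event is that the attacker's hidden chain reaches $h+1$ strictly before (respectively, simultaneously with) honest mining a block at $h-1$, so the first block of height $h+1$ is the attacker's alone and $Q_{h+1}=0$. To lower-bound $P_{h+1}$, observe that one additional attacker-inclusive round (total probability $\alpha'+\beta'$) extends the hidden chain to $h+2$ while honest has at most reached $h-1$; by the convention from Claim~\ref{claim:when-label-general} that simultaneous events count as ``before'' for the safety definition, the attacker's $h+1$ block is then safe, giving $P_{h+1}\geq\alpha'+\beta'$. The main obstacle throughout is correctly tracking the attacker's possibly nontrivial inherited hidden state in each world and distinguishing Forced Pairs at height $h+1$ (which require both parties to be at height $h$) from simultaneous mines at different heights (which contribute only to $P_{h+1}$); fortunately, the table lists only lower bounds in precisely the cases where this ambiguity arises, so the arguments above are uniform in the state sequence $s(h-1)$ and suffice for the lemma.
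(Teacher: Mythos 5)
Your proof is correct and follows essentially the same world-by-world case analysis as the paper, with matching reasoning in every case. The only (minor) difference is in world $C_h$: the paper introduces an auxiliary probability $x$ that the attacker exhausts its hidden lead before height $h+1$ is reached, computes $P_{h+1}=1-x+x\alpha'$ and $Q_{h+1}=x\beta'$ exactly, and then verifies the sum bound, whereas you bound $P_{h+1}+Q_{h+1}\geq\alpha'+\beta'$ directly by observing that any attacker-inclusive next round yields a safe block at $h+1$ (or a Forced Pair) since $S(h)$ is Single --- a slight simplification that suffices because the table only records the sum in that row.
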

\begin{proof}
In world $\mathbf{A_h}$, by Definition~\ref{def:worlds} only honest creates a block at time $t(h)$. This falls into the first bullet of the generalized labeling strategy (Definition~\ref{def:labelling-general}), so $S(h)$ is labelled Single. Height $h+1$ will be a safe block iff only the attacker finds the next block (which happens with probability $\alpha'$), so $P_{h+1}=\alpha'$. Height $h+1$ will be a Forced Pair iff both honest and attacker find a block in the next round, (which happens with probability $\beta'$), so $Q_{h+1}=\beta'$.

In world $\mathbf{B_h}$, by definition both players simultaneously create blocks at time $t(h)$. Since the attacker always extends the longest chain it knows of, and this is the first moment a block of height $h$ was discovered, just prior to $t(h)$ the longest chain must have had height $h-1$, and so the two new blocks are both at height $h$. Therefore height $h$ is a Forced Pair and $S(h)$ is Pair. Height $h+1$ cannot possibly be safe, since it follows a Pair and Honest has already found a block of height $h$ before the attacker can find a block of height $h+2$, so $P_{h+1}=0$. Finally, there is a forced pair at height $h+1$ iff both players find blocks in the first round, implying $Q_{h+1}=\beta'$.

In world $\mathbf{C_h}$, by definition only the attacker creates a block at time $t(h)$, and $S(h-1)$ and $S(h)$ are both Single. We define $x$ as the probability that the attacker publishes all their hidden blocks before a block of height $h+1$ is found (that is, the probability that, if the attacker has $N$ hidden blocks labelled pair, then only Honest finds a block in each of the next $N$ rounds). Then, height $h+1$ is safe iff either (a) Attacker mines a block in any of the next $N$ rounds (with probability $1-x$) or (b) Honest mines a block in the next $N$ rounds, and only Attacker mines a block in the following round (with probability $x\cdot \alpha'$. Overall, we have $P_{h+1}=x\cdot \alpha' + 1-x$. Finally, there will be a Forced Pair at height $h+1$ iff Honest mines a block in the next $N$ rounds, and both attacker and Honest mine a block in the following round, so $Q_{h+1} = x\cdot\beta'$. The exact value of $x$ depends on the attacker's lead in hidden blocks, the distribution of which can be tricky to compute. However, this is not important for our analysis, since we only need to show that suffices for us to show $P_{h+1}+Q_{h+1}\geq \alpha'+\beta'$. We have
\[P_{h+1}+ Q_{h+1}  - (\alpha'+\beta') = x\cdot \alpha' + 1-x + x\cdot\beta'  - (\alpha'+\beta') = (1-x)(1-(\alpha'+\beta')),\]
which is in $[0,1]$ for any $x\in[0,1]$ (since $0<\alpha'+\beta'<1$).

In world $\mathbf{D_h}$, by definition only the attacker creates a block at time $t(h)$, and $S(h-1)$ is Single, and $S(h)$ is Pair. One (though not the only) way for height $h+1$ to be safe is if only the attacker finds a block in the next two rounds (which happens with probability $\alpha'$). So $P_{h+1}$ is at least $\alpha'^2$. Finally, $Q_{h+1}\geq 0$ trivially.

In world $\mathbf{E_h}$, by definition only attacker creates a block at time $t(h)$, $S(h-1)$ is Pair, and honest has published a block of height $h-1$ by time $t(h)$. Height $h$ cannot possibly be safe (since it does not follow a Single, and Honest has already found a block of height $h-1$). Thus $S(h)$ is pivotal, and Single, and the attacker immediately publishes it at $t(h)$. Height $h+1$ follows a Single and thus will be safe iff only attacker finds a block in the next round, meaning $P_{h+1} = \alpha'$. Also, height $h+1$ will be a Forced Pair iff both players find blocks in the next round, meaning $Q_{h+1}=\beta'$.

The reasoning for world $\mathbf{F_h}$ is identical to world $\mathbf{E_h}$.

In world $\mathbf{G_h}$, by definition only attacker creates a block at time $t(h)$, $S(h-1)$ is Pair, honest has \emph{not} yet published a block of height $h-1$ by time $t(h)$, and Attacker finds a block at height $h+1$ before Honest finds a block of height $h-1$. Then height $h$ is safe, and the attacker may label it Single or Pair depending on the result of its coin flips. Then height $h+1$ will be safe as long as either (a) only attacker mines in the round after $t(h+1)$ (which happens probability $\alpha'$) or (b) both Honest and Attacker mine in the round after $t(h+1)$ (which happens probability $\beta'$. Overall, $P_{h+1}\geq \alpha'+\beta'$. Finally, $Q_{h+1}=0$ since only attacker finds a block at time $t(h+1)$.

The reasoning for world $\mathbf{H_h}$ is identical to world $\mathbf{G_h}$.
\end{proof}

Validity and undetectability of this labeling strategy are shown in Lemmas~\ref{lem:validity-general}~and~\ref{lem:undetectability-general}. Lemma~\ref{lem:validity-general} will ensure that the coin flip probabilities in the third bullet are valid, and its proof will use Lemma~\ref{lem:undetectability-general} in the inductive step. Lemma~\ref{lem:undetectability-general} shows that, as long as the probabilities for all lower heights are well-defined, the label of the next height follows the correct target distribution (namely, Pair with probability exactly $\beta$ and Single otherwise). Finally, Implementability is shown in Lemma~\ref{lem:broadcast-general}.

\begin{definition}[Validity at $h$]
    We say $P$ and $Q$ are \emph{valid at $h$} if, for all state sequences $s\in\{\text{Single},\text{Pair}\}^{h-1}$, we have $\frac{\beta-Q_h\big\rvert_s}{P_h\big\rvert_s} \in[0,1]$.
\end{definition}

\begin{definition}[Undetectability at $h$]
    We say a labeling strategy is \emph{undetectable at $h$} if, for all state sequences $s\in\{\text{Single},\text{Pair}\}^{h-1}$, it produces the correct distribution of labels at height $h$; that is, if for all state sequences $s\in\{\text{Single},\text{Pair}\}^{h-1}$, we have $\Pr[S(h)=\text{Pair}\big\rvert_s]=\beta$.\footnote{We sometimes lazily write $\Pr[\cdot\big\rvert_s]$ instead of $\Pr[\cdot \mid S(i)=s_i,\ 1\leq i\leq |s|]$}
\end{definition}

\begin{lemma}[Generalized Undetectability]\label{lem:undetectability-general}
For any $h\geq 1$, if $P$ and $Q$ are valid at $h$, then the labeling strategy of Definition~\ref{def:labelling-general} is undetectable at $h$.
\end{lemma}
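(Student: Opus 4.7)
The plan is to decompose the event $\{S(h)=\text{Pair}\}$ into two disjoint cases according to the bullets of Definition~\ref{def:labelling-general}, compute each conditional probability using the definitions of $P_h|_s$ and $Q_h|_s$, and verify that they sum to $\beta$.

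First, I would observe that, under Definition~\ref{def:labelling-general}, $S(h)=\text{Pair}$ arises in exactly two mutually exclusive ways: (A) there is a Forced Pair at height $h$ (both players find a block in round $t(h)$), or (B) Attacker alone creates the first block at height $h$, that block is safe, and the coin flip in the fourth bullet of Definition~\ref{def:labelling-general} lands Pair. By Definition~\ref{def:P-Q} and the definition of a Forced Pair, event (A) has probability exactly $Q_h|_s$ conditional on $s$, and the ``Attacker creates a safe first block at height $h$'' sub-event of (B) has probability exactly $P_h|_s$ conditional on $s$.

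Next, I would argue that the coin flip in event (B) is independent, conditional on $s$, of the sub-event that Attacker's first block at $h$ is safe. By Corollary~\ref{cor:when-P-Q}, $P_h|_s$ and $Q_h|_s$ are determined by $s$ alone, and the strategy uses fresh randomness for the coin flip, which is independent of the block-production process and of all prior coin flips. Hence the probability of event (B) conditional on $s$ is
\begin{align*}
P_h|_s \cdot \frac{\beta - Q_h|_s}{P_h|_s} \;=\; \beta - Q_h|_s,
\end{align*}
where the ratio lies in $[0,1]$ precisely by the validity hypothesis. Summing (A) and (B) yields
\begin{align*}
\Pr[S(h)=\text{Pair} \mid s] \;=\; Q_h|_s + (\beta - Q_h|_s) \;=\; \beta,
\end{align*}
as required.

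The main obstacle I anticipate is the degenerate boundary case $P_h|_s = 0$, where the expression $(\beta - Q_h|_s)/P_h|_s$ is not defined in the obvious way. Validity forces $\beta - Q_h|_s = 0$ in this case (else the ratio could not belong to $[0,1]$), so $Q_h|_s = \beta$ and event (B) has probability $0$ anyway; the identity $\Pr[S(h)=\text{Pair}\mid s] = \beta$ therefore still holds, but a careful write-up should handle this case separately (or adopt the convention that the coin flip is skipped when $P_h|_s = 0$, since the event it modifies has probability zero).
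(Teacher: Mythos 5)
Your proposal is correct and follows essentially the same route as the paper: the paper's proof also decomposes $\Pr[S(h)=\text{Pair}\mid s]$ into the Forced Pair event (probability $Q_h\big\rvert_s$) and the Chosen Pair event (probability $P_h\big\rvert_s\cdot\frac{\beta-Q_h\big\rvert_s}{P_h\big\rvert_s}=\beta-Q_h\big\rvert_s$), summing to $\beta$. Your explicit treatment of the independence of the coin flip and of the degenerate case $P_h\big\rvert_s=0$ is a welcome extra level of care that the paper's proof leaves implicit.
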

\begin{proof}
    Observe that by Claim~\ref{cor:when-P-Q}, the moment $S(1),\ldots S(h-1)$ are all fixed, there is enough information to determine $P_h$ and $Q_h$, which are defined conditioned \emph{only} on $S(1),\ldots S(h-1)$ and no additional information. Let us reiterate that, importantly, we are not conditioning on whether any of the previous Pairs were Forced or Chosen. This is crucial: For example, if an observer can tell there was a Forced pair at height $h-1$, they can immediately conclude the probability of having a Pair at height $h$ is exactly $\beta'<\beta$. Pulling off undetectability thus relies on the attacker's ability to properly mask whether each pair is Forced or Chosen while still achieving the correct fraction of Pairs.

    Again focus on the moment $S(1),\ldots S(h-1)$ are all fixed, namely the moment $P_h\big\rvert_s(h-1)$ and $Q_h\big\rvert_s(h-1)$ are determined. There are two possible ways for $S(h)$ to be Pair:
    \begin{enumerate}
        \item $S(h)$ is a Forced Pair (with probability $Q_h\big\rvert_s(h-1)$)
        \item $S(h)$ is a Chosen Pair because both of the following events happen 
        \begin{enumerate}
            \item the coin flip in the labeling strategy lands on Pair if Single (with probability $\frac{\beta-Q_h\big\rvert_s(h-1)}{P_h\big\rvert_s(h-1)}$)
            \item the attacker finds a safe block at height $h$ (with probability $P_h\big\rvert_s(h-1)$)
        \end{enumerate}
    \end{enumerate}
    Combining these cases, the total probability that $S(h)$ is Pair conditioned on $S(0),\ldots S(h-1)$ is exactly $\beta$ as desired.
\end{proof}

\begin{lemma}[Generalized Validity]\label{lem:validity-general}
$P$ and $Q$ are valid at $h$ for all $h\geq1$ as long as $\beta\in[\beta', \alpha'^2/2]$.
\end{lemma}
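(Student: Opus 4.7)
I plan to prove the lemma by strong induction on $h$, interleaving it with Lemma~\ref{lem:undetectability-general}: validity at heights $1,\ldots,h-1$ yields undetectability at those heights, i.e.\ $\Pr[S(i)=\text{Pair}\mid s']=\beta$ for every prefix $s'\in\{\text{Single},\text{Pair}\}^{i-1}$ with $i\leq h-1$, which is the key tool for bounding conditional world-probabilities at height $h$. The base case $h=1$ is direct: since $S(0)$ is Single and no conditioning is in play, $P_1=\alpha'$ and $Q_1=\beta'$, so the two validity inequalities follow immediately from $\beta\in[\beta',\alpha'^2/2]$. For the inductive step I decompose $P_h|_s=\sum_{w\in W_{h-1}}\Pr[w\mid s]\cdot P_h|_{w,s}$ and similarly for $Q_h|_s$, using the worlds from Definition~\ref{def:worlds} and the per-world bounds in Table~\ref{tab:P-Q}, and verify $Q_h|_s\leq\beta$ and $P_h|_s+Q_h|_s\geq\beta$ in turn.

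\textbf{Upper bound on $Q_h$.} I would prove the stronger claim $Q_h|_{w,s}\leq\beta'$ for every $w\in W_{h-1}$, which then averages to $Q_h|_s\leq\beta'\leq\beta$. For worlds $\mathbf{A},\mathbf{B},\mathbf{C},\mathbf{E},\mathbf{F},\mathbf{G},\mathbf{H}$ the bound is read off Table~\ref{tab:P-Q} directly. The only case Table~\ref{tab:P-Q} leaves open is $\mathbf{D_{h-1}}$, where I argue directly that a Forced Pair at height $h$ requires Honest to first catch up to height $h-1$ (otherwise Honest mines at $h-1$, not $h$, in any round), giving $Q_h|_{\mathbf{D_{h-1}},s}\leq(1-\alpha'-\beta')\cdot\beta'\leq\beta'$.

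\textbf{Lower bound on $P_h+Q_h$.} The case $S(h-1)=\text{Single}$ is straightforward: every world consistent with $s$ contributes $P+Q\geq\alpha'+\beta'\geq\alpha'^2\geq\beta$. The substantive case is $S(h-1)=\text{Pair}$, where the Forced-Pair world $\mathbf{B_{h-1}}$ contributes only $\beta'$ while every other Pair-consistent world ($\mathbf{D_{h-1}},\mathbf{G_{h-1}},\mathbf{H_{h-1}}$) contributes $\geq\alpha'^2$. Writing $s=(s',\text{Pair})$, the induction hypothesis together with Lemma~\ref{lem:undetectability-general} gives $\Pr[S(h-1)=\text{Pair}\mid s']=\beta$, while a stopping-time argument (each round is a both-mine round with probability exactly $\beta'$, and the first round producing a block at height $h-1$ is a Forced Pair only if \emph{that} round is both-mine) yields $\Pr[\mathbf{B_{h-1}}\mid s']\leq\beta'$. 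Bayes' rule then gives $\Pr[\mathbf{B_{h-1}}\mid s]\leq\beta'/\beta$, so
$$P_h|_s+Q_h|_s\;\geq\;\frac{\beta'}{\beta}\cdot\beta'+\left(1-\frac{\beta'}{\beta}\right)\cdot\alpha'^2,$$
and a routine rearrangement reduces the desired inequality $P_h|_s+Q_h|_s\geq\beta$ to $\alpha'^2\geq\beta+\beta'$, which is exactly what $\beta,\beta'\leq\alpha'^2/2$ guarantees.

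\textbf{Main obstacle.} The hardest step is precisely the Bayes analysis of $\Pr[\mathbf{B_{h-1}}\mid s]$ in the Pair case, because a naive world-by-world lower bound would include the $\beta'$ contribution of $\mathbf{B_{h-1}}$ and immediately fail once $\beta>\beta'$; the factor of $2$ in the stated bound $\alpha'^2/2$ comes from using undetectability at $h-1$ to rescale $\Pr[\mathbf{B_{h-1}}\mid s']\leq\beta'$ into $\Pr[\mathbf{B_{h-1}}\mid s]\leq\beta'/\beta$, thereby reducing the final inequality to $\alpha'^2\geq\beta+\beta'$. Secondary bookkeeping obstacles are verifying that $\mathbf{G_{h-1}}$ and $\mathbf{H_{h-1}}$ indeed contribute at least $\alpha'+\beta'\geq\alpha'^2$ (unpacking the attacker's internal state when the simultaneous race resolves in its favor) and handling $\mathbf{D_{h-1}}$'s upper bound on $Q_h$ cleanly; these are essentially just the detailed computations underlying Lemma~\ref{lem:P-Q-values}.
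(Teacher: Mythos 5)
Your proof is correct and follows essentially the same route as the paper's: induction on $h$ interleaved with Lemma~\ref{lem:undetectability-general}, the world decomposition of Table~\ref{tab:P-Q}, a case split on the last state, and in the Pair case a Bayes/undetectability bound of $\beta'/\beta$ on the conditional probability of the Forced-Pair world, reducing the claim to $\alpha'^2\geq\beta+\beta'$. One small point in your favor: you state the key bound as $\Pr[\mathbf{B}\mid s']\leq\beta'$ (hence $\Pr[\mathbf{B}\mid s]\leq\beta'/\beta$), which is the direction the convex-combination estimate actually requires since the Forced-Pair world is the smallest contributor to $P+Q$, and you also supply the missing upper bound $Q\leq\beta'$ in world $\mathbf{D}$; the paper's Claim~\ref{claim:pair} writes these inequalities as $\geq$, which appears to be a sign typo that your version quietly corrects.
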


\begin{proof}
    We will proceed by induction. 

    \paragraph{Base case.} Consider first the base case where $h=1$. By definition of validity at 1, we must show $\frac{\beta-Q_1\big\rvert\text{Single}}{P_1\big\rvert\text{Single}} \in[0,1]$.
    \begin{itemize}
        \item We claim $Q_1\big\rvert \text{Single}=\beta'$. This is because the only way to get a forced pair at height 1 is if both players mine in the first round after Genesis, which happens with probability $\beta'$.
        \item We claim $P_1\big\rvert\text{Single}=\alpha'$. This is because the Genesis block is Single and thus height 1 is for sure safe, so $P_1(\text{Single})$ is exactly the probability that only the attacker mines in the first round after Genesis, which happens with probability $\alpha'$. 
    \end{itemize}
    Therefore, $\frac{\beta-Q_1\big\rvert\text{Single}}{P_1\big\rvert\text{Single}} = (\beta-\beta')/\alpha'$. Indeed,
    $(\beta-\beta')/\alpha'$ is finite since $\alpha'\not=0$, non-negative since $\beta'\leq \beta$ by assumption, and at most 1 since $(\beta-\beta')/\alpha' \leq \beta/\alpha' \leq \alpha'/2$.

    \paragraph{Inductive Step.}
    Suppose $P$ and $Q$ are valid at $h$.
    We hope to show they are valid at $h+1$. That is, we need to show that for all sequences of states $s\in\{\text{Single},\text{Pair}\}^h$, we have  $\frac{\beta-Q_{h+1}\big\rvert_s}{P_{h+1}\big\rvert_s} \in[0,1]$. Non-negativity follows from Table~\ref{tab:P-Q} and noting that in every world $Q_{h+1}\big\rvert_s\leq\beta'$, which is at most $\beta$ by assumption. It remains to show $P_{h+1}\big\rvert_s + Q_{h+1}\big\rvert_s \geq \beta$.
    
    We break down the analysis into two parts depending on the last entry in $s$. Lemmas~\ref{lem:single}~and~\ref{lem:pair} show the inequality holds when the last entry of $s$ is Single and Pair respectively. First, observe that for any $h$, each possible realization of the evolution of the blockchain corresponds to exactly one of the worlds with respect to $h$. So the following is a well-defined random variable with range $W_h$.
        
    \begin{definition}[Conditional Worlds] \label{def:conditional-worlds}
    For all heights $h$ and state sequences $s\in\{\text{Single},\text{Pair}\}^*$, let $w_h\big\rvert_s$ denote the world with respect to $h$ conditioned on the state sequence $s$. Additionally, let $x_h(s,w)=\Pr[w_h\big\rvert_s=w]$ for each world $w\in W_h$.
    \end{definition}
    
    The following observation will be useful in the proof of both.
    \begin{observation}\label{obs:convex-combination}
        For all heights $h,\geq 1$ and any state sequence $s\in\{\text{Single},\text{Pair}\}^*$, by the law of total probability we have
        \begin{align*}
            P_h(s) &= \sum_{w\in W_{h}} x_{h}(s,w) \cdot  P_h\big\rvert_{w,s} \\
            Q_h(s) &= \sum_{w\in W_{h}} x_{h}(s,w) \cdot  Q_h\big\rvert_{w,s}.
        \end{align*}
    \end{observation}

\begin{lemma}[The case where $S(h)=\text{Single}$]\label{lem:single}
    For any height $h\geq 1$ and state sequence $s\in\{\text{Single},\text{Pair}\}^{h-1}\times \{\text{Single}\}$, we have $P_{h+1}\big\rvert_s+Q_{h+1}\big\rvert_s \geq \beta.$ 
\end{lemma}
\begin{innerproof}
Applying Observation~\ref{obs:convex-combination} to height $h+1$ and a state sequence $s$ ending in $\text{Single}$, we can write
\begin{align*} 
    P_{h+1}\big\rvert_s +Q_{h+1}\big\rvert_s &= \sum_{w\in W_{h+1}} x_{h+1}(s,w) \left(P_{h+1}\big|_{w,s} + Q_{h+1}\big\rvert_{w,s} \right)\\
    &\geq  \sum_{w\in W_{h+1}} x_{h+1}(s,w) \left(\alpha'^2 + \beta' \right)\\
    &= \big(\alpha'^2 + \beta' \big) \sum_{w\in W_{h+1}} x_{h+1}(s,w) \\
    & = \alpha'^2 +\beta' \geq 2\beta+\beta' \geq \beta,
\end{align*}
where the second line follows by replacing every $\big(P_{h+1}\big\rvert_{w,s}+Q_{h+1}\big\rvert_{w,s}\big)$ with the smallest such term, which is $\alpha'^2 + \beta'$ as can be verified by referencing Table~\ref{tab:P-Q} (the columns with $S(h)=\text{Single}$ in particular).
\end{innerproof}

\begin{lemma}[The case where $S(h)=\text{Pair}$]\label{lem:pairgeneral}
    For any height $h\geq 1$ and state sequence $s\in\{\text{Single},\text{Pair}\}^{h-1}\times \{\text{Pair}\}$, we have $P_{h+1}\big\rvert_s+Q_{h+1}\big\rvert_s \geq \beta.$
\end{lemma}
\begin{innerproof}

Applying Observation~\ref{obs:convex-combination} to height $h+1$ and a state sequence $s$ ending in $\text{Pair}$, we can write
\begin{align*} 
    P_{h+1}\big\rvert_s +Q_{h+1}\big\rvert_s &= \sum_{w\in W_{h+1}} x_{h+1}(s,w) \left(P_{h+1}\big|_{w,s} + Q_{h+1}\big\rvert_{w,s} \right)\\
    &=  x_{h+1}(s,\mathbf{B_{h+1}}) \left(P_{h+1}\big|_{\mathbf{B_h},s} + Q_{h+1}\big\rvert_{\mathbf{B_h},s} \right) + \sum_{w\in W_{h+1}\setminus\{\mathbf{B_h}\} }x_{h+1}(s,w) \left(P_{h+1}\big|_{w,s} + Q_{h+1}\big\rvert_{w,s} \right)\\
    &\geq  \frac{\beta'}{\beta}\cdot \beta' + \big(1- \frac{\beta'}{\beta}\big) \alpha'^2\\
    &= \frac{\beta'^2}{\beta} + \big(1- \frac{\beta'}{\beta}\big) \alpha'^2,
\end{align*}
where the inequality uses Claim~\ref{claim:pair} to bound the first term, and the fact that as per Table~\ref{tab:P-Q}, in all state sequences ending in Pair \emph{other than} $\mathbf{B_h}$, $P_{h+1} + Q_{h+1}\geq \alpha'^2$ to bound the second term.

To wrap up, it remains to show $\frac{\beta'^2}{\beta} + \big(1- \frac{\beta'}{\beta}\big) \alpha'^2 \geq \beta$. Rearranging this inequality gives, this is equivalent to showing $\beta^2-\alpha'^2\beta \leq \beta'^2-\alpha'^2\beta'$. Since by assumption $\beta \geq \beta'$, it suffices to show the function $f(x)=x^2+\alpha'^2 x$ is decreasing. Taking derivatives with respect to $x$, the function is decreasing whenever $2x-\alpha'\leq 0$, or equivalently whenever $\beta,\beta'\leq \frac{\alpha'^2}{2}$, which is true by assumption.

\begin{claim}\label{claim:pair}
For any $h\geq 1$ and $s\in\{\text{Single},\text{Pair}\}^{h-1}\times \text{Pair}$, we have $x_{h+1}(s,\mathbf{B_{h+1}})\geq \beta'/\beta.$
\end{claim}
\begin{innerproof}
    First, we observe that the total probability that height $h$ is labeled pair is exactly $\beta$, even conditioned on $s$. To see this, recall that by the inductive hypothesis, $P$ and $Q$ are valid at $h$. Therefore, by Lemma~\ref{lem:undetectability-general}, we know that the strategy is undetectable at $h$, meaning $\Pr\big[S(h)=\text{Pair}\mid s\big]=\beta$. 

    Next, observe that $\Pr[w_h(s')=\mathbf{B_h}]\geq \beta'$ for any $s'$. This is because landing in world $\mathbf{B_h}$ requires both players to mine at height $h$ simultaneously. A prerequisite for this is that the honest player knows of the existence of all blocks at heights $h'<h$ \emph{before} anyone finds a block of height $h$. So at some point, both players had the same view of the longest chain (with no hidden pairs) and there were no blocks at height $h+1$ yet. From this state, there is a $\beta'$ probability that both players mine in the next round, which happens with probability $\beta'$.

    Applying the definition of $x_h$, the definition of conditional probability, and the previous two observations in order, we have
    \begin{align*}
        x_h(s,\mathbf{B_h}) = \Pr\big[w_h=\mathbf{B_h}\mid s \big] = \frac{ \Pr\big[w_h=\mathbf{B_h} \big\rvert_{s}]}{\Pr[S(h)=\text{Pair}]} \geq \frac{\beta'}{\beta}. \quad \qedhere
    \end{align*}
\end{innerproof}
\end{innerproof}
\end{proof}

\begin{lemma}[General Implementability]\label{lem:broadcast-general}
    The following \emph{broadcasting strategy} realizes the labels output by the labeling strategy in Definition~\ref{def:labelling-general}:
        \begin{itemize}
            \item If $S(h)$ is labeled as Single (and you created a block of height $h$): broadcast the block of height $h$ the moment a block of height $h-1$ is broadcast (and you know that $S(h)$ is labeled Single).
            \item If $S(h)$ is labeled as Pair: broadcast the block of height $h$ the moment a block of height $h$ is broadcast (and you know that $S(h)$ is labeled Pair). 
        \end{itemize}
        Note that Forced Pairs are a special case of the second bullet with no delay between the creation of the attacker block of height $h$ and its broadcasting.
\end{lemma}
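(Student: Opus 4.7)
The plan is to closely parallel the proof of Lemma~\ref{lem:broadcast-SP}, but with attention to the two new wrinkles introduced by $\ell > 0$: (i) Forced Pairs, which must be shown to be consistent with the broadcasting rule, and (ii) the expanded worlds in Definition~\ref{def:worlds} that introduce new moments at which $S(h)$ can be determined. The high-level plan is to verify, for each possible label of $S(h)$, that (a) Attacker learns the label early enough to broadcast their block before Honest could unilaterally force a different label, and (b) the two bullets of the broadcasting rule never conflict in the sense of asking for a block of height $h+1$ to be broadcast strictly before one of its ancestors.

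The first step is to handle Single labels. Fix $h$ such that $S(h)$ is labeled Single and Attacker produced a block at height $h$. By Corollary~\ref{cor:when-P-Q} and the case analysis in the proof of Claim~\ref{claim:when-label-general}, there are precisely three ways Attacker's block at height $h$ can end up Single: either (i) $S(h-1)$ is Single and the coin flip chose Single (learned at time $t(h)$), (ii) the block is pivotal with $S(h-1)$ Pair (learned at time $t(h)$), or (iii) Attacker found a block of height $h+1$ before Honest found a block of height $h-1$, and the coin flip then chose Single (learned at time $t(h+1)$). In cases (i) and (ii), Honest cannot possibly have a block of height $h$ at $t(h)$, since Attacker alone created the first block of height $h$. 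In case (iii), Honest still has no block of height $h$ at $t(h+1)$, since by assumption Honest does not even have a block of height $h-1$ at that moment. In all three cases, the broadcasting rule fires at or before the moment Honest broadcasts its (eventual) block of height $h-1$, which is in turn strictly before Honest can produce a block of height $h$; so the Single label is realized.

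The second step handles Pair labels. Chosen Pairs (third bullet of Definition~\ref{def:labelling-general}) must be broadcast the moment another block of height $h$ is broadcast. Since labeling as Pair only occurs when Attacker's block at height $h$ is safe, Attacker is under no obligation to broadcast before the competing Honest block appears, and the rule can wait until Honest publishes a block at height $h$; the broadcast produces a genuine Pair. Forced Pairs (second bullet of Definition~\ref{def:labelling-general}) are vacuously implemented: both players create blocks at height $h$ in the same round, and the Pair rule is satisfied with zero delay, since the Honest block is broadcast the moment it is produced. To confirm no conflict with Single broadcasts at later heights, observe that the Pair rule for height $h$ only waits for a block of height $h$ to be broadcast, while the Single rule for height $h+1$ only asks for broadcasting at the moment a block of height $h$ is broadcast; these fire in the correct order (height $h$ first, then $h+1$), so simultaneous broadcasts of a Paired block at height $h$ and a hidden Single block at height $h+1$ are coherent.

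The main obstacle I anticipate is bookkeeping around case (iii) above: we must rule out any scenario in which, between $t(h)$ and the moment the coin is flipped (at $t(h+1)$), Honest somehow publishes a block that invalidates our Single broadcast. The key observation preventing this is that in case (iii) Honest has not yet found a block of height $h-1$, and since Attacker's pivotal block of height $h-1$ remains hidden, Honest will not broadcast at heights $\geq h-1$ before $t(h+1)$. Consequently, Attacker's Single broadcast of height $h$ at time $t(h+1)$ (simultaneous with its broadcast of the hidden block of height $h-1$) is not contested. Once these cases are checked, the proof is complete by the same logic as Lemma~\ref{lem:broadcast-SP}.
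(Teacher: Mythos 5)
Your proof is correct and takes essentially the same approach as the paper, which simply observes that the argument of Lemma~\ref{lem:broadcast-SP} carries over verbatim to all non-Forced-Pair cases and that Forced Pairs are realized trivially with zero delay. One minor imprecision: your enumeration of the ways a Single label arises conflates the two pivotal sub-cases (a pivotal block whose status is only learned when Honest later finds a block of height $h-1$ is not learned at $t(h)$) and omits the simultaneous world $\mathbf{H_h}$ of Definition~\ref{def:worlds}, but in each of these the same timing argument goes through since Honest still has no block of height $h$ at the moment the rule fires.
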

\begin{proof}
    The proof is almost identical to the proof of implementability in the SP model (Lemma~\ref{lem:broadcast-SP}).
    Again, we first confirm that if the labeling strategy can be implemented, it will cause the state of every block to match its label (because there will clearly not be a chance for Honest to conflict with your Singles, and they clearly will conflict with your Pairs).
    
    The only step is to confirm that you indeed know how $h$ is labeled early enough to use this strategy. If $h$ is not a Forced Pair, the same argument in the SP model applies. If $h$ is a Forced Pair, then it is labeled as Pair the moment it is created, and its label matches its state, and  are created. 
\end{proof}

\subsection{Reward Analysis}
Recall that we're interested only in determining whether a strategy exists that is both profitable and undetectable --- we don't aim to compute the expected reward. As such, we can simplify several calculations. First, we have to nail down the expected reward achieved by being honest.

\begin{lemma} In the SP-General model where pairs naturally occur at rate $\beta'$, the attacker has an $\alpha$ fraction of the hashrate, and all honest nodes tiebreak against the attacker, the attacker wins a $x^*(\alpha,\beta'):=\frac{\alpha - \beta' + \frac{\alpha'}{1-\alpha'-\beta'}\cdot \beta'}{1-\beta'}$ fraction of blocks in the longest chain by being honest.
\end{lemma}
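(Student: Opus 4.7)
The plan is to use a renewal argument built around the structure of ``Pair episodes''. First, I would set up the counting: a mining round is a timestep in which at least one miner succeeds, and across mining rounds the outcomes are i.i.d.\ with probability $\alpha'$ (only Attacker), $\beta'$ (both), and $1-\alpha'-\beta'$ (only Honest). Since both players are honest (broadcast immediately, build on longest chain), each mining round contributes exactly one block to the eventual longest chain: a Single round adds its unique block, and a Pair round adds one of its two competing blocks while orphaning the other. So over $T$ rounds, the denominator in the target ratio is exactly $T$, and by the law of large numbers the rounds split asymptotically as $\alpha' T$ single-Attacker, $(1-\alpha'-\beta')T$ single-Honest, and $\beta' T$ Pair.

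Next I would analyze which Pair rounds put attacker blocks into the longest chain. Group maximal runs of consecutive Pair rounds into \emph{Pair episodes}; since rounds are i.i.d., episode lengths are geometric with parameter $1-\beta'$. Inside a single episode every height has one attacker block and one honest block, and because Attacker tiebreaks in their own favor while Honest tiebreaks against Attacker, each miner only ever extends their own fork during the episode. Hence no height inside the episode is ``decided'' until the first subsequent \emph{single} mining round: if it is single-Attacker, Attacker's fork becomes strictly longer and all $k$ of Attacker's Pair blocks enter the longest chain; symmetrically for single-Honest. Conditional on the resolving round being single (which it is by definition of the episode ending), the probability that it is single-Attacker is $\alpha'/(1-\beta')$. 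By linearity of expectation the fraction of Pair rounds whose attacker block lands in the longest chain is therefore also $\alpha'/(1-\beta')$.

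Combining, the per-round rate at which attacker blocks enter the longest chain is
\[
\alpha' \;+\; \beta'\cdot \frac{\alpha'}{1-\beta'},
\]
and the per-round total rate is $1$, yielding the attacker's honest reward. The final step is a direct algebraic manipulation using the identity $\alpha' = \alpha(1+\beta')-\beta'$ established at the start of the appendix to rewrite this expression in the claimed closed form $x^*(\alpha,\beta')$.

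The main obstacle I anticipate is carefully justifying that the Pair-episode race is genuinely memoryless with respect to intermediate Pair rounds --- i.e., that simultaneous mining events inside an ongoing episode merely prolong it (both forks grow in lockstep) without shifting the eventual winner, so that the episode's winner really is determined by the single identity of the first post-episode resolving round. A secondary subtlety is verifying the tiebreaking conventions do not create any additional losses for the attacker beyond what the race analysis captures (Attacker-while-honest tiebreaks in own favor and Honest tiebreaks against Attacker, so no block is ``double orphaned''). Once these points are nailed down, the renewal computation and the algebra to match the closed form $x^*$ are both routine.
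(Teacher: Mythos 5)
Your approach is the same one the paper takes: decompose the heights into maximal runs of consecutive Pairs, observe that every Pair in a run is decided by the identity of the first subsequent Single round, and compute that this resolving round is Attacker's with probability $\alpha'/(1-\beta')$. Up to that point your argument is sound (and your per-round accounting --- one longest-chain block per round, rounds splitting $\alpha'$ / $1-\alpha'-\beta'$ / $\beta'$ --- matches Lemma~\ref{lem:pair}).

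The gap is the step you wave off as ``routine algebra'': it does not go through. Your derivation yields a reward of $\alpha' + \beta'\cdot\frac{\alpha'}{1-\beta'} = \frac{\alpha'}{1-\beta'} = \frac{\alpha+\alpha\beta'-\beta'}{1-\beta'}$, whereas the stated closed form is $x^*(\alpha,\beta') = \frac{\alpha-\beta'+\frac{\alpha'}{1-\alpha'-\beta'}\cdot\beta'}{1-\beta'}$. These agree only if $\alpha\beta' = \frac{\alpha'}{1-\alpha'-\beta'}\cdot\beta'$, which fails for any $\beta'>0$ (e.g.\ $\alpha'=0.3$, $\beta'=0.1$ gives $\alpha\approx 0.364$ but $\frac{\alpha'}{1-\alpha'-\beta'}=0.5$). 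So you cannot ``rewrite'' your expression as $x^*$, and as written your proof asserts a false identity and does not establish the lemma in the stated form. To be fair, the mismatch appears to originate in the paper rather than in your race analysis: the paper's own proof first computes the resolving Single to be Attacker's with probability $\frac{\alpha'}{1-\beta'}$ and then concludes that each Pair is won with probability $\frac{\alpha'}{1-\alpha'-\beta'}$, which is the corresponding \emph{odds} rather than the probability; note also that your value $\frac{\alpha'}{1-\beta'}$ is the one for which Attacker's and Honest's shares of the longest chain sum to $1$. You should either flag this discrepancy explicitly and defend $\frac{\alpha'}{1-\beta'}$ as the correct honest reward, or reproduce the paper's (questionable) substitution into Lemma~\ref{lem:pair} --- but you cannot claim both your per-Pair win probability and the closed form $x^*$ simultaneously.
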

\begin{proof}
Observe that when the attacker is honest, the only Pairs that occur are natural. Every Pair will be followed by some number of other Pairs (possibly zero), and then a Single. Whoever wins this Single will determine who wins the original Pair. Conditioned on being a Single, it will be created by honest with probability $\frac{1-\alpha'-\beta'}{1-\beta'}$ and attacker with probability $\frac{\alpha'}{1-\beta'}$. Therefore, each Pair is won by the attacker with probability $\frac{\alpha'}{1-\alpha'-\beta'}$. Lemma~\ref{lem:pair} then suffices to conclude the lemma statement.
\end{proof}

\begin{proposition}\label{prop:maingeneral} In the SP-General model where pairs naturally occur at rate $\beta'$, the attacker has an $\alpha$ fraction of the hashrate, and all honest nodes tiebreak against the attacker, a sufficient condition for our specific Selfish Mining strategy with target Pair-rate $\beta$ to be strictly profitable compared to being honest is:
\begin{align*}
&\Bigg(\frac{(2-\beta')\cdot (1-\alpha'-\beta')\cdot \alpha'}{(1-\beta')^2}+2\beta' + \frac{(2+\beta'-\alpha')\cdot \alpha'}{1-\alpha'}-1\Bigg)\\
&-\Bigg(\frac{\alpha' - (\alpha')^2}{1-\alpha'-\beta'}\Bigg)\cdot \Bigg(\frac{(2-\beta')\cdot (1-\alpha'-\beta')}{1-\beta'} + 2\beta' + \frac{(2+\beta'-\alpha')\cdot \alpha'}{(1-\alpha')}-1 \Bigg) \geq 0
\end{align*}
\end{proposition}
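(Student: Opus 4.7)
The plan is to apply Lemma~\ref{lem:pair} directly: in the two-player SP-General model, any strategy that eventually broadcasts all blocks earns reward $\alpha - (1-\alpha-\delta)\beta$, where $\beta$ is the induced pair-rate and $\delta$ is the fraction of pair heights won by the attacker. Lemma~\ref{lem:undetectability-general} guarantees that our strategy induces pair-rate exactly $\beta$, so strict profitability against the honest baseline reduces to the algebraic inequality $\alpha - (1-\alpha-\delta)\beta > x^*(\alpha,\beta')$. The remainder of the proof lower-bounds $\delta$, chooses $\beta$ within the range allowed by Lemma~\ref{lem:validity-general}, and rearranges into the $A - B\cdot C \geq 0$ form stated (writing $A$, $B$, $C$ for the three factors in the proposition).

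To lower-bound $\delta$, I would follow the decomposition used in Section~\ref{sec:main1}: partition pair heights into maximal pair-runs separated by singles, and stratify by run-length. Imitating Claim~\ref{claim:solopairs}, a pair-run of length $\geq 2$ is automatically won in full by the attacker, since a Chosen pair can only follow another Chosen pair when the attacker has already committed to a safe extending block; the only contested pairs are solo pairs of length $1$. For each solo pair I would case-split on (a) whether the pair is Forced or Chosen (Definition~\ref{def:labelling-general}) and (b) the underlying world from Definition~\ref{def:worlds}, and compute the win probability using the safety/pivotality bookkeeping that drives the proof of Lemma~\ref{lem:P-Q-values}. Forced solo pairs contribute the honest baseline rate $\delta^* = \alpha'/(1-\alpha'-\beta')$, while Chosen solo pairs contribute a strictly higher rate because the attacker owns a safe block at that height by construction. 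Averaging these contributions against the geometric run-length distribution (exactly as in the derivation immediately after Claim~\ref{claim:solopairs}) and feeding the result back into Lemma~\ref{lem:pair} produces a closed-form expression for the strategic reward.

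The main obstacle, flagged at the end of Section~\ref{sec:main1}, is the Bayesian bookkeeping: because our labelling strategy conditions only on the state sequence and not on the attacker's full information, the solo-pair win probability must be averaged against the posterior world distribution $x_h(s,w)$ of Definition~\ref{def:conditional-worlds}, and the attacker's hidden-lead distribution at each reset is correspondingly non-trivial. Once this averaging is handled, I expect the factor $\alpha'(1-\alpha')/(1-\alpha'-\beta')$ (which is $B$) to emerge naturally as the conditional probability that the first attacker block produced after a reset is pivotal --- the intrinsic risk of selfish-mining --- and factoring it out of the difference $(\text{strategic reward}) - x^*$ should split the residue into the two bracketed factors $A$ and $C$ of the statement, with the asymmetry between them corresponding precisely to whether the reset block contributes a safe block (weight $\alpha'$) or nothing at all. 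Verifying the algebraic match is routine but tedious; the heavy lifting is in justifying the world-averaging and confirming that the chain-pair contributions collapse cleanly enough that the final inequality assumes the displayed two-term-product form rather than a more complicated rational expression.
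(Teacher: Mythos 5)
Your plan diverges from the paper's proof and, more importantly, contains a step that fails in the SP-General model. The claim that every pair-run of length $\geq 2$ is won in full by the attacker is false once Forced Pairs exist. Consult Table~\ref{tab:P-Q}: after a Forced Pair (world $\mathbf{B_h}$) we have $P_{h+1}=0$, so the only way a pair-run continues past a Forced Pair is via another Forced Pair, i.e.\ a genuine natural fork in which both players broadcast simultaneously; and a Forced Pair can likewise follow a Chosen Pair once the honest chain has caught up. Since honest tiebreaks against the attacker, such a run is resolved by whoever finds the next Single, and the attacker wins it only with probability $\alpha'/(1-\beta')$ per resolution attempt --- not with certainty. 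So the analogue of the ``long runs are free'' step behind Claim~\ref{claim:solopairs} breaks, and your stratification by run length no longer collapses. A second structural problem: routing through Lemma~\ref{lem:pair} yields a profitability condition of the form $\alpha-(1-\alpha-\delta)\beta > x^*(\alpha,\beta')$, which depends on $\beta$, whereas the stated inequality is $\beta$-free; the paper notes explicitly that this independence is by design. Your reading of the factor $\frac{\alpha'-(\alpha')^2}{1-\alpha'-\beta'}$ as a pivotality probability is also a red herring --- it is simply $x^*(\alpha,\beta')$ rewritten over a common denominator.

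The paper sidesteps all of the Bayesian world-averaging you flag as the main obstacle. It invokes the $+1/-\lambda$ reward transformation of Sapirshtein et al.\ (Lemma~\ref{lem:sapirshtein}) with $\lambda=x^*(\alpha,\beta')$, which reduces strict profitability to showing that a single withholding episode (from the moment the attacker first hides a block until it again has no hidden blocks) earns expected transformed reward exceeding $1-x^*(\alpha,\beta')$. It then replaces the actual strategy by a dominated, $\beta$-independent ``Short Selfish Mining'' strategy that always cashes out its lead at the first opportunity, and computes that strategy's episode reward by a short case analysis over five geometric event sequences; summing and rearranging gives exactly the displayed two-factor inequality. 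To salvage your approach you would need to (i) correctly handle runs containing Forced Pairs, (ii) carry out the posterior averaging over worlds for solo pairs, and (iii) show the resulting $\beta$-dependent condition is implied by the stated one for all admissible $\beta\in[\beta',\alpha'^2/2]$ --- each of which is substantially harder than the episode-based computation the paper actually performs.
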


In particular, observe that substituting in $\beta'=0$ (implying $\alpha'=\alpha$) recovers Theorem~\ref{thm:main1}. Moreover, when $\alpha \geq 0.382$,\footnote{Note that $\alpha \geq 0.382$ does not immediately determine $\alpha'$, as $\alpha'$ is a function of both $\alpha$ and $\beta'$.} the inequality holds for all $\beta' \in [0,1]$.

Note also that the bound in Proposition~\ref{prop:maingeneral} does not depend on $\beta$. This is intended, as our analysis proceeds by lower bounding our reward (for any $\beta$) by that of an alternate strategy that is independent of $\beta$.

\begin{proof}[Proof of Proposition~\ref{prop:maingeneral}]

In Lemma~\ref{lem:sapirshtein}, via direct application of a result in \cite{SapirshteinSZ16}, we argue that the attacker's strategy outperforms honest if and only if it gets positive expected reward in a decision process that gets $+1$ for every attacker block in the longest chain and $-x^*(\alpha,\beta')$ for every block in the longest chain. We will not include the proof and refer the reader to \cite{SapirshteinSZ16} instead. Intuitively, Lemma~\ref{lem:sapirshtein} follows by observing that when the attacker has produced a $\delta$ fraction of $n$ blocks in the longest chain, the attacker loses $n\lambda$ and gains $n\delta$. This is positive if and only if $\delta > \lambda$.

\begin{lemma}[rephrased from {\cite[Corollary 10]{SapirshteinSZ16}}]\label{lem:sapirshtein}
 Consider rewarding the Attacker $+1$ for every block they find that becomes permanently in the longest chain, and $-\lambda$ for every block that is permanently in the longest chain.\footnote{This includes blocks produced by both Attacker and Honest.} A strategy wins a $\geq \lambda$ fraction of blocks in the longest chain if and only if it receives non-negative expected reward by this measure.
\end{lemma}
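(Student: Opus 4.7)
The plan is to unpack both sides of the ``if and only if'' as long-run averages of the same underlying counts and then observe that they are related by a one-line algebraic identity. For a given strategy, let $A_T$ denote the number of attacker-produced blocks among the first $T$ blocks that become permanently in the longest chain. The auxiliary-process reward accumulated over these $T$ blocks is exactly $A_T(1-\lambda) - (T-A_T)\lambda = A_T - \lambda T$. Dividing by $T$, the per-longest-chain-block expected reward tends to $\lim_{T\to\infty}\mathbb{E}[A_T]/T - \lambda$, which is non-negative iff $\lim_{T\to\infty}\mathbb{E}[A_T]/T \geq \lambda$; the latter is exactly the statement that the strategy wins at least a $\lambda$ fraction of longest-chain blocks. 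Both implications of the lemma follow from this identity immediately.

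With the equivalence reduced to the identity $R_T = A_T - \lambda T$, the only structural work is justifying (i) existence of $\lim A_T/T$ and (ii) that the auxiliary-process ``expected reward'' is well-defined and equals $\lim \mathbb{E}[A_T - \lambda T]/T$. For every strategy considered in this paper (classical Selfish Mining, Undetectable Selfish Mining in the SP-Simple model of Definition~\ref{def:labelling}, and its SP-General counterpart of Definition~\ref{def:labelling-general}), the attacker's relevant state is a finite or positive-recurrent Markov chain --- for instance the chain described in Table~\ref{tab:transitions} --- so both limits exist by the ergodic theorem and coincide with their almost-sure counterparts. This is enough for all applications of Lemma~\ref{lem:sapirshtein} in Appendix~\ref{app:general}, and in particular for Proposition~\ref{prop:maingeneral}.

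The main obstacle is generality: the informal remark that accompanies the lemma (``when the attacker has produced a $\delta$ fraction of $n$ blocks in the longest chain, the attacker loses $n\lambda$ and gains $n\delta$'') treats $n$ and $\delta$ as deterministic, but for an arbitrary strategy these are random processes whose ratios need not stabilize. The fully rigorous treatment --- which the paper defers to \cite{SapirshteinSZ16} --- recasts the comparison as a Markov Decision Process with average-reward objective and applies the standard Lagrangian correspondence between the constrained problem (maximize win-fraction subject to being $\geq \lambda$) and the unconstrained scalarized problem ($+1$ per attacker longest-chain block, $-\lambda$ per longest-chain block). It is this MDP/Lagrangian step, rather than the one-line identity above, that carries the true content of the ``rephrased'' lemma, and I would cite \cite{SapirshteinSZ16} for it rather than redo it from scratch.
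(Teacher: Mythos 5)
Your proposal is correct and takes essentially the same route as the paper: the paper likewise reduces the equivalence to the one-line identity that the reward accumulated over $T$ longest-chain blocks equals $A_T - \lambda T$ (``the attacker loses $n\lambda$ and gains $n\delta$''), and explicitly declines to prove the rigorous version, deferring to \cite{SapirshteinSZ16}. Your additional remarks on the existence of the long-run limits via ergodicity, and on the MDP/Lagrangian content of the cited result, are a faithful elaboration of the same argument rather than a different approach.
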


In particular, this implies that our proposed strategy is strictly profitable if and only if it has positive expected reward when $\lambda = x^*(\alpha,\beta')$.

The next corollary clarifies the situation in which Attacker's strategy deviates from honest mining.

\begin{corollary}
     The labeling strategy of Definition~\ref{def:labelling-general} outperforms honest if and only if its expected reward in $G$ exceeds $1-x^*(\alpha,\beta')$ \emph{after hiding its first block until it has no more hidden blocks}.
\end{corollary}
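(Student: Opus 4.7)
The plan is to invoke the just-cited Lemma~\ref{lem:sapirshtein} with $\lambda := x^*(\alpha,\beta')$, which reduces the claim to showing that our labeling strategy has strictly positive long-run per-round reward in the decision process $G$ iff the expected reward it accumulates during a single hiding episode exceeds $1-\lambda$. I would set up a regeneration/cycle decomposition: the natural regeneration event is ``Attacker has zero hidden blocks.'' Each renewal cycle consists of an honest interval (in which the labeling strategy, by construction, broadcasts every block exactly as an honest miner would, and thus is indistinguishable from honest play) followed by a hiding episode that terminates at the first subsequent return to zero hidden blocks.

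Next I would couple our labeling strategy with the all-honest strategy on the same stream of mining outcomes. By construction the two strategies take identical actions throughout every honest interval, so the coupled processes agree round-for-round until the labeling strategy first withholds a block; from that instant onward they diverge. Over the subsequent window of length $T_E$ (the duration of the hiding episode), our strategy earns the random quantity $R_E$ by definition, while the coupled honest strategy earns the $(1-\lambda)$ reward from immediately publishing the would-be-hidden block to the longest chain, plus whatever it collects over the remaining $T_E$ rounds of continued honest play. The key observation is that over any honest stretch the per-round expected reward in $G$ with $\lambda = x^*(\alpha,\beta')$ is exactly zero, since by the preceding lemma $x^*(\alpha,\beta')$ is precisely the long-run fraction of longest-chain blocks won by an honest attacker; invoking Wald's identity (using that $T_E$ is a stopping time measurable with respect to the hiding-episode mining outcomes, which are independent of everything before), the remaining $T_E$ rounds of honest play contribute zero to the coupled-honest reward in expectation.

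Combining the above, the per-cycle expected gain of the labeling strategy over honest is exactly $\mathbb{E}[R_E] - (1-\lambda)$; invoking the renewal-reward theorem together with the fact that honest's long-run rate in $G$ is zero, the labeling strategy's long-run per-round reward has the same sign as this per-cycle gain, yielding the stated equivalence. The main obstacle will be making the coupling/regeneration argument fully rigorous: specifically, verifying that the post-episode game state supports i.i.d.\ cycles under both coupled strategies (so that the renewal-reward theorem applies cleanly), and that Wald's identity genuinely applies to the post-divergence honest-reward stream even though $T_E$ depends on the labeling strategy's own internal coin flips. Both points reduce to the observation that ``zero hidden blocks'' is a bona fide Markov regeneration state, and that after the hiding episode concludes, the labeling strategy's future dynamics are distributionally identical to a fresh start.
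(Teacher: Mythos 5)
Your proposal is correct and follows essentially the same route as the paper: both couple the labeling strategy with honest play, regenerate at the ``zero hidden blocks'' state, and use that honest play has zero expected reward in $G$ when $\lambda=x^*(\alpha,\beta')$ (by Lemma~\ref{lem:sapirshtein}) to conclude the deviant must clear $1-x^*(\alpha,\beta')$ per hiding episode. The paper's version is terser --- it states the per-episode comparison directly without invoking Wald's identity or the renewal-reward theorem by name --- but the underlying decomposition is identical.
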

\begin{proof}
    Consider the case when there are no hidden blocks. When Honest finds a block, both the honest strategy and our strategy lose $x^*(\alpha,\beta')$. When Attacker finds a block, the honest strategy gains $1-x^*(\alpha,\beta')$, and we immediately return to a state where there are no hidden blocks. Our strategy sometimes does this as well (in which case it remains tied with the honest strategy). Or, it sometimes hides this block and continues being strategic until it eventually returns to a state with no hidden blocks. Conditioned on making this decision, it must gain reward at least $1-x^*(\alpha,\beta')$ in expectation to outperform the honest strategy (because it gets identical expected reward to the honest strategy in all other cases, and therefore must outperform it in this case in order to strictly profit).
\end{proof}

The next observation allows us to focus on the reward gained by a simpler strategy than ours. Specifically, we consider the following strategy.

\begin{definition}[Short Selfish Mining] The \emph{Short Selfish Mining} strategy does the following, after hiding its first block:
\begin{itemize}
\item If Honest finds the next block, followed by $k$ Natural Pairs (perhaps $k=0$), followed by Single-Attacker, immediately publish all hidden blocks.
\item If Honest finds the next block, followed by $k$ Natural Pairs (perhaps $k=0)$, followed by Single-Honest, accept the loss of the longest chain and return to the state with no hidden blocks.
\item If the next round is Natural Pair, immediately publish both blocks.
\item If the next round is Single-Attacker, followed by $k$ Single-Attackers (perhaps $k=0$), followed by a Natural Pair or Single-Honest \emph{immediately publish all blocks}. 
\end{itemize}
Observe that the difference to our strategy is the final bullet -- rather than make use of our lead to possibly get more blocks in the longest chain, we take a smaller victory immediately.
\end{definition}

Indeed, our strategy is identical in all but the final bullet. In the case of the final bullet, our strategy gets \emph{at least} as many blocks in the longest chain as Short Selfish Mining, and perhaps more, and therefore gets strictly more reward.\footnote{This claim holds when considering the fraction of the longest chain won by our strategy versus the simpler strategy, but this is not trivial to see. It is easy to see that our strategy achieves greater expected reward by the counting of~\cite{SapirshteinSZ16} -- our strategy simply adds more blocks to the longest chain before resetting.}

\begin{observation} conditioned on hiding its first block, our strategy strictly outperforms Short Selfish Mining.
\end{observation}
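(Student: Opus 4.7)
The plan is to couple our strategy with Short Selfish Mining on a shared sequence of block-creation events, then compare their rewards using the Sapirshtein-Sompolinsky-Zohar $+1/-\lambda$ scheme of Lemma~\ref{lem:sapirshtein} with $\lambda = x^*(\alpha, \beta')$. The preceding Corollary reduces strict outperformance of honest mining to strictly positive expected $+1/-\lambda$ reward, so it suffices to show our strategy beats SSM by a strictly positive expected amount in this scheme conditional on hiding the first block.

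First I would verify that in cases 1, 2, and 3 of the SSM definition, our strategy makes identical broadcasting decisions. In case 1, the Single-Attacker arriving after $k$ Natural Pairs causes the pivotal-broadcast rule of Lemma~\ref{lem:broadcast-general} to publish the entire hidden chain, matching SSM's publish-all action. Case 2 is immediate: once Honest's public chain has grown past Attacker's hidden length, both strategies concede under the same pivotal triggers. In case 3, a Natural Pair at the next height publishes the first hidden Attacker block immediately under both rules. Thus cases 1--3 contribute equal $+1/-\lambda$ reward on every coupled trajectory.

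Case 4 is where the strategies diverge and where the real work lies. Following the footnote's hint, the guiding intuition is that our strategy ``simply adds more blocks to the longest chain before resetting''. When SSM's case-4 trigger fires, our strategy may continue hiding for one or more additional rounds, each of which appends another block to the eventual longest chain before reaching a reset state matching SSM's. Conditional on being in this continuation, Attacker already holds a lead of at least one hidden block, and so the expected fraction of appended blocks that are Attacker-produced exceeds $\lambda$: each such block contributes $+1 - \lambda > 0$, while each appended Honest block contributes $-\lambda$, and the biased Attacker share due to the preexisting lead yields strictly positive net contribution in expectation.

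The main obstacle will be formalizing the bound ``the expected Attacker-share of the continuation's appended blocks exceeds $\lambda$''. My approach is to recognize the continuation as a Markov sub-process whose transitions are drawn from the same block-finding distribution underlying Proposition~\ref{prop:maingeneral} (with states tracked as in Appendix~\ref{app:exactreward}), but initialized from a state of Attacker advantage rather than the neutral start; by a coupling argument the expected per-transition $+1/-\lambda$ reward dominates that from the neutral start, which is non-negative under the hashrate assumption of the Proposition. Strict inequality then follows by isolating the positive-probability event that the round immediately after the SSM trigger is another Single-Attacker: this event appends a pure Attacker block with deterministic contribution $1-\lambda > 0$, providing the strict gap conditional on hiding the first block.
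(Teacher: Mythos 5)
Your skeleton is the same as the paper's: the paper likewise observes that the two strategies coincide on the first three bullets of Short Selfish Mining and differ only on the fourth, and then disposes of the fourth bullet in a single sentence (plus a footnote conceding the point ``is not trivial to see''), asserting that the continuation only adds attacker blocks to the longest chain before resetting. So you are not taking a different route; you are attempting to fill in the step the paper leaves as an assertion, and it is exactly there that your argument has two genuine gaps.

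First, circularity. Your bound on the continuation rests on the claim that the neutral-start process (the Markov chain of our strategy from a reset state) has non-negative expected $+1/-\lambda$ reward ``under the hashrate assumption of the Proposition.'' By Lemma~\ref{lem:sapirshtein} with $\lambda=x^*(\alpha,\beta')$, that non-negativity is precisely the statement that our strategy weakly outperforms honest mining, which is the conclusion this Observation (combined with the direct computation of Short Selfish Mining's reward) is being used to derive; it cannot be assumed inside the Observation's proof. Second, the continuation is not pure upside, so ``the attacker holds a lead, hence the attacker-share of appended blocks exceeds $\lambda$'' does not follow. Under Definition~\ref{def:labelling-general}, the topmost safe hidden block is labeled Pair with positive probability; if Honest then wins the ensuing race, that block is orphaned and Honest's competing blocks enter the longest chain --- exactly the lost-Solo-Pair event of Claim~\ref{claim:solopairs}. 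Short Selfish Mining banks that block with certainty by publishing immediately, so our strategy does \emph{not} pointwise place at least as many blocks in the longest chain, and whether it does so in expectation is the very same hide-versus-publish gamble taken on the first block. The non-circular repair is to note that every ``keep withholding'' decision in the continuation is isomorphic to the initial decision to hide the first block from a reset state, and to run the comparison as a recursion so that the single sufficient condition of Proposition~\ref{prop:maingeneral} covers all such decisions at once --- which also shows the Observation is really conditional on that hashrate condition. Your strictness argument inherits the same problem: under the coupling, the extra Single-Attacker block you isolate is also found by Short Selfish Mining in its fresh epoch, so it is not a deterministic $1-\lambda$ gain accruing to our strategy alone.
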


The plan from here is to analyze the expected reward gained by the short selfish mining strategy. In particular, we will show that it gets reward $> 1-x^*(\alpha,\beta')$.

Consider the case when the short selfish mining strategy has just hidden a block. Let's count blocks in the following manner:
\begin{itemize}
\item If the next round is Single-Honest, followed by $k$ Natural Pairs (perhaps $k=0$), followed by Single-Honest, then there are $k+2$ blocks entering the longest chain, none of which belong to the attacker. This contributes $-(k+2)x^*(\alpha,\beta')$, and occurs with probability $(1-\alpha'-\beta')^2\cdot (\beta')^k$.
\item If the next round is Single-Honest, followed by $k$ Natural pairs (perhaps $k=0$), followed by Single-Attacker, then there are $k+2$ blocks entering the longest chain, all of which belong to the attacker. This contributes $(k+2)-(k+2)x^*(\alpha,\beta')$, and occurs with probability $(1-\alpha'-\beta')\cdot (\beta')^k \cdot \alpha'$.
\item If the next round is Natural Pair, then there are $2$ blocks entering the longest chain, all of which belong to the attacker. This contributes $2-2x^*(\alpha,\beta')$, and occurs with probability $\beta'$.
\item If the next round is Single-Attacker, followed by $k$ Single-Attackers (perhaps $k=0$), followed by a Natural Pair, then there are $2+k+1$ blocks entering the longest chain, all of which belong to the attacker. This contributes $(k+3) - (k+3)x^*(\alpha,\beta')$, and occurs with probability $(\alpha')^{k+1}\cdot \beta'$.
\item If the next round is Single-Attacker, followed by $k$ Single-Attackers (perhaps $k=0$), followed by a Single-Honest, then there are $k+2$ blocks entering the longest chain, all of which belong to the attacker. This contributes $(k+2)-(k+2)x^*(\alpha,\beta')$, and occurs with probability $(\alpha')^{k+1}\cdot (1-\alpha'-\beta')$.
\end{itemize}

So, summing the contribution of all cases together, we get:

\begin{align*}
-&\sum_{k=0}^\infty (k+2)\cdot x^*(\alpha,\beta')\cdot(1-\alpha'-\beta')^2\cdot (\beta')^k\\
+&\sum_{k=0}^\infty \left(k+2 - (k+2)\cdot x^*(\alpha,\beta')\right)\cdot(1-\alpha'-\beta')\cdot (\beta')^k\cdot \alpha'\\
+&2\beta' - 2\cdot x^*(\alpha,\beta')\cdot \beta'\\
+&\sum_{k=0}^\infty \left(k+3 - (k+3)x^*(\alpha,\beta')\right)\cdot (\alpha')^{k+1}\cdot \beta'\\
+&\sum_{k=0}^\infty \left(k+2 - (k+2)x^*(\alpha,\beta')\right)\cdot (\alpha')^{k+1}\cdot (1-\alpha'-\beta')\\
=-& x^*(\alpha,\beta')\cdot(1-\alpha'-\beta')^2\cdot \frac{2-\beta'}{(1-\beta')^2}\\
+& \frac{2-\beta'}{(1-\beta')^2}\cdot (1-\alpha'-\beta')\cdot \alpha' - x^*(\alpha,\beta')\cdot \frac{2-\beta'}{(1-\beta')^2}\cdot (1-\alpha'-\beta')\cdot \alpha'\\
+&2\beta' - 2\cdot x^*(\alpha,\beta')\cdot \beta'\\
+&\frac{3-2\alpha'}{(1-\alpha')^2}\cdot \alpha' \cdot \beta' - x^*(\alpha,\beta') \cdot \frac{3-2\alpha'}{(1-\alpha')^2}\cdot \alpha' \cdot \beta'\\
+&\frac{2-\alpha'}{(1-\alpha')^2}\cdot \alpha' \cdot (1-\alpha'-\beta') - x^*(\alpha,\beta') \cdot \frac{2-\alpha'}{(1-\alpha')^2}\cdot \alpha' \cdot (1-\alpha'-\beta')\\
=~& \left( \frac{(2-\beta') \cdot (1-\alpha' - \beta')\cdot \alpha'}{(1-\beta')^2}+2\beta' + \frac{(3-2\alpha')\cdot \alpha' \cdot \beta' + (2-\alpha')\cdot \alpha' \cdot (1-\alpha'-\beta')}{(1-\alpha')^2}\right)\\
-&x^*(\alpha,\beta')\cdot \Bigg(\frac{(1-\alpha'-\beta')^2 \cdot (2-\beta') +(2-\beta')\cdot (1-\alpha'-\beta')\cdot \alpha')}{(1-\beta')^2} +2\beta' \\
&\qquad \qquad \qquad+\frac{(3-2\alpha')\cdot \alpha' \cdot \beta' + (2-\alpha')\cdot \alpha' \cdot (1-\alpha'-\beta')}{(1-\alpha')^2}\Bigg)\\
=~& \Bigg(\frac{(2-\beta')\cdot (1-\alpha'-\beta')\cdot \alpha'}{(1-\beta')^2}+2\beta' + \frac{(2+\beta'-\alpha')\cdot \alpha'}{1-\alpha'}\Bigg)\\
&-x^*(\alpha,\beta')\cdot \Bigg(\frac{(2-\beta')\cdot (1-\alpha'-\beta')}{1-\beta'} + 2\beta' + \frac{(2+\beta'-\alpha')\cdot \alpha'}{(1-\alpha')} \Bigg)
\end{align*}

Recall that our goal is to understand conditions on $\alpha, \beta'$ so that the above quantity exceeds $1-x^*(\alpha,\beta')$. That is, we know that the Short Selfish Mining strategy (with natural pairs at rate $\beta'$ and attacker hashrate $\alpha$) is strictly profitable compared to honest if and only if:

\begin{align*}
&\Bigg(\frac{(2-\beta')\cdot (1-\alpha'-\beta')\cdot \alpha'}{(1-\beta')^2}+2\beta' + \frac{(2+\beta'-\alpha')\cdot \alpha'}{1-\alpha'}\Bigg)\\
&-x^*(\alpha,\beta')\cdot \Bigg(\frac{(2-\beta')\cdot (1-\alpha'-\beta')}{1-\beta'} + 2\beta' + \frac{(2+\beta'-\alpha')\cdot \alpha'}{(1-\alpha')} \Bigg) \geq 1-x^*(\alpha,\beta')\\
\Leftrightarrow &\Bigg(\frac{(2-\beta')\cdot (1-\alpha'-\beta')\cdot \alpha'}{(1-\beta')^2}+2\beta' + \frac{(2+\beta'-\alpha')\cdot \alpha'}{1-\alpha'}-1\Bigg)\\
&-x^*(\alpha,\beta')\cdot \Bigg(\frac{(2-\beta')\cdot (1-\alpha'-\beta')}{1-\beta'} + 2\beta' + \frac{(2+\beta'-\alpha')\cdot \alpha'}{(1-\alpha')}-1 \Bigg) \geq 0.
\end{align*}

So, the final step from here is to rearrange the above to find a cleaner condition on $\alpha,\beta'$.

\begin{align*}
&\Bigg(\frac{(2-\beta')\cdot (1-\alpha'-\beta')\cdot \alpha'}{(1-\beta')^2}+2\beta' + \frac{(2+\beta'-\alpha')\cdot \alpha'}{1-\alpha'}-1\Bigg)\\
&-x^*(\alpha,\beta')\cdot \Bigg(\frac{(2-\beta')\cdot (1-\alpha'-\beta')}{1-\beta'} + 2\beta' + \frac{(2+\beta'-\alpha')\cdot \alpha'}{(1-\alpha')}-1 \Bigg) \geq 0\\
\Leftrightarrow &\Bigg(\frac{(2-\beta')\cdot (1-\alpha'-\beta')\cdot \alpha'}{(1-\beta')^2}+2\beta' + \frac{(2+\beta'-\alpha')\cdot \alpha'}{1-\alpha'}-1\Bigg)\\
&-\Bigg(\alpha + \alpha \beta' - \beta' 
+ \frac{\alpha' \cdot \beta'}{1-\alpha'-\beta'}\Bigg)\cdot \Bigg(\frac{(2-\beta')\cdot (1-\alpha'-\beta')}{1-\beta'} + 2\beta' + \frac{(2+\beta'-\alpha')\cdot \alpha'}{(1-\alpha')}-1 \Bigg) \geq 0\\
\Leftrightarrow&\Bigg(\frac{(2-\beta')\cdot (1-\alpha'-\beta')\cdot \alpha'}{(1-\beta')^2}+2\beta' + \frac{(2+\beta'-\alpha')\cdot \alpha'}{1-\alpha'}-1\Bigg)\\
&-\Bigg(\frac{\alpha'+\beta'}{1+\beta'}\cdot (1+\beta')-\beta'
+ \frac{\alpha' \cdot \beta'}{1-\alpha'-\beta'}\Bigg)\cdot \Bigg(\frac{(2-\beta')\cdot (1-\alpha'-\beta')}{1-\beta'} + 2\beta' + \frac{(2+\beta'-\alpha')\cdot \alpha'}{(1-\alpha')}-1 \Bigg) \geq 0\\
\Leftrightarrow&\Bigg(\frac{(2-\beta')\cdot (1-\alpha'-\beta')\cdot \alpha'}{(1-\beta')^2}+2\beta' + \frac{(2+\beta'-\alpha')\cdot \alpha'}{1-\alpha'}-1\Bigg)\\
&-\Bigg(\alpha'
+ \frac{\alpha' \cdot \beta'}{1-\alpha'-\beta'}\Bigg)\cdot \Bigg(\frac{(2-\beta')\cdot (1-\alpha'-\beta')}{1-\beta'} + 2\beta' + \frac{(2+\beta'-\alpha')\cdot \alpha'}{(1-\alpha')}-1 \Bigg) \geq 0\\
\Leftrightarrow&\Bigg(\frac{(2-\beta')\cdot (1-\alpha'-\beta')\cdot \alpha'}{(1-\beta')^2}+2\beta' + \frac{(2+\beta'-\alpha')\cdot \alpha'}{1-\alpha'}-1\Bigg)\\
&-\Bigg(\frac{\alpha' - (\alpha')^2}{1-\alpha'-\beta'}\Bigg)\cdot \Bigg(\frac{(2-\beta')\cdot (1-\alpha'-\beta')}{1-\beta'} + 2\beta' + \frac{(2+\beta'-\alpha')\cdot \alpha'}{(1-\alpha')}-1 \Bigg) \geq 0.
\end{align*}

\end{proof}
\newpage
\section{Omitted Proofs}\label{app:omitted}

\begin{proof}[Proof of Proposition~\ref{prop:SPOK}]
We will show how to couple an \nmgl\ with hashrates $\vec{\alpha}$ with one  with hashrates $\langle\alpha_1,\frac{1-\prod_{i=2}^n (1-\alpha_i \cdot \ell))}{\ell} \rangle$. Observe that in the first case, the probability that only Miner $1$ produces a block is $\alpha':=\frac{\alpha_1\cdot \ell \cdot \prod_{i=2}^n (1-\alpha_i\cdot \ell)}{1-\prod_{i=1}^n (1-\alpha_i\cdot \ell)}$, the probability that both Miner $1$ and a Miner $>1$ produces a block is $\beta':=\frac{\alpha_1\cdot \ell \cdot (1-\prod_{i=2}^n (1-\alpha_i\cdot \ell))}{1-\prod_{i=1}^n (1-\alpha_i\cdot \ell)}$, and the probability that only Miners $>1$ produce a block is $\frac{(1-a_1\cdot \ell)\cdot (1-\prod_{i=2}^n(1-a_i \cdot \ell))}{1-\prod_{i=1}^n (1-\alpha_i\cdot \ell)}=\frac{1-\prod_{i=1}^n (1-a_i\cdot \ell) }{1-a_i\cdot \ell-\prod_{i=1}^n (1-\alpha_i\cdot \ell)} = 1-\frac{a_i\cdot \ell}{1-\prod_{i=1}^n (1-a_i\cdot \ell) } = 1-\alpha'-\beta'$.

Consider now \nmgl\ with hashrates $\langle \alpha_1,\frac{1-\prod_{i=2}^n (1-\alpha_i \cdot \ell))}{\ell}\rangle$. Then again we have that the probability that only Miner $1$ produces a block is $\alpha':=\frac{\alpha_1\cdot \ell \cdot \prod_{i=2}^n (1-\alpha_i\cdot \ell)}{1-\prod_{i=1}^n (1-\alpha_i\cdot \ell)}$, the probability that both Miner $1$ and a Miner $>1$ produces a block is $\beta':=\frac{\alpha_1\cdot \ell \cdot (1-\prod_{i=2}^n (1-\alpha_i\cdot \ell))}{1-\prod_{i=1}^n (1-\alpha_i\cdot \ell)}$, and the probability that only Miners $>1$ produce a block is $\frac{(1-a_1\cdot \ell)\cdot (1-\prod_{i=2}^n(1-a_i \cdot \ell))}{1-\prod_{i=1}^n (1-\alpha_i\cdot \ell)}=\frac{1-\prod_{i=1}^n (1-a_i\cdot \ell) }{1-a_i\cdot \ell-\prod_{i=1}^n (1-\alpha_i\cdot \ell)} = 1-\frac{a_i\cdot \ell}{1-\prod_{i=1}^n (1-a_i\cdot \ell) } = 1-\alpha'-\beta'$.

Therefore, the games can be coupled so that in every round whether just Miner $1$, Miner $1$ and other miners, or just other miners produce a block is identical. Because $s$ is SP-Simple, it will take the same action across both coupled games. Because $s'$ is a longest-chain protocol that tiebreaks lexicographically or reverse lexicographically, it will also take the same action in both games (specifically, it will tiebreak for or against $1$ the same way in both games). This immediately establishes that the reward of Miner $1$ is the same in both games.

To see that statistical undetectability translates between the two games, observe that statistical undetectability exactly states that \nmgl\ with strategies $s, \vec{s}_{-i}$ can be coupled with \nmglp\ with a longest-chain strategy and $\vec{s}_{-i}$ and hashrates $\vec{\alpha}$. By the work above, this latter game can be coupled with \nmglp\ with a longest-chain strategy and $s'$ and hashrates $\langle \alpha_1,\frac{1-\prod_{i=2}^n (1-\alpha_i \cdot \ell))}{\ell}\rangle$ (because all longest-chain strategies are SP-Simple). Therefore, if either of the two undetectability claims hold, all four games can be coupled as desired (implying that the other undetectability claim holds as well).
\end{proof}

\end{document}